\definecolor{azure(colorwheel)}{rgb}{0.0, 0.5, 1.0}
\definecolor{frenchblue}{rgb}{0.0, 0.45, 0.73}
\definecolor{bittersweet}{rgb}{1.0, 0.44, 0.37}
\definecolor{green(pigment)}{rgb}{0.0, 0.65, 0.31}
\definecolor{navyblue}{rgb}{0.0, 0.0, 0.5}
\definecolor{purple(html/css)}{rgb}{0.5, 0.0, 0.5}
\newcommand{\rmnum}[1]{\romannumeral #1}
\newcommand{\Rmnum}[1]{\expandafter\@slowromancap\romannumeral #1@}
\newcolumntype{C}[1]{>{\centering\arraybackslash}m{#1}}
\newtheorem{definition}{Definition}[section]
\newtheorem{lemma}{Lemma}[section]
\newtheorem{theorem}{Theorem}[section]
\newtheorem{property}{Property}%[section]
\newtheorem{assumption}{Assumption}%[section]
\newtheorem{problem}{Problem}%[section]
\newtheorem{observation}{Observation}[section]
\begin{document}
\title{Performance Analysis of QoS-Differentiated Pricing in Cloud Computing: An Analytical Approach}
\author{Xiaohu~Wu~Francesco~De~Pellegrini%}% <-this % stops a space
\IEEEcompsocitemizethanks{\IEEEcompsocthanksitem The authors are with Fondazione Bruno Kessler (FBK), 38123 Povo, Trento, Italy.\protect E-mail: xiaohuwu@fbk.eu, fdepellegrini@fbk.eu}% <-this % stops an unwanted space
\thanks{Manuscript received xxx, 2018; revised yyy, 2018.}
}
% The paper headers
\markboth{JOURNAL OF LATEX CLASS FILES,~Vol.~X, No.~Y, August~2018}%
{Wu \MakeLowercase{\textit{et al.}}: Performance Analysis of QoS-Differentiated Posted Pricing in Cloud Computing}

% If you want to put a publisher's ID mark on the page you can do it like
% this:
%\IEEEpubid{0000--0000/00\$00.00~\copyright~2015 IEEE}
% or like this to get the Computer Society new two part style.
%\IEEEpubid{\makebox[\columnwidth]{\hfill 0000--0000/00/\$00.00~\copyright~2015 IEEE}%
%\hspace{\columnsep}\makebox[\columnwidth]{Published by the IEEE Computer Society\hfill}}
% Remember, if you use this you must call \IEEEpubidadjcol in the second
% column for its text to clear the IEEEpubid mark (Computer Society jorunal
% papers don't need this extra clearance.)

% use for special paper notices
%\IEEEspecialpapernotice{(Invited Paper)}

% for Computer Society papers, we must declare the abstract and index terms
% PRIOR to the title within the \IEEEtitleabstractindextext IEEEtran
% command as these need to go into the title area created by \maketitle.
% As a general rule, do not put math, special symbols or citations
% in the abstract or keywords.
\IEEEtitleabstractindextext{%
\begin{abstract}
A fundamental goal in the design of IaaS service is to enable both user-friendly and cost-effective service access, while attaining high resource efficiency for revenue maximization. QoS differentiation is an important lens to achieve this design goal. In this paper, we propose the first analytical QoS-differentiated resource management and pricing architecture in the cloud computing context; here, a cloud service provider (CSP) offers a portfolio of SLAs. In order to maximize the CSP's revenue, we address two technical questions: ({\romannumeral 1}) how to set the SLA prices so as to direct users to the SLAs best fitting their needs, and, ({\romannumeral 2}) determining how many servers should be assigned to each SLA, and which users and how many of their jobs are admitted to be served. We propose optimal schemes to {\em jointly} determine SLA-based prices and perform capacity planning in polynomial time. Our pricing model retains high usability at the customer's end. Compared with standard usage-based pricing schemes, numerical results show that the proposed scheme can improve the revenue by up to a five-fold increase.
\end{abstract}

%Here, a cloud provider offers a portfolio of SLAs; the SLA prices are posted to users in advance, and the higher the QoS level, the higher the price. As a result, users can directly choose their preferred SLAs with stable access to computing services, in contrast to many dynamic pricing mechanisms.

% Note that keywords are not normally used for peerreview papers.
\begin{IEEEkeywords}
Cloud computing, QoS differentiation, posted prices, revenue management.
\end{IEEEkeywords}}

% make the title area
\maketitle

\section{Introduction}%\label{sec.motivation}
%%%%%%%%%%%%%%%%%%%%%%%%%%%%%%%%%%%%%%%%%%%%%%%%%%%%%%%%%%%%
%%%%%%%%%%%%%%%%%%%%%%%%%%%%%%%%%%%%%%%%%%%sec.motivation%%%%%%%%%%%%%%%%%%%%%%%%%%%%%%%%%%%%%%%%%%%%%%%%%%%%%%%%%

%%%%%%%%%%%%%%%%%%%%%%%%%%%%%%%%%%%%%%%%%%%%%%%%%%%%%%%%%%%%
\subsection{Motivation}\label{sec.motivation}
%%%%%%%%%%%%%%%%%%%%%%%%%%%%%%%%%%%%%%%%%%%%%%%%%%%%%%%%%%%%

The worldwide cloud Infrastructure-as-a-Service (IaaS) market is projected to grow to \$71.6 billion in 2020 from \$25.3 billion in 2016~\cite{gartner-rp-1}, and is attracting all users with different requirements and budgets to run their applications on cloud infrastructures. Here, IaaS is used as a means of delivering value to users by facilitating their access to computing resources with no need to own and maintain a whole infrastructure and users only need to pay for the time duration in which computing infrastructures are consumed. The design of IaaS obeys fundamental requirements in service design~\cite{service-design}: plan and organize people, infrastructure, communication and material components of the service, aiming at excellence along three dimensions, namely, {\em the service quality}, {\em the system's efficiency}, and {\em the interaction between the service provider and its users}. This requires a joint effort to understand both the needs of customers and implement efficient resource management.

\vspace{0.071em}\noindent \textbf{\rmnum{1}) Interaction.} In the context of IaaS, a pricing model defines the interaction process between a Cloud Service Provider (CSP) and its users and the particular conditions under which computing services are offered to users; from a user's viewpoint, it much determines the IaaS usability \cite{usability}. There is a family of theoretical models relying on auction theory, and an example is the spot instances in Amazon EC2, whose prices vary over time unpredictably \cite{Ben-Yehuda}. Each user bids a price for spot instances, but can get them only if the spot price is below the bid price \cite{amazon-pricing}; here, users may lose their spot instances suddenly and are charged according to the uncertain spot prices in the period of use.

But, most users may prefer to have exact knowledge of the actual price at the time of service purchase. For them, to act as price-takers is the best option in order to consume their preferred service: they can immediately understand the price-setting method, participate in with a cost-optimal strategy, and know the paid price in advance. By design, the posted pricing method has such desirable payment properties, which have motivated a few studies~\cite{Zhang17a,Einav16c}. For instance, in the Google cloud, the price is posted to users in advance, and each user pays the price times the amount of usage, referred to as \textit{usage-based pricing} \cite{google-pricing}. So, in this paper we will investigate the cloud resource provisioning problem in the posted pricing context.
%,Feldman15a,Rubinstein16a

% Compared with many auction-based methods, stable access to computing services is guaranteed once a user's jobs are accepted to be served and he pays the amount of usage times the price, referred to as \textit{usage-based pricing} \cite{google-pricing}. Based on such considerations, in this paper we will investigate the cloud resource provisioning problem in the posted pricing context.

%\fdp{But, most cloud users may prefer to have exact knowledge of the actual price at the time of service purchase. For them, to act as price-takers is the best option in order to consume their preferred service. Indeed, by immediately understanding the price-setting method permits customers to participate in cost-optimally. By design, the posted pricing method has precisely such desirable payment properties: this has motivated several related research studies~\cite{Zhang17a,Einav16c,Feldman15a,Rubinstein16a}. For instance, in the Google cloud platform, the price is published/posted to users in advance. Compared with existing auction-based pricing methods, stable access to computing services is guaranteed only if a user is willing to pay the posted (fixed) price \cite{google-pricing}. Based on such considerations, in this paper we discuss extensively the application of posted pricing to cloud computing.}

\begin{figure}[h]
  \centering
  \includegraphics[width=3.3in]{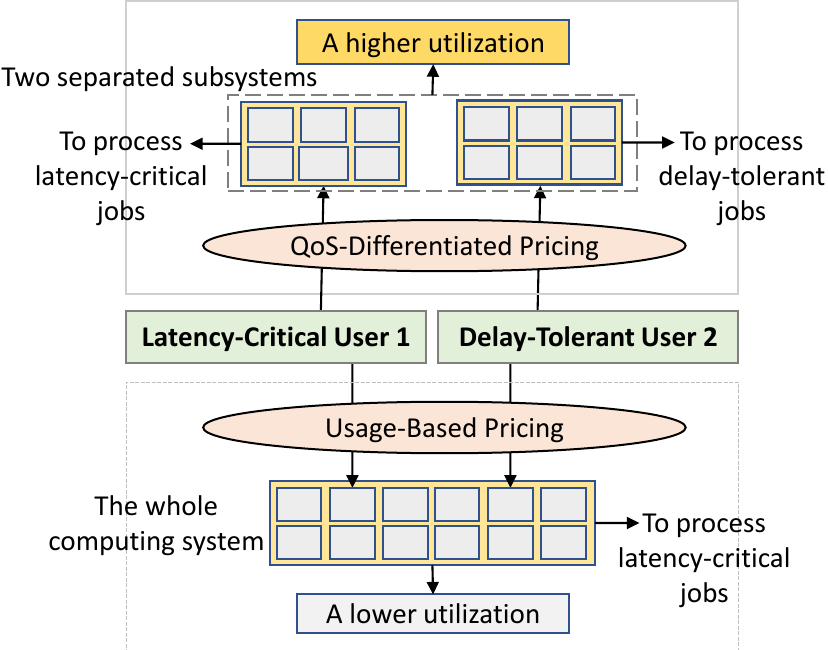}%\\
  \caption{A main motivation of this paper: QoS-differentiated pricing incentivizes user 2 to express its demand of computing resource as delay-tolerant jobs, whereas it is not willing to do so under usage-based pricing.}\label{Fig.4}
\end{figure}

\vspace{0.093em}\noindent \textbf{\rmnum{2}) Resource Efficiency.} Once users' jobs are accepted under a certain pricing mechanism, a scheduling problem appears: it concerns the computing system, and involves dispatching jobs to servers for processing, aiming to maximize resource efficiency. From a queuing perspective, the system's utilization is heavily constrained by the characteristics of jobs being served and could increase as their tolerance to latency is enlarged \cite{Zheng16a,Daigle05,Rasley16a}. Pricing design is an appropriate tool to incentivize users to express their job's characteristics in a way to improve resource efficiency. In the current usage-based pricing solutions, users may prefer to process their jobs immediately upon arrival and not be willing to tolerate latency margins. This happens even when jobs are actually delay-tolerant. In fact, current models do not provide any incentives for elasticity, e.g. discounts for tolerance on task completion delays. In fact, latency-critical and delay-tolerant jobs coexist in the cloud. For instance, cloud jobs from large scale web applications require low response time. Other jobs such as big data analytics or Google crawling data processing have batch processing nature and can tolerate moderate completion delays.

%\fdp{Once users's jobs are accepted, a scheduling problem appears: it concerns the computing system, and involves dispatching such jobs to servers for processing}, aiming to maximize the resource efficiency. From a queuing perspective, the system utilization is heavily constrained by the characteristics of jobs being served (e.g., their sensitivity to latency) \cite{Zheng16a,Daigle05,Rasley16a,Baruah97a}. \fdp{Pricing design is the appropriate tool} to incentivize users to express their job's characteristics in a way to improve resource efficiency. In the current posted pricing solutions, users may prefer to process their jobs immediately upon arrival \fdp{and they may not be} willing to tolerate latency margins. This happens even when jobs are actually delay-tolerant. In fact, current models do not provide any incentives for elasticity, e.g. discounts for tolerance on task completion delays. \fdp{In fact,} latency-critical and delay-tolerant jobs coexist in the cloud. \fdp{For instance,} cloud jobs from large scale web applications require low response time. Other jobs, such as big data analytics or Google crawling data processing have batch processing nature and \fdp{actually tolerate moderate completion delays.}

\vspace{0.1em}\noindent{\bf The proposed QoS-differentiated architecture.} To sum up, users of different sources are diverse in terms of their latency requirements when processing jobs. A basic resource management architecture arises in the era of cloud computing where a computing system is shared among users: the whole system is partitioned into several subsystems and each subsystem is used to process jobs with similar latency requirements; in this paper, each subsystem guarantees a finite delay in the completion of jobs and users can choose one of the subsystems that best fits their needs.

For example, one subsystem could be used to process latency-critical jobs, whereas, by offering discounts to users, the other subsystem is to process delay-tolerant jobs, as illustrated in Fig.~\ref{Fig.4}. Let us define $m_1$ ($m_2$) the number of servers in the first (second) subsystem and $r_1$ ($r_2$) the server utilization fraction. Then, the average utilization of the $m_{1}+m_{2}$ servers is $\left( m_{1}\cdot r_{1} + m_{2}\cdot r_{2} \right)/ \left( m_{1} + m_{2} \right)$. However, with no partitioning and no incentive to delay tolerance, utilization only is $r_{1}$ because all users prefer to complete their jobs immediately. With a more intuitive example: if $m_{1}=m_{2}$, $r_{1}=0.1$, and $r_{2}=0.5$, the partitioned system has a utilization up to $0.3$, three times the system's utilization without partitioning.

Several questions come with this architecture. What a CSP concerns most is often the revenue, rather than the utilization alone. However, lower (unit) prices are associated with delay-tolerant jobs in return for the increased utilization; so, does the increased utilization achieved by such a architecture must mean an increase in the CSP's revenue, in contrast to a purely usage-based pricing? Here, the revenue obtained from every server in a unit of time is the product of its utilization and the associated price, where the utilization represents the amount of workload processed in a unit of time. Answering the above question involves optimizing two basic aspects of this architecture: (\rmnum{1}) how much discount should be offered in order for users to submit their jobs as delay-tolerant ones, and (\rmnum{2}) how many servers should be assigned to each subsystem; the latter determines the total amount of jobs it can process under some latency requirement. In this paper, we consider the case where a CSP holds a fixed number of homogeneous servers.

Last, the heterogeneity of users in latency requirements and the potential power of such heterogeneity in resource management entails the establishment of a fundamental QoS-differentiated resource management and pricing architecture. This can be better perceived by conversely understanding why it does not arise in traditional computer systems. Traditionally, they are privately possessed by a user and used for some special purpose, e.g., Google crawling data processing; the processed jobs are possibly highly similar in terms of latency requirements. Even if a user needs to process jobs with different latency requirements, it does not need external incentives and will automatically utilize the potential delay-tolerant nature to improve resource utilization; here, a user usually manages to purchase and possess enough servers to satisfy all its computing need. However, in cloud computing, the CSP is a seller who has a shared system to provide computing services; the pricing aspect is needed to price the shared resource differently among users, providing necessary incentives for users to express their diversity in latency requirements, and the CSP also needs to determine which subset of users are worth serving, subject to the capacity constraint.
%Under a fixed number of servers that it possesses to maximize the revenue, the CSP also has the need to determine which users and how many of their jobs are worth serving and under what latency requirements their jobs should be processed.

%In such context, QoS-differentiation is an important lens to investigate the potential architecture for efficiently allocating resources to users. Along this line, we propose to partition the computing system into several subsystems, as is illustrated in Fig~\ref{Fig.4}; each subsystem will be used for jobs with similar latency requirements. For example, one subsystem could be used to process latency-critical jobs, whereas, by offering discounts to users, the other subsystem is to process delay-tolerant jobs. Let us define $m_1$ ($m_2$) the number of servers in the first (second) subsystem and $r_1$ ($r_2$) the server utilization fraction. Then, the average utilization of the $m_{1}+m_{2}$ servers is $\left( m_{1}\cdot r_{1} + m_{2}\cdot r_{2} \right)/ \left( m_{1} + m_{2} \right)$. However, with no partitioning and no incentive to delay tolerance, utilization only is $r_{1}$ because all users prefer to complete their jobs immediately. With a more intuitive example: if $m_{1}=m_{2}$, $r_{1}=0.1$, and $r_{2}=0.5$, the partitioned system has a utilization up to $0.3$, three times the system's utilization without partitioning.

 %At the same time, from the CSP's perspective, it attains higher resource efficiency.

%%%%%%%%%%%%%%%%%%%%%%%%%%%%%%%%%%%%%%%%%%%%%%%%%%%%%%%%%%%%%%%%%%%%%%
\subsection{Main Results}
%%%%%%%%%%%%%%%%%%%%%%%%%%%%%%%%%%%%%%%%%%%%%%%%%%%%%%%%%%%%%%%%%%%%%%

The main aim of this paper is to demonstrate the potential of such a QoS-differentiated architecture via a standard analytical approach to the analysis of its performance \cite{LeBoudec10a}.
%The proposed architecture can, to a reasonable extent, combine a user-friendly mechanism and a cost-effective implementation at the user side.
The contributions of this paper are summarized as follows.

\vspace{0.2em}\noindent\textbf{(\rmnum{1})} \textbf{Model Overview} (Section~\ref{sec.model}).\enskip Formally, in the proposed pricing model, a CSP posts to users a menu of $L$ SLAs. The $l$-th SLA consists of a response time $\varphi_{l}$ and its price $\theta_{l}$. The larger the response time, the lower the price. If user $i$ chooses the $l$-th SLA, the CSP guarantees that, upon arrival of a job $j$ of his, such job has maximum waiting time $\varphi_{l}$ and then begins being processed until the completion. The price of processing a unit of workload under SLA $l$ is denoted $\theta_{l}$. Here, the model is user-friendly and of simplicity from a user's perspective: it simply selects a SLA that best fits its need, maximizing its utility minus the price, and its cost (the price times the amount of processed workload) is predictable; it may also be more cost-effective since some users now have opportunities to take advantage of their delay-tolerant nature in return for a degraded price.

%can, to a reasonable extent, combine a user-friendly mechanism and a cost-effective implementation at the user side.

%an important strength of the proposed model lies in its simplicity from a user's point of view: users could easily understand the pricing model and simply select the SLA that best satisfies their needs. As a consequence, diverse users are naturally directed to their preferable SLAs. Due to the advantage brought about when processing delay-tolerant job requests, it is possible to offer those at lower price engendering at the same time higher system utilization.

%of processing a unit of workload of $j$
%\fdp{In fact, dispatching numerous jobs simultaneously poses major scalability requirements to IaaS services \cite{Ousterhout13a}. We analyze three policies} in this paper. The {\em Random} policy assigns each arriving job to a server chosen randomly. \fdp{Under the} {\em Power of Two Choice} (PTC) policy, each job randomly probes two servers and is dispatched to the server with fewer queued jobs. In the {\em Join-the-Idle-Queue} (JIQ) policy, there are multiple dispatchers: each server randomly chooses a dispatcher to signal when idle; each arriving job is randomly assigned to a dispatcher that will assigns it to \fdp{a server which is recorded idle.} These policies have been used or mentioned to be promising in existing cloud resource management literature \cite{Zheng16a,Jennings15a}.

The pricing model is an interface between the CSP and its users, determining the acceptance of arriving jobs, and is further linked to an internal resource management scheme to process jobs. We partition the whole computing system into multiple subsystems, each of them serving a specific SLA and its jobs. At each subsystem, a load balancing policy is needed to evenly dispatch the accepted jobs to different servers for processing.

The decision on which policy to use itself is an active research topic in cloud computing and in this paper our focus is on the necessity of establishing a QoS-differentiated architecture and the application of an analytical approach to analyze its performance.
%In IaaS services, dispatching numerous jobs to servers simultaneously poses major scalability requirements \cite{Ousterhout13a}.
What we do is enabling our analysis applicable to a class of policies that are well applied in cloud environments or mentioned to be promising: they are such that, obeying a SLA, the maximum rate of accepting jobs at a subsystem is linearly proportional to the number of assigned servers; it includes policies such as {\em Random}, {\em Power of Two Choice} (PTC), and {\em Join-the-Idle-Queue} (JIQ) \cite{Jennings15a}. For example, the {\em Random} policy has ever been applied to analyze the viability of cloud brokerage services \cite{Zheng16a} and to a famous simulator CloudAnalyst for balancing workloads among virtual machines \cite{Wickremasinghe10a}; so does the PTC policy in the practice of large-scale cluster management \cite{Ousterhout13a}.

\vspace{0.2em}\noindent\textbf{(\rmnum{2})} \textbf{Analytical Results} (Sections~\ref{sec.system-optimizations} and~\ref{sec.optimization}).\enskip In this paper, our objective is maximizing the CSP's revenue. To do so, we propose an optimal pricing scheme to determine the prices of SLAs; SLA prices -- each SLA corresponding to a specific subsystem -- affect the choices of users and thus determine the job arrival rate at each subsystem. Further, we propose an optimal capacity planning scheme to determine the rate of accepting jobs at every subsystem, in connection with the number of assigned servers. In the following, we introduce the high-level ideas in the process of deriving these two schemes and what makes our derivation non-trivial.

%we solve two intertwined problems optimally: (\rmnum{1}) the design of the pricing of different SLAs, and, (\rmnum{2}) the arrival rate of jobs accepted to be processed, in connection with the number of servers assigned to each SLA.

%Our analysis proceeds as follows. SLA prices -- each SLA corresponding to a specific subsystem -- affect the choices of users and thus determine the job arrival rate at each subsystem.

In our analysis, we first consider the case where the SLA prices are predetermined and solve the second problem. We formulate the original problem as an integer linear program (ILP) and analyze the structure of an optimal solution to the ILP. We further reduce the ILP problem to a series of simpler problems whose optimal solution is observable finally. As a result, we provide an optimal capacity planning scheme for the second problem. Next, we analyze the structural properties related to optimal SLA prices. Here, the challenge arises when there exist multiple SLAs. In contrast, when the whole system only offers a single SLA, let us consider the users who succeed in getting services from the CSP and the SLA's optimal price is simply a value close to but smaller than the lowest among these served users' utilities under this SLA; however, this is not the case of this paper where users will make comparisons among all SLAs and select one SLA that maximizes its surplus, i.e., its utility minus the paid price. Relying on the optimal capacity planning scheme, we propose a procedure to obtain an optimal solution to the first problem. %Here, it is worth noting that our analytical results are applicable to any of the above Random, JIQ, and PTC policies.

%Thus, we analyze the structure of an optimal solution as the solution of an integer linear programming (ILP) problem. We further reduce the ILP problem to a simpler one whose optimal solution can be obtained directly. As a result, we provide an optimal capacity planning scheme for the second problem.} \fdp{Next, we analyze the structural properties related to optimal SLA prices. Here, the challenge arises when there exist multiple SLAs; when the whole system only offers a single SLA, the SLA's optimal price is a value smaller than but close enough to the utility of some user under this SLA, which is not the case of this paper. Relying on the} optimal capacity planning scheme, we propose a procedure to obtain an optimal solution to the first problem.

\vspace{0.2em}\noindent\textbf{(\rmnum{3})} \textbf{Performance Evaluation} (Section~\ref{sec.performance-evaluation}).\enskip %In terms of the revenue of a CSP,
The proposed architecture is compared with the usage-based pricing model; in fact, the latter corresponds to a special case of the proposed QoS-differentiated pricing where $L=1$, i.e., only a SLA is provided, and our model is complementary to that model by QoS differentiation. Simulations show that the revenue of a CSP could be improved by up to a five-fold increase, ranging from 499.2\% to 44.09\%. In addition, by numerical results, we also analyze how the performance of a QoS-differentiated posted pricing model is affected by the number of servers that a CSP possesses, and the market environments such as the total amount of the users' demands of computing services, the user population's sensitivity to latency (how fast their utility/willingness-to-pay decreases with the increment of delay), the diversity of users (how much their values vary) etc.

Finally, although in this paper we analyze and discuss the proposed QoS-differentiated pricing and resource management architecture in the case where a CSP holds a fixed number of servers, it could also be used to analyze under a particular market environment what is the optimal number of servers that a CSP should possess and provision to users, in terms of the maximum revenue it can obtain from the whole market or other metrics. This can be achieved by using numerical results to observe how its maximum revenue changes with the number of servers that it possesses.

%Finally, an important strength of the proposed model lies in its simplicity from a user's point of view: users could easily understand the pricing model and simply select the SLA that best satisfies their needs. As a consequence, diverse users are naturally directed to their preferable SLAs. Due to the advantage brought about when processing delay-tolerant job requests, it is possible to offer those at lower price engendering at the same time higher system utilization.

%%%%%%%%%%%%%%%%%%%%%%%%%%%%%%%%%%%%%%%%%%%%%%%%%%%%%%%%%%%%%%%%%%%%%%%%%%%%%%%%%%%%%%%%%%%%%%%%%%%%
%%%%%%%%%%%%%%%%%%%%%%%%%%%%%%%%%%%%%%%%%%%%%%%%%%%%%%%%%%%%
\section{Related Works}\label{sec.related-works}
%%%%%%%%%%%%%%%%%%%%%%%%%%%%%%%%%%%%%%%%%%%%%%%%%%%%%%%%%%%%
%%%%%%%%%%%%%%%%%%%%%%%%%%%%%%%%%%%%%%%%%%%%%%%%%%%%%%%%%%%%%%%%%%%%%%%%%%%%%%%%%%%%%%%%%%%%%%%%%%%%

In this section, we review existing works on pricing of IaaS services.

\vspace{0.25em}\noindent\textbf{Analytical Modeling Spot Pricing.} Investigating the potential models for providing and pricing cloud services and their effectiveness are very real and important questions for the providers \cite{Devanur17}. Methodologically, of the most relevance to our work is the existing research line which investigates which model is more effective: offering on-demand instances alone or offering both spot and on-demand instances to users.
%the effectiveness of the hybrid market in Amazon EC2 where spot and on-demand instances are offered to users.
As stated above, spot instances are priced by auctions while on-demand instances are charged a fixed price per unit of time when utilizing a virtual machine, i.e., usage-based pricing.

Such works also based an analytical approach to performance analysis on queuing theory, the standard tool for evaluating complex computing and communication systems \cite{LeBoudec10a}.
%Amazon claims that a more saturated utilization of cloud resources is achieved by offering such a hybrid market.
%Furthermore, it is even difficult to analyze and evaluate whether the introduction of spot instances is beneficial for increasing CSP's revenue \cite{Devanur17}.
In particular, Abhishek {\em et al.} modeled the hybrid market of Amazon as a queuing system, further analyzed by auction theory \cite{Abhishek12}. They use a linear function of a job's waiting time to characterize the job's utility: under the assumption that the CSP has access to an infinite pool of resources, they have showed that the introduction of spot instances could not increase the revenue of a CSP. The reason for this is that, in a hybrid market, there is no way to prevent high-value low-waiting-time-cost jobs from choosing the spot market when they would have been willing to pay a higher on-demand price. More recently, L. Dierks, and S. Seuken considered a more realistic setting where the amount of instances is finite. Under such assumption, they have showed that a hybrid market can indeed lead to higher revenue than a single on-demand market \cite{Dierks16a}. %Here, the utility of a job is still described by a linear function of its waiting time.

Beyond spot pricing, in the recent literature, analytical models are also proposed to study the viability of cloud brokerage services under usage-based pricing~\cite{Zheng16a}. In our work, similar job queuing and user's utility models are used. However, we consider the case where multiple service level agreements (SLAs) are offered to users whereas in \cite{Zheng16a} a single SLA is assumed. As a consequence, in our framework, new combined problems of pricing and capacity planning arise. In addition, the QoS-differentiated posted pricing is tested under three analytical policies, whereas in \cite{Zheng16a} only the Random policy is considered.

\vspace{0.2em}\noindent\textbf{Preliminary Measurement of QoS-Differentiated Pricing.}
Recently, the effectiveness of QoS-differentiated posted pricing has been validated via an experimental deployment \cite{Sandholm15}. The CSP, based on the current load, can specify two completion times per submitted job. The corresponding prices are determined accordingly, where the later a job is completed, the cheaper the price. The experimental results show that such a pricing system can yield 40\% improvement over the standard fixed pricing scheme. However, the pricing scheme in \cite{Sandholm15} is cloud-centric and not designed to satisfy per user QoS requirements. In other words, when the available servers are not enough, the CSP is forced to announce large job completion times to all users. This happens even though some of their jobs may be latency-critical. The latter observation suggests instead to consider a pre-defined set of QoS requirements, including a wide range of completion times. Then, a user can choose a QoS requirement that fits best its need, and the CSP, in turn, has to ensure that the user's jobs are completed satisfying the intended QoS requirements.

Furthermore, in terms of methodologies, performance evaluation proposed in \cite{Sandholm15} is based on measurements. Conversely, we based our evaluation on analytical modeling. Obeying to standard performance evaluation methodologies \cite{LeBoudec10a}, we complement our analysis with numerical experiment in order to show the viability of QoS-differentiated pricing.

\vspace{0.25em}\noindent\textbf{Mechanism Design.}
Mechanism design methods have attracted substantial attention so far\footnote{Although most of these mechanisms consider the objective of maximizing social welfare, it could be related to the revenue maximization objective of this paper through some existing techniques \cite{Cai13a}.}. In this type of schemes, each user reports to the cloud its job requirements, e.g., value, workload, and deadline by which a job has to be completed; a CSP determines which users to be served and the prices paid by users. One line of works considers either a static scheme, i.e., by which all servers are sold at one round, or the case where each user specifies a rigid time interval and the number of servers requested per time slot \cite{zhang14c,zhang14b,shi14a,Zhang15}. %
%In the latter case, once a user's bid is accepted, a fixed number of servers is allocated to the tagged user for the current round or at every slot in the specified interval.
Another line of works is further motivated by the fact that the jobs in cloud computing have different time elasticity due to diverse computing needs. Such works typically advocate the case for elastic scheduling in presence of delay-tolerant jobs~\cite{Jain15a,Lucier,Azar15,Zhou16}. In this case, each user specifies its job's workload and a large time interval; however, the user only cares about completing its job in the specified interval. This grants to the CSP additional flexibility in order to decide in which slots the job will be executed; however, all these jobs are still processed in the same (whole) system and are not differentiated by their deadline requirements.
%, which is to be addressed with a corresponding scheduling problem. However,

It is worth noting that these mechanisms are typically evaluated using worst-case analysis. The standard method is to evaluate the performance ratio between the proposed mechanisms and an ideal, optimal one, under the same pricing and resource allocation model. In our work, conversely, we focus on which model is better in terms of its usability and the generated revenue and particularly on the performance improvement obtained when QoS-differentiated posted pricing is compared with the standard usage-based pricing method, which does not offer incentives to delay their job's completion. By theoretical results and numerical outcomes we justify that such pricing scheme is indeed able to attain higher resource efficiency. Finally, the ideally optimal configuration of such pricing is provided, based on a queuing model.

\vspace{0.25em}\noindent\textbf{Posted Pricing.} Posted pricing has been introduced in Section~\ref{sec.motivation}. Such scheme is customer-friendly and removes the complexity in auction-based pricing (see also Google pricing \cite{google-pricing} for further details). Recently, Li {\em et al.} advanced the theoretical foundation of posted pricing in cloud computing~\cite{Zhang17a}. The authors have designed a type of pricing function based on the resource utilization ratio, and used a worst-case analysis to derive the competitive ratio, i.e., the ratio of the social welfare achieved by the pricing function -- used online for every incoming job -- and the social welfare of an optimal solution. In~\cite{Zhang17a}, the competitive ratio depends on the worst value of some job characteristic.

%%%%%%%%%%%%%%%%%%%%%%%%%%%%%%%%%%%%%%%%%%%%%%%%%%%%%%%%%%%%%%%%%%%%%%%%%%%%%%%%%%%%%%%%%%%%%%%%%%%%
%%%%%%%%%%%%%%%%%%%%%%%%%%%%%%%%%%%%%%%%%%%%%%%%%%%%%%%%%%%%
\section{System Description}
\label{sec.model}
%%%%%%%%%%%%%%%%%%%%%%%%%%%%%%%%%%%%%%%%%%%%%%%%%%%%%%%%%%%%
%%%%%%%%%%%%%%%%%%%%%%%%%%%%%%%%%%%%%%%%%%%%%%%%%%%%%%%%%%%%%%%%%%%%%%%%%%%%%%%%%%%%%%%%%%%%%%%%%%%%

In this section, we describe the QoS-differentiated cloud pricing system and derive the corresponding internal resource management model.

%%%%%%%%%%%%%%%%%%%%%%%%%%%%%%%%%%%%%%%%%%%%%%%%%%%%%%%%%%%%
\subsection{The Service and Pricing Model}\label{sec_pricing}
\label{sec.pricing-model}
%%%%%%%%%%%%%%%%%%%%%%%%%%%%%%%%%%%%%%%%%%%%%%%%%%%%%%%%%%%%

%In this subsection, we describe the QoS-differentiated cloud pricing system.
%and also lay some theoretical foundation by which the CSP bases the optimal pricing of users.

\vspace{0.25em}\noindent\textbf{Goal of the CSP.} A cloud service provider (CSP) holds a fixed computing capacity of $m$ servers. Users of different sources have diverse latency requirements, which could play a significant role in improving resource utilization.
%In terms of resources efficiency, a crucial role is played by the characteristics of the users' submitted jobs.
Pricing them properly could incentivize users to express such diversity in latency requirements and further allows the CSP to take advantage of such diversity to do QoS-differentiated resource management, thus improving resource efficiency. A main aim of this work is to satisfy jobs' QoS requirements while maximizing the CSP's revenue by increasing its resource efficiency.

\vspace{0.25em}\noindent\textbf{SLAs.} There are $L$ {\em Service Level Agreements} (SLAs). In general, a SLA is specified by the value of some job characteristic: in this work we focus on the job's waiting times, namely, $\varphi_{1}$, $\varphi_{2}$, $\cdots$, $\varphi_{L}$. When a user $i$ selects the $l$-th SLA, its job $j$ is expected to be completed by a time $a_{j}+s_{j}+\varphi_{l}$ where $a_{j}$ and $s_{j}$ are the arrival time and the runtime of $j$, respectively. To be clear, after a job arrives at the computing system under the $l$-th SLA, it is either waiting in some queue or being served; the expected total waiting time that it spends in the system is $\varphi_{l}$.

\vspace{0.25em}\noindent\textbf{QoS-Differentiated Pricing.} The price of the $l$-th SLA is denoted by $\theta_{l}$. Here, the parameters $\varphi_{l}$ and $\theta_{l}$ are such that (\rmnum{1}) $0\leq \varphi_{1} < \cdots < \varphi_{L}$, and (\rmnum{2}) $\theta_{1} > \cdots > \theta_{L} > 0$.
%\begin{align}
%& 0\leq \varphi_{1} < \varphi_{2} < \cdots < \varphi_{L} < \infty\\
%& p = \theta_{1} > \theta_{2} > \cdots > \theta_{L} > 0
%\end{align}
The first SLA specified by $\{\varphi_{1}, \theta_{1}\}$ corresponds to the standard usage-based pricing in cloud markets (e.g., on-demand instances in Amazon EC2, the pricing in Google cloud). In such SLA, users do not have incentives to tolerate significant delays in completing their jobs, i.e., $\phi_1$ is set to a small and negligible value. The corresponding value $\theta_{1}$ is the price of utilizing a unit of computing resource, i.e., a server or a virtual machine, in the unit of time for such SLA. For the other SLAs, prices are lower than $\theta_{1}$, at the expense of delaying the completion of tagged job $j$ to an expected time larger than $a_{j}+ s_{j}$. The larger the delay a user can tolerate, the lower the price.

The overall pricing and scheduling framework developed in the rest of the paper is simply illustrated in Figure~\ref{Fig.4}.

%\begin{remark} Throughout the paper, we shall consider soft delay constraints, i.e., the expected waitng time of a job should not exceed $\varphi_l$ under SLA $l$. Alternative techniques would bound the probability of exceeding such threshold. Our soft constraint approach is a good approximation in case the CSP operates in the regime of low per-server load; to see this, by using a M/M/1 queue of parameters $\lambda$ and $\mu$ to models each server, the system time is given by an exponential random variable with parameter $\mu-\lambda$. The variance of the system time is $1/(\mu-\lambda)^2$: in this case, a soft constraint on the job completion time is a good approximation $\rho=\lambda/\mu$. \end{remark}

\begin{figure}[h]
  \centering
  \includegraphics[width=3.25in]{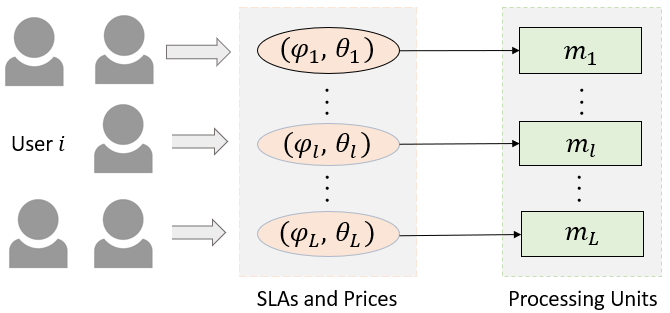}%\\
  \caption{A cloud pricing system: when a user $i$ selects the $l$-th SLA, all its jobs are processed at the $l$-th processing unit and are expected to have a waiting time $\varphi_{l}$ and it will be charged a price $\theta_{l}$ per second in order to use a computing unit. The CSP possesses $m$ servers and the $l$-th processing unit is assigned $m_{l}$ servers where $\sum_{l=1}^{L}{m_{l}}\leq m$.}\label{Fig.4}
\end{figure}

%%%%%%%%%%%%%%%%%%%%%%%%%%%%%%%%%%%%%%%%%%%%%%%%%%%%%%%%%%%%
\subsection{User's Choice}
\label{sec.user-choice}
%%%%%%%%%%%%%%%%%%%%%%%%%%%%%%%%%%%%%%%%%%%%%%%%%%%%%%%%%%%%

We assume that there are a total of $K$ users, denoted by a set of indexes $\mathcal{C}=\{1, 2, \cdots, K\}$.  As described by the prospect theory of Tversky and Kahneman \cite{Tversky92a} and subsequent behavioral experiments \cite{Leclerc95a}, user's attitude towards delay could be characterized as follows. Users become more and more impatient up to some delay threshold, but increasingly insensitive once the threshold passes. Furthermore, various users may obtain different values upon completion of their jobs; thus every user is associated with a particular weight. Formally, the utility (or willingness-to-pay) is modeled by a class of functions $\mathcal{U}(\varphi, \alpha)$, where $\varphi$ represents the delay of completing a user's job; also, users are differentiated by associating a different weight $\alpha$ to each user.
%For each user, the utility represents the user's level of satisfaction, as a function of the delay.
As a result, we define a class of such utility functions that are widely used in cloud, electric grids, and other network pricing systems.
\begin{assumption}\label{property-1}
The user utility $\mathcal{U}(\varphi, \alpha)=\alpha\cdot \mathcal{U}^{\prime}(\varphi) + \mathcal{U}^{\prime\prime}(\varphi)$ satisfies the following properties:
\begin{itemize}
\item [(\rmnum{1})] Decreasing in the delay: users' utility decreases as the delay $\varphi$ increases where $\frac{\partial \mathcal{U}^{\prime}}{\partial \varphi} < 0$ and $\frac{\partial \mathcal{U}^{\prime\prime}}{\partial \varphi} < 0$ for smooth utility functions;
\item [(\rmnum{2})] Convexity in the delay: the marginal value is non-decreasing in $\varphi$ where $\frac{\partial^{2} \mathcal{U}}{\partial \varphi^2} \geq 0$ for smooth utility functions;

\item [(\rmnum{3})] Increasing in $\alpha$: fixing the value of $\varphi$, $\mathcal{U}(\varphi, \alpha)$ is an increasing linear function of $\alpha$ where $\mathcal{U}^{\prime}(\varphi)>0$.
\end{itemize}
\end{assumption}
%\footnote{A more general assumption is non-increasing in $\varphi$ to better model the case that the utility approaches and even becomes zero as $\varphi\rightarrow \infty$. However, in the offered SLAs of this paper, we are just concerned about the decreasing case as the delay $\varphi$ increases from $\varphi_{1}$ to $\varphi_{L}$.}

This class of functions covers a large range of utility functions.
%which are widely used in cloud, electric grids, and other network pricing systems.
For example, it is often instantiated as quadratic utility functions in electric markets \cite{Samadi10a}. The analogy of cloud and electric markets as utility models lies in that they both provide users with on-demand access to shared resources (virtualized machines, electricity) that are paid according to consumption \cite{Brynjolfsson10a}.
%Also, many similar household and business users may potentially coexist.
In both cloud computing~\cite{Xu13a,Zheng16a,Zheng15} and network systems \cite{Mahindra14a,Chiang07a} research communities, utility functions are also specified as the product of a weight, namely $\alpha$, and a logarithmic function \cite{Zheng16a,Zheng15,Mahindra14a} ({\em resp.} an exponential function \cite{Xu13a}). Examples of such utility functions $\mathcal{U}(\varphi, \alpha)$ are:
%For instance, utility functions $\mathcal{U}(\varphi, \alpha)$ which satisfy the assumptions above are:
\begin{center}%\label{equa-uf2}
$(\psi+\varphi)^{-1}+\alpha\cdot ((\psi+\varphi)^{-1})^{2}$,\;\;  $\alpha\cdot \log\left( 1+(\psi+\varphi)^{-1} \right)$,
\end{center}
or, %$\alpha\cdot \log\left( 1+(\psi+\varphi)^{-1} \right)$
%where $\beta\in (0, 1)$ and $\psi\in (0, \infty)$.
\begin{equation}\label{equa-uf1}
\alpha\cdot (1-\beta)^{-1}\cdot (\psi+\varphi)^{\beta-1},
\end{equation}
where $\beta\in (0, 1)$ and $\psi\in (0, \infty)$. %Here, we also note that the second point of Assumption~\ref{property-1} is not necessary to obtain the analytical results of this paper.

In the rest of the work, we incorporate each user preference, namely the dependence on $\alpha$, directly into the users' utility function as follows.
\begin{definition}\label{def-utility}
Upon completion of a job whose waiting time in the system is $\varphi$, the utility of user $i$ writes
\begin{equation}
\mathcal{U}_{i}(\varphi) = \mathcal{U}(\varphi, \alpha_{i}), \enskip \varphi\geq 0
\end{equation}
where $\alpha_{i}$ is the weight of user $i$, whereas $\mathcal{U}(\varphi, \alpha)$ satisfies Assumption~\ref{property-1}.
\end{definition}

Although $\varphi$ is the expected job waiting time of a user, its utility is in fact measured in dollars per second, spent to use a given server dedicated to the $l$-th SLA: this corresponds to a pay-per usage pricing model under QoS-differentiated service. The smaller the job waiting time, the larger its (unit) utility. For example, the utility of user $i$ for completing a job $j$, of runtime $s_{j}$, under the $l$-th SLA, is $s_{j}\cdot \mathcal{U}_{i}(\varphi_{l})$. %The users' utility is measured in dollars per second, spent to use a given server dedicated to the $l$-th SLA: this corresponds to a pay-per usage pricing model under QoS-differentiated service.

\begin{figure}[h]
  \centering
  \includegraphics[width=3.1in]{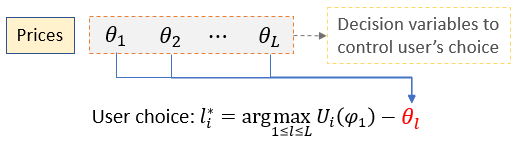}%\\
  \caption{Decision and job dispatching process at the $l$-th processing unit.}\label{Fig.6}
\end{figure}

\vspace{0.2em}\noindent\textbf{Decision-making.} Each user $i$ will select some SLA to maximize its (unit) surplus, i.e., its utility minus the paid price under some SLA. Formally, the surplus of user $i$ under the $l$-th SLA is defined as $\mathcal{U}_{i}(\varphi_{l}) - \theta_{l}$; user $i$ will select the $l_{i}^{*}$-th SLA under which its surplus is greater than zero and
%\begin{center}
\begin{equation}\label{equa-choice}
l_{i}^{*}=\operatorname*{arg\,max}_{1\leq l\leq L}\{\mathcal{U}_{i}(\varphi_{l}) - \theta_{l}\}.
\end{equation}
In particular, if there exist multiple SLAs under which user $i$ achieves the same maximum surplus, user $i$ will select one of the multiple SLAs randomly.
%\end{center}
The decision-making process of user $i$ is also illustrated in Fig.~\ref{Fig.6}; the prices could be used as decision variables to control which SLA will be chosen by user $i$, resulting in that its choice of latency requirements fulfills the CSP's expectation. Hence, by setting the SLA prices properly, the pricing system introduced in Sec.~\ref{sec_pricing} could provide incentives for users to express their diversity in latency requirements, and the CSP could improve the resource efficiency in terms of revenue after a corresponding resource management scheme is proposed to complement the pricing model. %In the rest of this work we shall demonstrate that this guarantees larger utilization of the computing system, and also increases the provider's revenue. % to some extent.

%%%%%%%%%%%%%%%%%%%%%%%%%%%%%%%%%%%%%%%%%%%%%%%%%%%%%%%%%%%%
\subsection{Capacity Partition}
%%%%%%%%%%%%%%%%%%%%%%%%%%%%%%%%%%%%%%%%%%%%%%%%%%%%%%%%%%%%

\vspace{0.2em}\noindent\textbf{Job arrival at every SLA.} Given the SLA prices of our pricing model, each user $i\in\mathcal{C}$ will select some SLA $l$ based on (\ref{equa-choice}) and all its jobs are expected to be completed with an expected waiting time no greater than $\varphi_{l}$.
%In this paper, the standard queuing model is used to characterize the job dispatching process at every processing unit.
User $i$ continuously submits jobs to the cloud system over time. As done in previous literature for IaaS clouds~\cite{Zheng16a,Dierks16a,Bruneo14a,Chang16a}, we assume that the job arrival rate of user $i$ follows a Poisson distribution with mean $h_{i}$. Also, job arrival processes of different users are assumed independent. Thus, inter-arrival time between two consecutive jobs are exponentially distributed. We denote by $\mathcal{C}_{l}$ the subset of users who choose the $l$-th SLA; it holds that $\cup_{l=1}^{L} {\mathcal{C}_{l}} =\mathcal{C}$, where $\mathcal{C}_{l}$ may or may not be empty for some SLA $l$. We denote by $\Lambda_{l}$ the total job arrival rate of users at the $l$-th processing unit, i.e.,
\begin{equation}\label{equa-rate}
\Lambda_{l} = \sum\nolimits_{j\in\mathcal{C}_{l}}{h_{j}},
\end{equation}

%\vspace{0.25em}\noindent\textbf{Processing Units.}
\vspace{0.2em}\noindent\textbf{Capacity partition.} The CSP holds $m$ servers and the whole computing capacity is divided into $L$ processing units, as illustrated in Fig.~\ref{Fig.4}; for all $1\leq l\leq L$, the $l$-th processing unit is assigned $m_{l}$ servers where
\begin{equation}\label{equa-num-machine}
\sum\nolimits_{l=1}^{L}{m_{l}} \leq m.
\end{equation}
The $l$-th processing unit is used to exclusively process all jobs that arrive for the $l$-th SLA and each server will process the jobs with the same QoS requirement.

In this paper, we consider the case that the CSP possesses a fixed number of servers that are assumed homogeneous virtual machines and the CSP has a limited capacity to process the arriving jobs. So, a admission control scheme is needed to determine the rate of accepting jobs that are  processed at the $l$-th processing unit; here, some of the arriving jobs may not be processed in order to maximize the revenue. As elaborated later, the value of $m_{l}$ determines the processing capacity of the $l$-th unit, i.e., the maximum possible rate of accepting jobs in order not to violate the $l$-th SLA. Accordingly, we denote by $\Lambda_{l}^{+}$ the rate of accepting jobs for processing at the $l$-th processing unit, where
\begin{equation}\label{equa-constraint-1}
\Lambda_{l}^{+}\leq \Lambda_{l},
\end{equation}

The runtimes of jobs are assumed to be independent and identically distributed random variables, following a general distribution with mean $\omega$. In queuing theory, the mean $\omega$ is often normalized to be one for the sake of analysis and we also do so to make the existing results for job dispatching directly accessible, as seen in Section~\ref{sec.dispatching-policies}. The mean service rate is the average number of jobs that a server can serve per unit time, denoted by $\mu$ where $\mu = 1/\omega=1$. We denote by $W_{l}$ the total workload processed by the $l$-th processing unit in a unit of time and it is the product of the rate of accepting and processing jobs and the mean runtime of jobs, i.e.,
\begin{equation}\label{equa-aggregation}
W_{l} = \Lambda_{l}^{+}\cdot \omega = \Lambda_{l}^{+}.
\end{equation}
What happens at the $l$-th processing unit is also illustrated in Fig.~\ref{Fig.5} and will be further elaborated in the following. As seen later, the decision variables for revenue maximization in the resource management part are $\{\Lambda^{+}\}_l$s and $\{m\}_{l}$s.

\begin{figure*}[h]
  \centering
  \includegraphics[width=6.6in]{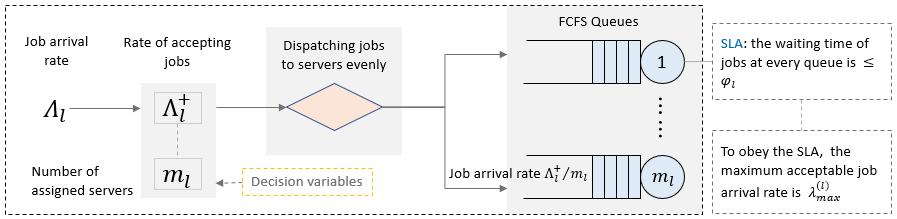}%\\
  \caption{Decision and job dispatching process at the $l$-th processing unit.}\label{Fig.5}
\end{figure*}

%%%%%%%%%%%%%%%%%%%%%%%%%%%%%%%%%%%%%%%%%%%%%%%%%%%%%%%%%%%%
\subsection{Job Assignment}\label{sec.resource-management}
%%%%%%%%%%%%%%%%%%%%%%%%%%%%%%%%%%%%%%%%%%%%%%%%%%%%%%%%%%%%

%As done in many works for IaaS clouds \cite{Dierks16a,Zheng16a,Bruneo14a,Chang16a,Mukherjee17a}, the standard queuing model is used to characterize the job dispatching process at every processing unit.% and analyze the performance of our pricing model.

%\vspace{0.24em}\noindent\textbf{Dispatching Policies.}
All accepted jobs to be processed at the $l$-th unit are dispatched to each of its servers under a certain dispatching policy. After dispatching, the first-come-first-served (FCFS) discipline is applied to process jobs arriving at a single server~\cite{Dierks16a,Zheng16a,Bruneo14a,Chang16a}. The dispatching policy determines (\rmnum{1}) how many of the accepted jobs will be assigned to each server of the $l$-th processing unit, and (\rmnum{2}) the specific relation between the job arrival rate at each server and the expected job waiting time. In our analysis of the QoS-differentiated architecture, the thing that matters is that these two features determine the processing capacity of the $l$-th unit, i.e., its maximum possible rate of accepting and processing jobs.
%These two features determine the maximum processing capacity of the $l$-th unit, i.e., its maximum possible rate of accepting and processing jobs.
%In our analysis of the QoS-differentiated architecture, the thing that matters is that these two features determine the maximum processing capacity of the $l$-th unit, i.e., its maximum possible rate of accepting and processing jobs.

In this paper, we consider a class of policies, denoting all such policies by a set $\mathcal{D}$, with the following properties:
\begin{property}[Load Balancing]\label{lemma-dispatching}
Suppose that a policy in $\mathcal{D}$ is applied to every processing unit. Then, the job arrival rate at each server of the $l$-unit follows a Poisson distribution with mean $\lambda_{l} = \Lambda_{l}^{+}/m_{l}$.
\end{property}

The specific policies contained in $\mathcal{D}$ are introduced in Section~\ref{sec.dispatching-policies}. Property~\ref{lemma-dispatching} says that, under a policy in $\mathcal{D}$, the loads among the $m_{l}$ servers of the $l$-th unit is uniformly balanced, and every server of the same processing unit could be viewed as a single queue with the identical job arrival rate $\lambda_{l}=\Lambda_{l}^{+}/m_{l}$.

The expected waiting time of jobs increases with the job arrival rate at every server and conversely the following property holds:

%In Section~\ref{sec.dispatching-results} we shall provide the specific relation between the job arrival rate $\lambda$ and the waiting time $\varphi$ for a policy in $\mathcal{D}$. By generalizing such particular results (see \eqref{equa-exponential-1}, \eqref{equa-Pareto-1}, \eqref{equa-ptc-exp}, and \eqref{equa-JIQ} below), we conclude that the expected waiting time of jobs increases with the job arrival rate at every server and conversely the following property holds:
\begin{property}\label{def-arrival}
Let's consider a policy in $\mathcal{D}$, and a resulting job assignment: the job arrival rate $\lambda$ at every server of some processing unit is an increasing function of the expected waiting time of jobs $\varphi$, irrelevant of the number of servers assigned to that unit.
\begin{equation}\label{fun-arrival}
\lambda = \mathcal{Q}(\varphi).
\end{equation}
\end{property}

%\noindent

If the expected waiting time of jobs at a server is $\leq \varphi_{l}$, the job arrival rate at this server should be $\leq \lambda = \mathcal{Q}(\varphi)$. The function $\mathcal{Q}(\varphi)$ in Property~\ref{def-arrival} will be instantiated when we introduce the specific dispatching policies in Section~\ref{sec.dispatching-policies}.
%In terms of dispatching policies, the features that matter in our analysis of the QoS-differentiated architecture are Property~\ref{lemma-dispatching} and \ref{def-arrival}.

In this paper, the $l$-th SLA guarantees that the expected waiting time of jobs is no greater than $\varphi_{l}$ at the $l$-th processing unit. This is achieved by guaranteeing that the waiting time of processed jobs at every server is no greater than $\varphi_{l}$. So, every server has a processing capacity constraint and the below lemma follows by Property~\ref{def-arrival}.

\begin{lemma}\label{lemma-max-single-rate}
Under a job assignment policy in $\mathcal{D}$, at a single server of the $l$-th unit, the maximum acceptable job arrival rate under the $l$-th SLA is $\mathcal{Q}(\varphi_{l})$, denoted by $\lambda_{max}^{(l)}$.%, that is irrelevant of the number of servers assigned to the $l$-th unit.
\end{lemma}

The relation of the job arrival rate at every server and the total rate of processing jobs at the $l$-th unit is described in Property~\ref{lemma-dispatching}, and we thus have the following lemma.

\begin{lemma}\label{lemma-max-rate}
Given an arbitrary number $m_{l}$, suppose that the $l$-th unit is assigned $m_{l}$ servers. The maximum possible rate of accepting and processing jobs at this unit is $\lambda_{max}^{(l)}\cdot m_{l}$, i.e., linearly proportional to $m_{l}$, in order not to violate the $l$-th SLA.
\end{lemma}

The rate $\Lambda_{l}^{+}$ of accepting jobs is constrained by the processing capacity of the $l$-th unit and we have by Lemma~\ref{lemma-max-rate} that:
\begin{equation}\label{constraint-3}
\Lambda_{l}^{+} \leq m_{l}\cdot \lambda_{max}^{(l)}.
\end{equation}

\subsubsection{Dispatching Policies}
\label{sec.dispatching-policies}

In IaaS services, dispatching policies should have the capacity of simultaneously dispatching numerous jobs to servers in parallel, rather than sequentially like what a central queue does \cite{Ousterhout13a,Rasley16a}. In this paper, we consider three dispatching policies that not only satisfy Properties~\ref{lemma-dispatching} and~\ref{def-arrival} but also are promising in cloud environments \cite{Zheng16a,Wickremasinghe10a,Ousterhout13a,Jennings15a}. Each policy has a particular way to communicate among jobs, dispatchers and servers and the three policies are described below.

\vspace{0.15em}\noindent\textbf{Random}: for every job that arrives at the $l$-th processing unit, immediately choose a server with the same probability $\frac{1}{m_{l}}$ and assign this job to the server;

\vspace{0.15em}\noindent\textbf{PTC}: for every arriving job, choose two servers randomly, probe them, and, assign this job to the server with fewer queued jobs;

\vspace{0.15em}\noindent\textbf{JIQ}-Random: at the $l$-th unit, $\kappa_{l}$ dispatchers are maintained. Whenever a server of the $l$-th unit becomes idle, it randomly chooses a dispatcher and informs the dispatcher of its idleness. On the other hand, for every arriving job, immediately assign it to a dispatcher chosen randomly; if there are recorded idle servers at this dispatcher, it will assign the job to an idle server chosen randomly; otherwise, just assign the job to a randomly chosen (idle or non-idle) server from the $l$-th unit. For the analysis sake, we shall consider the case when $\phi=m_{l}/\kappa_{l}$ is constant across the deployment.

Now, we instantiate $\mathcal{D}$ as a set of the three dispatching policies: Random, PTC, and JIQ-Random, and give the related results to also instantiate the function (\ref{fun-arrival}) in Property~\ref{def-arrival} when a particular policy in $\mathcal{D}$ is considered. We note that, in the literature of cloud computing, Pareto distribution is often used for characterizing the job's runtime of a published Google's workload \cite{Zheng16a}, while exponential distribution is also applied commonly \cite{Bruneo14a,Chang16a} or for a particular Microsoft's production workload \cite{Rasley16a}. Often, the mean of runtimes is normalized to be one for simplifying the analysis, as seen in the analysis of the PTC and JIQ-Random policies \cite{Mitzenmacher01a,Lu11a}.

%After the job assignment, each server of the same processing unit could be viewed to have a single queue with the same job arrival rate; in the following, we describe the particular relations of the job arrival rate $\lambda$ with the waiting time $\varphi$ and server utilization/load $\rho$ after the job assignment, as has been presented in \cite{Zheng16a,Lu11a,Mitzenmacher01a}. We also note that, in the literature of cloud computing, Pareto distribution is often used for characterizing the job's runtime of a published Google's workload \cite{Zheng16a}, while exponential distribution is also applied commonly \cite{Bruneo14a,Chang16a} or for a particular Microsoft's production workload \cite{Rasley16a}.

Every policy in $\mathcal{D}$ satisfies Property~\ref{lemma-dispatching} and the reason for this is its randomness when choosing a server or dispatcher and the details can be found in \cite{Zheng16a,Mitzenmacher01a,Lu11a}. Now, we introduce the related results for Property~\ref{def-arrival}. In a study of cloud brokerage service \cite{Zheng16a}, the Random policy is used. When the runtimes of jobs follow an exponential distribution, all jobs have an expected waiting time
\begin{equation}\label{equa-exponential-1}
\varphi = \rho/(\mu - \lambda).
\end{equation}
In the case that the runtimes of jobs follow a type of heavy-tailed distribution, i.e., Pareto distribution, with a shape parameter $\eta>1$ and a scale parameter $x_{m}$ (i.e., the minimum runtime), the expected runtime of jobs is $\frac{\eta x_{m}}{\eta-1}$. When $\frac{1}{\lambda}-\frac{\eta}{\eta-1}x_{m}>0$, all jobs have a finite expected waiting time \cite{Zheng16a}
\begin{equation}\label{equa-Pareto-1}
\varphi = \frac{1}{\lambda}\log\left( \frac{ \eta\cdot \left( \lambda x_{m} \right)^{\eta} \cdot \Gamma(-\eta, \lambda x_{m})  }{ 1 - \lambda\cdot (\eta x_{m}) / (\eta - 1)} \right),
\end{equation}
where $\Gamma(s, z) = \int_{z}^{\infty}{x^{s-1}e^{-x}}dx$ is the upper incomplete gamma function.
%The expected utilization of a server is
%\begin{equation}\label{equa-Pareto-2}
% \rho = 1/\left(1+ \alpha \left( \lambdax_{m} \right)^{\alpha}\cdot\Gamma(-\alpha-1, \lambdax_{m})\right).
%\end{equation}
In \cite{Mitzenmacher01a}, the PTC policy is considered and the analytical result is given by assuming the job's runtime follows an exponential distribution. The expected waiting time of jobs is
\begin{equation}\label{equa-ptc-exp}
 \varphi = \lambda^{2}/\left(1-\lambda^{2}\right).
\end{equation}
In \cite{Lu11a}, the JIQ policy is considered and a general distribution is used to characterize the runtime of jobs. The expected waiting time of jobs under JIQ is
\begin{equation}\label{equa-JIQ}
\varphi = \left( (C+1)/2 \right) \cdot \left( \lambda/(1-\lambda)/(1+\phi) \right),
\end{equation}
where the parameter $C$ is the ratio of the variance of the particular runtime distribution to its mean squared.% and
the mean runtime of jobs is 1.
%The server utilization $\rho$ is $\lambda/\mu$.
%The server utilization/load $\rho$ is also $\lambda/\mu$.

With (\ref{equa-exponential-1}), (\ref{equa-Pareto-1}), (\ref{equa-ptc-exp}), and (\ref{equa-JIQ}), we can illustrate how the utilization increases with waiting time where the utilization is $\lambda\cdot\omega$, i.e., the product of the mean runtime of jobs and the job arrival rate at a server. A graph is reported in Figure~\ref{Fig.12}, for $\mu=1$ and $\phi=10$. The red dotted, blue, and green lines, respectively, represent the performance of Random, PTC, and JIQ policies in the case when the service time follows an exponential distribution; the red line represents the curve of the Random policy where the service time follows a Pareto distribution with $x_{m}=1.001/2.001$ and $\eta=2.001$.

Finally, the key notation of this paper is listed in Table~1. In this paper, we use the script of a capital letter to denote a set, e.g., $\mathcal{C}$ denotes a set of the indexes of users. In contrast, the script of a capital letter with input variables is often used to denote a function, e.g., $\mathcal{U}(\varphi, \alpha)$ denotes the utility function. In addition, as seen later, the bold letter is used to denote a vector, e.g., $\boldsymbol{\theta}$ denotes the SLA prices $(\theta_{1},$ $\cdots,$ $\theta_{L})$.

In this paper, all omitted proofs can be found in the appendix.

\begin{figure}%[h]
\begin{center}
  \includegraphics[width=2.95in]{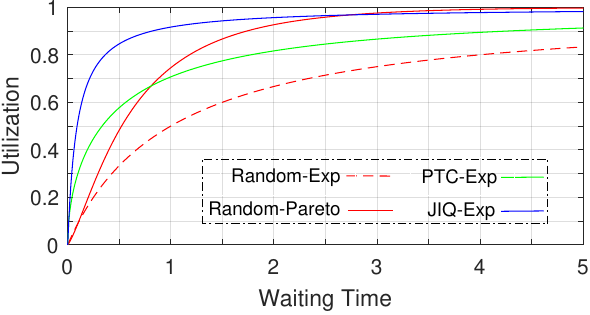}%\\
  \caption{The increase of system utilization with the waiting time.}\label{Fig.12}
\end{center}
\end{figure}

\begin{table}[t]
	\centering
	\begin{threeparttable}[t]
       \label{table-notation}
		\caption{Key Notation}
		\begin{tabular}{| C{2cm} | p{6cm} |}   %p{1.1cm}|p{3.8cm}|}
			\hline
			{\bf Symbol} & {\bf Explanation}\\ \hline
			$L$ & the number of SLAs \\ \hline

            $\varphi_{l}$ & the waiting time of the $l$-th SLA \\ \hline

            $\theta_{l}$ ({\em resp.} $\theta_{l}^{*}$) & the price ({\em resp.} the optimal price) of the $l$-th SLA \\ \hline

			$m$ & the total number of servers possessed by a CSP \\ \hline

			$m_{l}$ ({\em resp.} $m_{l}^{*}$) & the ({\em resp.} optimal) number of servers assigned to the $l$-th processing unit \\ \hline

%			$\Lambda_{l}$  &  the total job arrival rate at the $l$-th SLA processing unit \\ \hline

			$\Lambda_{l}^{+}$ ({\em resp.} $\Lambda_{l}^{*}$) & the ({\em resp.} optimal) rate of accepting job arrival rate at the $l$-th processing unit \\ \hline

%			$\lambda_{l}$  &  the job arrival rate at each server of the $l$-th SLA processing unit \\ \hline

%			$\lambda_{l}$ (resp. $\lambda_{max}^{(l)}$) & the job arrival rate (resp. under the $l$-th SLA, the maximum acceptable job arrival rate) at each server of the $l$-th processing unit \\ \hline
			
			$\lambda_{max}^{(l)}$ & the maximum acceptable job arrival rate at each server of the $l$-th SLA processing unit, in order not to violate the $l$-th SLA \\ \hline

		\end{tabular}

		\label{table}
	\end{threeparttable}
\end{table}

%%%%%%%%%%%%%%%%%%%%%%%%%%%%%%%%%%%%%%%%%%%%%%%%%%%%%%%%%%%%%%%%%%%%%%%%%%%%%%%%%%%%%%%%%%%%%%%%%%%%
%%%%%%%%%%%%%%%%%%%%%%%%%%%%%%%%%%%%%%%%%%%%%%%%%%%%%%%%%%%%

\section{Goals and Problem Formulation}\label{sec.system-optimizations}

%%%%%%%%%%%%%%%%%%%%%%%%%%%%%%%%%%%%%%%%%%%%%%%%%%%%%%%%%%%%
%%%%%%%%%%%%%%%%%%%%%%%%%%%%%%%%%%%%%%%%%%%%%%%%%%%%%%%%%%%%%%%%%%%%%%%%%%%%%%%%%%%%%%%%%%%%%%%%%%%%

After we introduced the QoS-differentiated pricing and resource management architecture, we proceed to tackle the main objective of this work, i.e., the maximization of a CSP's revenue. To achieve this, the whole optimization process is divided into two correlated phases: {\em capacity planning} and {\em optimal prices determination}.

\subsection{Decision Variables, and Objective}

We denote the prices of the 1-th, 2-th, $\cdots$, $L$-th SLAs by a price vector $\boldsymbol{\theta}$ $=$ $(\theta_{1},$ $\cdots,$ $\theta_{L})$, the number of servers assigned to every processing unit $l$ by $\boldsymbol{m}=(m_{1}, \cdots, m_{L})$, and the rate of accepting and processing jobs at every unit $l$ by $\boldsymbol{\Lambda^{+}} =$ $(\Lambda^{+}_{1}, \cdots,$ $\Lambda^{+}_{L})$. To maximize the revenue generated by the QoS-differentiated architecture defined in Section~\ref{sec.model}, the decision variables are $\boldsymbol{m}$, $\boldsymbol{\Lambda^{+}}$, and $\boldsymbol{\theta}$, which are elaborated in the following.

Recall the pricing model and the user's choice process in Section~\ref{sec.pricing-model}~and~\ref{sec.user-choice}. As illustrated in Fig.~\ref{Fig.6}, each tagged user selects a SLA based on \eqref{equa-choice} to maximize its surplus; as a result, for all $l\in [1, L]$, the subset of users $\mathcal{C}_{l}$ who choose the $l$-th SLA is determined. Further, the arrival rate $\Lambda_{l}$ of jobs at every processing unit $l$ is obtained by \eqref{equa-rate}. Here, the users' choices $\mathcal{C}_{1}$, $\cdots$, $\mathcal{C}_{L}$ and the corresponding job arrival rates $\Lambda_{1}$, $\cdots$, $\Lambda_{L}$ are dominated by the price vector $\boldsymbol{\theta}$, which is a decision variable.

Further, the CSP has a fixed number of servers. Under every SLA, there is a threshold such that the SLA will be violated if the job arrival rate at a server exceeds the threshold by Lemma~\ref{lemma-max-single-rate}; thus, the CSP has limited processing capacity. Facing the jobs arriving at every processing unit, the decision variables are the rate of accepting and processing jobs at every processing unit, i.e., $\boldsymbol{\Lambda^{+}}$, and, the number of servers to assigned to every unit, i.e., $\boldsymbol{m}$; intuitively, our aim is to accept as many jobs as possible at one SLA under which every server of some unit could yield higher revenue.

%After dispatching the accepted jobs, the job arrival rate at every server of the $l$-th unit is the same, i.e., $\Lambda_{l}^{+}/m_{l}$, by Property~\ref{lemma-dispatching}; so is every server's utilization/load $\rho_{l}$;
The total workload processed by the $l$-th unit in a unit of time is $W_{l}$ defined in (\ref{equa-aggregation}). Users are charged for each unit of resources they actually use (i.e., the actual use of a server in a unit of time). For a unit of time, the total revenue obtained from the whole $l$-th unit is $W_{l}\cdot \theta_{l}$. The CSP's total revenue is the sum of the revenues of all units, i.e.,
\begin{equation}\label{equa-max}
%\begin{split}
\mathcal{V}(\boldsymbol{m}, \boldsymbol{\Lambda^{+}}, \boldsymbol{\theta})  = \sum\nolimits_{l=1}^{L}{W_{l}\cdot \theta_{l}} \overset{(b)}{=} \sum\nolimits_{l=1}^{L}{\theta_{l}\cdot \Lambda_{l}^{+}}
%\mathcal{V}(\boldsymbol{m}, \boldsymbol{\Lambda^{+}}, \boldsymbol{\theta}) & = \boldsymbol{m}\cdot \boldsymbol{v}^{T} = \boldsymbol{m}\cdot \mbox{diag}(\boldsymbol{\theta}) \cdot \boldsymbol{\rho}^{T}\\
%& =  \sum_{l=1}^{L}{m_{l}\cdot\theta_{l}\cdot\rho_{l}} \overset{(a)}{=} \sum_{l=1}^{L}{\theta_{l}\cdot \Lambda_{l}^{+}}
%\end{split}
\end{equation}
where the equation (b) is due to (\ref{equa-aggregation}). Here $\mathcal{V}(\boldsymbol{m}, \boldsymbol{\Lambda^{+}}, \boldsymbol{\theta})$ can be interpreted as the expected total revenue generated by the whole cloud system in a unit of time.

%Thus, every server's utilization at the $l$-th \fdp{unit -- denoted by $\rho_{l}$ -- is also the same. It is determined by \eqref{fun-utilization}, i.e.,}
%\begin{equation}\label{equa-rate-l}
%\rho_{l} = \mathcal{Q}_{2}\left( \frac{\Lambda_{l}^{+}}{m_{l}} \right).
%\end{equation}
%Let $\boldsymbol{\rho}=(\rho_{1}, \cdots, \rho_{L})$ denote the tuple of loads. Users are charged for each unit of resources (i.e., the actual use of a server in a unit of time) they actually use. So, with $v_{l}$ denoting revenue obtained in the unit of time by a server of the $l$ processing unit, it holds
%\begin{equation}\label{equa-vk}
%\fdp{\boldsymbol{v}=(v_{1}, \cdots, v_{L})=\boldsymbol{\rho}\, \mbox{diag}(\boldsymbol{\theta}),}
%\end{equation}
%where $\mbox{diag}(\cdot)$ denotes the standard diagonal matrix and $v_{l}=\theta_{l}\cdot \rho_{l}$. Thus, the CSP's total revenue
%\begin{equation}\label{equa-max}
%\begin{split}
%\mathcal{V}(\boldsymbol{m}, \boldsymbol{\Lambda^{+}}, \boldsymbol{\theta}) & = \boldsymbol{m}\cdot \boldsymbol{v}^{T} = \boldsymbol{m}\cdot \mbox{diag}(\boldsymbol{\theta}) \cdot \boldsymbol{\rho}^{T}\\
%& =  \sum_{l=1}^{L}{m_{l}\cdot\theta_{l}\cdot\rho_{l}} \overset{(a)}{=} \sum_{l=1}^{L}{\theta_{l}\cdot \Lambda_{l}^{+}}
%\end{split}
%\end{equation}
%where the equation (a) above is due to (\ref{fun-utilization}) and (\ref{equa-rate-l}). Here $\mathcal{V}(\boldsymbol{m}, \boldsymbol{\Lambda^{+}}, \boldsymbol{\theta})$ can be interpreted as the expected total revenue generated by the whole cloud system in a unit of time.

\subsection{Capacity Planning: $\boldsymbol{m}$ and $\boldsymbol{\Lambda^{+}}$}
\label{sec.prob-capacity-planning}

We first ignore the optimality of the SLA prices $\boldsymbol{\theta}$ and consider the case where $\boldsymbol{\theta}$ is pre-assigned arbitrarily; our aim is to maximize (\ref{equa-max}), i.e., the revenue of a CSP. In this case, our decision variables are the rate of accepting jobs $\Lambda_{l}^{+}$ at every unit $l$ and the number $m_{l}$ of assigned servers; the job arrival rate $\Lambda_{l}$ at every unit $l$ is known. %Be clear that, $\Lambda_{l}$ is the available job arrival rate of users $\mathcal{C}_{l}$, and Properties~\ref{lemma-dispatching} and~\ref{def-arrival} determines the maximum acceptable job arrival rate

\begin{problem}\label{prob1}
Under arbitrary, pre-assigned, SLA's prices $\boldsymbol{\theta}$, we denote $\mathcal{V}(\boldsymbol{m}, \boldsymbol{\Lambda^{+}}, \boldsymbol{\theta})$ in (\ref{equa-max}) by $\mathcal{V}_{\boldsymbol{\theta}}(\boldsymbol{m}, \boldsymbol{\Lambda^{+}})$; the revenue maximization problem can be written as
\[
\mbox{maximize} \; \mathcal{V}_{\boldsymbol{\theta}}(\boldsymbol{m}, \boldsymbol{\Lambda^{+}})%(\ref{equa-max}),
\]
with $\boldsymbol{m}$ and $\boldsymbol{\Lambda^{+}}$ as decision variables and subject to constraints \eqref{equa-num-machine}, \eqref{equa-constraint-1}, and \eqref{constraint-3}.
\end{problem}

%Problem~\ref{prob1} has constraints including the number of servers possessed by a CSP, the available job arrival rate at every processing unit, and the processing capacity of every unit.

Constraint (\ref{equa-num-machine}) says that the total number of servers assigned to different processing units is $\leq$ the total number of servers available. (\ref{equa-constraint-1}) says that at the $l$-th unit the rate of accepting and processing jobs is $\leq$ the total job arrival rate of users who choose the $l$-th SLA, which is determined by the SLA prices. (\ref{constraint-3}) says that the rate of accepting jobs should be $\leq$ the maximum possible rate of accepting jobs in order not to violate the $l$-th SLA, by Lemma~\ref{lemma-max-rate}.

We will in Section~\ref{sec.capacity-planning} give an optimal solution to Problem~\ref{prob1}, i.e., the optimal values of $\boldsymbol{m}$ and $\boldsymbol{\Lambda^{+}}$, denoted by $\boldsymbol{m^{*}}$ and $\boldsymbol{\lambda^{*}}$; as a result, we could derive the maximum
revenue of a CSP under an arbitrary, pre-assigned, tuple $\boldsymbol{\theta}$ of SLA prices. Here, the optimal capacity planning and admission control are realized and defined by $\boldsymbol{m^{*}}$ and $\boldsymbol{\Lambda^{*}}$, which are linked to an arbitrarily given $\boldsymbol{\theta}$; thus, $\boldsymbol{m^{*}}$ and $\boldsymbol{\Lambda^{*}}$ could also be viewed as the functions of $\boldsymbol{\theta}$: $\boldsymbol{m}_{\theta}^{*}$ and $\boldsymbol{\lambda}_{\theta}^{*}$ (or $\boldsymbol{m^{*}(\theta)}$ and $\boldsymbol{\lambda^{*}(\theta)}$).

%that satisfy \eqref{equa-num-machine}, \eqref{equa-constraint-1}, and \eqref{constraint-3} while maximizing $\mathcal{V}_{\boldsymbol{\theta}}(\boldsymbol{m}, \boldsymbol{\Lambda^{+}})$ $\boldsymbol{m}_{\theta}^{*}$ and $\boldsymbol{\lambda}_{\theta}^{*}$

%%%%%%%%%%%%%%%%%%%%%%%%%%%%%%%%%%%%%%%%%%%%%%%%%%%%%%%%%%%%%%%%%%%%%%%%%%%%%%%%%%%%%%%%%%%%%%%%%%%%
%%%%%%%%%%%%%%%%%%%%%%%%%%%%%%%%%%%%%%%%%%%%%%%%%%%%%%%%%%%%%%%%%%%%%%
\subsection{Optimal Prices: $\boldsymbol{\theta}$}\label{sec.prob-2}
%%%%%%%%%%%%%%%%%%%%%%%%%%%%%%%%%%%%%%%%%%%%%%%%%%%%%%%%%%%%%%%%%%%%%%
%%%%%%%%%%%%%%%%%%%%%%%%%%%%%%%%%%%%%%%%%%%%%%%%%%%%%%%%%%%%%%%%%%%%%%%%%%%%%%%%%%%%%%%%%%%%%%%%%%%%

Under posted price models, users are price-takers who choose a SLA according to~\eqref{equa-choice}. The determination of optimal SLA prices $\boldsymbol{\theta}$ is needed to steer each user to the most suitable SLA for revenue maximization. %Once the Problem \ref{prob1} is solved optimally, we could derive the maximum revenue of a CSP under an arbitrary, \fdp{pre-assigned}, tuple $\boldsymbol{\theta}$ of SLAs prices. Here, the optimal capacity planning and admission control are realized, and $\boldsymbol{m^{*}}$ and $\boldsymbol{\Lambda^{*}}$ are linked to the given prices of SLAs.
Once Problem \ref{prob1} is solved optimally, our objective function (\ref{equa-max}) could be expressed as a function of $\boldsymbol{\theta}$:
\begin{align}\label{fun-optimal-prices}
\mathcal{V}\left( \boldsymbol{m}, \boldsymbol{\Lambda^{+}}, \boldsymbol{\theta} \right) = \mathcal{V}\left( \boldsymbol{m(\theta)}, \boldsymbol{\Lambda^{+}(\theta)}, \boldsymbol{\theta} \right);
\end{align}
%Given $\boldsymbol{\Lambda^{*}}$ and $\boldsymbol{m^{*}}$, like (\ref{equa-rate-l}), we could derive the server utilization at the $l$-th processing unit, denoted by $\rho_{l}^{*}$. Now, the decision variable is the price vector $\boldsymbol{\theta}$. The optimal pricing problem writes:
this means that, currently, we could derive the CSP's maximum revenue once the optimal SLA prices are derived; here, the decision variable is the price vector $\boldsymbol{\theta}$. Let $\Theta$ denote all possible tuples of SLA prices, i.e., $\Theta = \{\boldsymbol{\theta}\, |\, \theta_{1}, \cdots, \theta_{L}\in \mathcal{R}^{+} \}$.
%\begin{center}
%$\Theta = \{\boldsymbol{\theta}\, |\, \theta_{1}, \cdots, \theta_{L}\in \mathcal{R}^{+} \}$.
%\end{center}
The optimal pricing problem writes:
\begin{problem}\label{prob2}
%Let $\boldsymbol{\rho^{*}}=(\rho_{1}^{*}, \ldots, \rho_{L}^{*})$ the server load vector corresponding to the optimal solution of~Problem~\ref{prob1}, the optimal price vector is defined as
The optimal price vector is defined as
\begin{equation}\label{equa-pp}
  \boldsymbol{\theta}^{*}:=\arg\max\nolimits_{\boldsymbol{\theta}\in\Theta} \; \text{function}\; (\ref{fun-optimal-prices}),
%  \boldsymbol{\theta}^{*}:=\arg\max_{\boldsymbol{\theta}} \; \boldsymbol{m^{*}(\boldsymbol{\theta})}\, \mbox{diag}(\boldsymbol{\theta})\boldsymbol{\rho^{*}(\boldsymbol{\theta})}^{T},
\end{equation}
subject to the condition \eqref{equa-choice}, i.e., how the job rate for each SLA is affected by price vector $\boldsymbol{\theta}$.
\end{problem}

Recall that there are $K$ users and $\mathcal{C}=\{1, 2, \cdots, K\}$. Under an arbitrary tuple of SLA prices $\boldsymbol{\theta}\in\Theta$, the $l$-th SLA is chosen by a subset of users $\mathcal{C}_{l}\subseteq\mathcal{C}$. In the case where $\mathcal{C}_{l}$ is empty, no users will choose the $l$-th SLA; in turn, this price could be equivalently viewed to be $\infty$. Based on such observations, we consider a corresponding subset $\mathcal{S} = \{  l_{1}, l_{2}, \cdots,$ $l_{L^{\prime}}  \}$ of $\{1, 2, \cdots, L\}$, and let $\mathcal{S}$ define a subset of SLAs to be chosen by users: each of the $l_{1}$-th, $\cdots$, $l_{L^{\prime}}$-th SLAs will be chosen by some users while no users chooses the other SLAs, that is, $\mathcal{C}_{l}\neq \emptyset$ for all $l\in\mathcal{S}$ and $\mathcal{C}_{l}=\emptyset$ for all $l\in\{1, 2, \cdots, L\}-\mathcal{S}$. We denote by $\Theta_{\mathcal{S}}$ all price tuples such that, under every tuple of SLA prices $\theta\in\Theta_{\mathcal{S}}$, the subset of SLAs chosen by users is defined by the subset $\mathcal{S}$. As a result, all possible tuples of SLA prices are grouped by users' choices of SLAs and each group is distinguished by the $\mathcal{S}$ defined above, denoted by $\Theta_{\mathcal{S}}$, that is,
\begin{center}
$\Theta = \bigcup_{\mathcal{S}\subseteq\{1, 2, \cdots, L\}}{\Theta_{\mathcal{S}}}$
\end{center}

Our subsequent results in Section~\ref{sec.optimal-prices} show that only if we are given that subset $\mathcal{S}$ of $\{1, 2, \cdots, L\}$ that is used to define which SLAs to be chosen by users as explained above, an analysis of the structural properties of user's choices could derive the corresponding optimal prices of SLAs, denoted by $\boldsymbol{\theta^{*}(\mathcal{S})}$, that maximizes (\ref{fun-optimal-prices}); in other words, among all $\boldsymbol{\theta}\in\Theta_{\mathcal{S}}$, (\ref{fun-optimal-prices}) achieves the highest value when $\boldsymbol{\theta} = \boldsymbol{\theta^{*}(\mathcal{S})}$, that is,
\begin{equation}\label{equa-pp-1}
  \boldsymbol{\theta^{*}(\mathcal{S})} := \arg\max\nolimits_{\boldsymbol{\theta}\in \Theta_{\mathcal{S}}}\; (\ref{fun-optimal-prices}).
%  \boldsymbol{\theta}^{*}:=\arg\max_{\boldsymbol{\theta}} \; \boldsymbol{m^{*}(\boldsymbol{\theta})}\, \mbox{diag}(\boldsymbol{\theta})\boldsymbol{\rho^{*}(\boldsymbol{\theta})}^{T},
\end{equation}
Here, the prices of the $l_{1}$-th, $\cdots$, $l_{L^{\prime}}$-th SLAs are optimally determined by the analysis and the prices of the other SLAs are trivially set to $\infty$ since no user chooses them. To finally maximize (\ref{fun-optimal-prices}), we only need to select such a subset $\mathcal{S}^{*}$ from all the subsets of $\{1, 2, \cdots, L\}$ that (\ref{fun-optimal-prices}) generates the highest value under $\boldsymbol{\theta^{*}(\mathcal{S}^{*})}$; formally, we simply denote the objective function $\mathcal{V}\left( \boldsymbol{m(\theta)}, \boldsymbol{\Lambda^{+}(\theta)}, \boldsymbol{\theta} \right)$ in (\ref{fun-optimal-prices}) by $\mathcal{G}(\boldsymbol{\theta})$ and have that
\begin{center}
$\mathcal{S}^{*} := \arg\max_{\mathcal{S}\subseteq\{1, 2, \cdots, L\}}\, \mathcal{G}(\boldsymbol{\theta^{*}(\mathcal{S})}) $.
\end{center}
Here, $\boldsymbol{\theta^{*}(\mathcal{S}^{*})}$ is an optimal solution to Problem~\ref{prob2}.

In the above process, the only question is how to derive the optimal SLA prices $\boldsymbol{\theta^{*}(\mathcal{S})}$, and we will in Section~\ref{sec.optimal-prices} solve (\ref{equa-pp-1}) optimally.
%It is equivalent to the following question: when each of the $l_{1}^{\prime}$-th, $\cdots$, $l_{L^{\prime}}^{\prime}$-th SLAs will be chosen by some users and no users will choose the other SLAs, what is the optimal prices of SLAs? In the following, without loss of generality, we consider the case that each of the $L$ SLAs will be chosen by some users and propose an algorithm to determine the optimal prices of SLAs $\theta_{1}^{*}$, $\theta_{2}^{*}$, $\cdots$, $\theta_{L}^{*}$. Such an algorithm is also applicable to the case that only a subset of SLAs will be chosen by users; here, the prices of the other SLAs are set to $\infty$ trivially.

%%%%%%%%%%%%%%%%%%%%%%%%%%%%%%%%%%%%%%%%%%%%%%%%%%%%%%%%%%%%%%%%%%%%%%%%%%%%%%%%%%%%%%%%%%%%%%%%%%%%%%%%%%%%%%%%%%%%%%%%
%%%%%%%%%%%%%%%%%%%%%%%%%%%%%%%%%%%%%%%%%%%%%%%%%%%%%%%%%%%%%%%%%%%%%%
\section{Revenue Maximization}\label{sec.optimization}
%%%%%%%%%%%%%%%%%%%%%%%%%%%%%%%%%%%%%%%%%%%%%%%%%%%%%%%%%%%%%%%%%%%%%%
%%%%%%%%%%%%%%%%%%%%%%%%%%%%%%%%%%%%%%%%%%%%%%%%%%%%%%%%%%%%%%%%%%%%%%%%%%%%%%%%%%%%%%%%%%%%%%%%%%%%%%%%%%%%%%%%%%%%%%%%

In this section, we maximize the revenue of a CSP given its capacity, optimally solving the two problems in Section~\ref{sec.system-optimizations}.
%%%%%%%%%%%%%%%%%%%%%%%%%%%%%%%%%%%%%%%%%%%%%%%%%%%%%%%%%%%%
\subsection{Capacity Planning}
\label{sec.capacity-planning}
%%%%%%%%%%%%%%%%%%%%%%%%%%%%%%%%%%%%%%%%%%%%%%%%%%%%%%%%%%%%

%\fdp{As stated in Section~\ref{sec.system-optimizations}, given the prices of different SLAs, the arrival rate of jobs at each processing unit is fixed by the admission policy.}
In this subsection, we give an optimal solution to Problem~\ref{prob1}, producing an optimal capacity planning scheme in the case that the SLA prices are pre-assigned.
%\fdp{By Property~\ref{lemma-dispatching} and Lemma~\ref{lemma-max-single-rate}, the minimum number of servers needed at the $l$-th processing unit in order to fulfill the $l$-th SLA with a job's arrival rate $\Lambda_{l}$ in \eqref{equa-rate} is $\lceil \Lambda_{l}/\lambda_{max}^{(l)} \rceil$, i.e., the minimum integer not smaller than $\Lambda_{l}/\lambda_{max}^{(l)}$.}

Under pre-assigned SLA prices, the job arrival rate of users who choose the $l$-th SLA is $\Lambda_{l}$; we have by Lemma~\ref{lemma-max-rate} that the minimum number of servers needed to process all the arriving jobs at the $l$-th processing unit is $\lceil \Lambda_{l}/\lambda_{max}^{(l)} \rceil$. In the case that $\sum_{l=1}^{L}{\lceil \Lambda_{l}/\lambda_{max}^{(l)} \rceil}\leq m$, there are enough servers to process all jobs arriving at different units. In this case, an optimal solution to Problem~\ref{prob1} is as follows: $\Lambda_{l}^{+}=\Lambda_{l}$ and $m_{l} = \lceil \Lambda_{l}/\lambda_{max}^{(l)} \rceil$; it means accepting all jobs that arrive at each processing unit and assigning $m_{l}$ servers to the $l$-th unit for all $l\in [1, L]$.
%Further, the maximum processing capacity of the $l$-th processing unit is $m_{l}\cdot \lambda_{max}^{(l)}$ by Lemma~\ref{lemma-max-rate} and has the ability to process all the arriving jobs since $m_{l}\cdot \lambda_{max}^{(l)}\geq \Lambda_{l}$. Thus, the optimal capacity planning scheme in this case is to accept all the jobs that arrive at each processing unit, i.e., $\Lambda_{l}^{+}=\Lambda_{l}$ and assign $m_{l}$ servers to the $l$-th unit for all $l\in [1, L]$.

In the rest of this subsection, we study the opposite case where
\begin{equation}\label{equa-case}
\sum\nolimits_{l=1}^{L}{\lceil \Lambda_{l}/\lambda_{max}^{(l)} \rceil}> m;
\end{equation}
by Lemma~\ref{lemma-max-rate}, this means that, $m$ servers are not enough to process all jobs that arrive at different units while satisfying their QoS requirements. Here, we also note that, the following solving process itself is a purely mathematical derivation, although we sometimes additionally explain the physical meaning of some results for better perceiving their intuitions.

Let $\overline{m}_{l}$ be the maximum integer such that $\overline{m}_{l}\cdot \lambda_{max}^{(l)}\leq \Lambda_{l}$, i.e., $\overline{m}_{l} = \lfloor \Lambda_{l}/\lambda_{max}^{(l)} \rfloor$, and let $\lambda_{l}^{\prime} = \Lambda_{l} - \lambda_{max}^{(l)}\cdot \overline{m}_{l}$; here, we have
\begin{align}
& \Lambda_{l} = \overline{m}_{l}\cdot \lambda_{max}^{(l)} + \lambda_{l}^{\prime}\label{floor-1} \\
& 0\leq \lambda_{l}^{\prime}<\lambda_{max}^{(l)}\label{floor-2}.
\end{align}
%In an optimal solution to Problem~\ref{prob1}, some relationship between $\boldsymbol{m}$ and $\boldsymbol{\Lambda^{+}}$ is satisfied.
\begin{lemma}\label{lemma-optimal-structure-1}
In an optimal solution to Problem~\ref{prob1}, some relation between the two decision variables $\Lambda_{l}^{+}$ and $m_{l}$ is satisfied:
\begin{enumerate}
\item if $m_{l}\cdot\lambda_{max}^{(l)}\leq \Lambda_{l}$, then $\Lambda_{l}^{+}=m_{l}\cdot\lambda_{max}^{(l)}$ where $m_{l}\leq \overline{m}_{l}$.
\item if $m_{l}\cdot\lambda_{max}^{(l)}>\Lambda_{l}$, then $\Lambda_{l}^{+}=\Lambda_{l}$ and $m_{l}=\overline{m}_{l}+1$; in this case, we also have $\lambda_{l}^{\prime} > 0$.
\end{enumerate}
\end{lemma}
%\begin{proof}
%It can be found in the appendix.
%\end{proof}

%\begin{proof}%[Proof of Lemma~\ref{lemma-optimal-structure-1}]
%Our objective is maximizing (\ref{equa-max}); in terms of the decision variable $\Lambda_{l}^{+}$, the only constraints are \eqref{equa-constraint-1}, and \eqref{constraint-3}. Hence, the optimal $\Lambda_{l}^{+}$ is $\min\{ \Lambda_{l} ,\, m_{l}\cdot \lambda_{max}^{(l)} \}$; as a result, the lemma holds.
%\end{proof}

The physical meaning of Lemma~\ref{lemma-optimal-structure-1} is as follows. Under the $l$-th SLA, every server has a limited processing capacity, and, given the $m_{l}$ servers at the $l$-th unit, the maximum possible rate of accepting jobs is $m_{l}\cdot \lambda_{max}^{(l)}$, by Lemma~\ref{lemma-max-rate}. In the case that the total job arrival rate of users who choose the $l$-th SLA is $\leq m_{l}\cdot \lambda_{max}^{(l)}$, the $l$-th unit will accept all the arriving jobs in order to maximize the revenue that these $m_{l}$ servers could generate in a unit of time, i.e., $\Lambda_{l}^{+}=\Lambda_{l}$; otherwise, the optimal rate of accepting jobs $\Lambda_{l}^{+}$ is $m_{l}\cdot \lambda_{max}^{(l)}$. In both the cases, every unit $l$ is assigned $m_{l}$ servers, that is, the minimum number of servers needed to process the accepted jobs under the $l$-th SLA, by Lemma~\ref{lemma-max-rate}.

In this subsection, we will finally give an optimal solution to Problem~\ref{prob1}. Next, we will combine the relation of $\Lambda_{l}^{+}$ and $m_{l}$ in Lemma~\ref{lemma-optimal-structure-1} into constraints \eqref{equa-num-machine}, \eqref{equa-constraint-1}, and (\ref{constraint-3}). We hence reduce the original problem to a series of simpler mathematical problems. This provides better insight into the structure of optimal solution to Problem~\ref{prob1}.

These reductions are mainly built on the following observation, describing the relations of the original and reduced problems. Suppose an (original) problem $A$ aims to maximize the function $\mathcal{G}(x)$, subject to $x\in\mathcal{X}$, where $\mathcal{X}$ denotes a set of all feasible solutions; an optimal solution is a $x^{*}\in\mathcal{X}$ such that $\mathcal{G}(x^{*})\geq \mathcal{G}(x)$ for all $x\in\mathcal{X}$. Given a set $\mathcal{Y}$, assume there is a mapping $f$ from $\mathcal{Y}$ to $\mathcal{X}$, i.e., $x=f(y)\in \mathcal{X}$ for all $y\in\mathcal{Y}$. The (reduced) problem $B$ aims to maximize $G(f(y))$ subject to $y\in\mathcal{Y}$.
\begin{observation}\label{lemma-reduction-AA}
Suppose this mapping $f$ is a surjection, i.e., for any $x\in\mathcal{X}$, there exists a $y\in\mathcal{Y}$ such that $x=f(y)$. An optimal solution $y^{*}$ to the problem $B$, corresponds to, an optimal solution $x^{*}$ to the problem $A$, where $x^{*}=f(y^{*})$.
\end{observation}
%\begin{proof}
%It can be found in the appendix.
%\end{proof}

In Problem~\ref{prob1}, the constraints define a set of feasible solutions $\left( \boldsymbol{m},\, \boldsymbol{\Lambda^{+}} \right)$ to it and, now, we begin reducing Problem~\ref{prob1}.

\begin{observation}\label{transformation-1}
We aim to find an optimal solution to Problem~\ref{prob1}, and assume that the relation between the two variables $m_{l}$ and $\Lambda_{l}^{+}$ satisfies Lemma~\ref{lemma-optimal-structure-1}, as well as the constraints \eqref{equa-num-machine}, \eqref{equa-constraint-1}, and (\ref{constraint-3}).

Let $x_{l}$ and $y_{l}$ denote two integer variables that satisfy the following constraints:
\begin{equation}\label{constraint-4}
\begin{split}
%& 0\leq x_{l}\leq \overline{m}_{l},\\
% y_{l}\in \{0, 1\} \, \text{if} \, ( x_{l}= & \overline{m}_{l} \wedge \lambda_{l}^{\prime} > 0 ), \text{and} \, y_{l}=0 \, \text{otherwise}.
&\hskip-2mm 0\leq x_{l}\leq \overline{m}_{l},\\
&\hskip-2mm y_{l}\in \{0, 1\} \, \text{if} \, \left( x_{l}=\overline{m}_{l} \wedge \lambda_{l}^{\prime} > 0 \right),\, \text{and} \, y_{l}=0 \, \text{otherwise},
\end{split}
\end{equation}
and,
\begin{align}
 \sum\nolimits_{l=1}^{L}{(x_{l} + y_{l})} \leq m. \label{constraint-2-1}
%& x_{l}\cdot \lambda_{max}^{(l)} + y_{l}\cdot \lambda_{l}^{\prime} \leq \Lambda_{l}\label{constraint-2-2}%\\
%& x_{l}\cdot \lambda_{max}^{(l)} + y_{l}\cdot \lambda_{l}^{\prime} \leq \lambda_{max}^{(l)}\cdot (x_{l} + y_{l})
\end{align}
There exists a surjection from $(x_{l}, y_{l})$ to $(m_{l}, \Lambda_{l}^{+})$ as follows:
%such that their relationship with $m_{l}$ and $\Lambda_{l}^{+}$ is as follows:
\begin{equation}\label{constraint-2}
\begin{bmatrix}
   m_{l}  &   \Lambda_{l}^{+}\\
\end{bmatrix}
=
\begin{bmatrix}
   x_{l}  &   y_{l} \\
\end{bmatrix}\times A,
%\begin{split}
%& m_{l} = x_{l} + y_{l},  \\
%& \Lambda_{l}^{+} = x_{l}\cdot \lambda_{max}^{(l)} + y_{l}\cdot \lambda_{l}^{\prime}.
%& 0\leq x_{l}\leq \overline{m}_{l},\\
%& y_{l}\in \{0, 1\} \, \text{if} \, x_{l}=\overline{m}_{l}, \, \text{and} \, y_{l}=0 \, \text{otherwise}
%\end{split}
\end{equation}
where $A$ is a matrix as follows:
\begin{equation*}
A =
\begin{bmatrix}
   1   & \lambda_{max}^{(l)} \\
   1   & \lambda_{l}^{\prime}
\end{bmatrix}
\end{equation*}
\end{observation}
%\begin{proof}
%It can be found in the appendix.
%\end{proof}

%The constraints \eqref{equa-num-machine}, \eqref{equa-constraint-1}, and (\ref{constraint-3}) are rewritten as follows.
%\begin{align}
% \sum_{l=1}^{L}{(x_{l} + y_{l})} \leq m,\label{constraint-2-1}
%\end{align}
%\fdp{The constraint \eqref{constraint-2-1} corresponds} to \eqref{equa-num-machine}; constraint \eqref{constraint-3} holds automatically since $\Lambda_{l}^{+} = x_{l}\cdot \lambda_{max}^{(l)} + y_{l}\cdot \lambda_{l}^{\prime}$ and $m_{l}=x_{l}+y_{l}$; further, constraint \eqref{equa-constraint-1} holds under \eqref{constraint-4}. As a result, we have that
\begin{problem}\label{prob22}
Problem~\ref{prob1} is rewritten as:
\begin{center}
maximize\enskip $\sum\nolimits_{l=1}^{L}{x_{l}\cdot (\theta_{l}\cdot \lambda_{max}^{(l)}) + y_{l}\cdot (\theta_{l}\cdot \lambda_{l}^{\prime})}$,
\end{center}
with $x_{1}, \cdots x_{L}$ and $y_{1}, \cdots y_{L}$ as decision variables and subject to the constraints \eqref{constraint-4}, and \eqref{constraint-2-1}.
\end{problem}

%Under Lemma~\ref{transformation-1}, a feasible solution to Problem~\ref{prob1} (that satisfies the constraints in Lemma~\ref{lemma-optimal-structure-1}, \eqref{equa-num-machine}, \eqref{equa-constraint-1}, and (\ref{constraint-3})), uniquely corresponds to, a feasible solution to Problem~\ref{prob22}, and vice versa; under these two solutions, the objective functions in the two optimization problems generate the same value, due to the relationship (\ref{constraint-2}) where $\Lambda_{l}^{+} = x_{l}\cdot \lambda_{max}^{(l)} + y_{l}\cdot \lambda_{l}^{\prime}$ and $m_{l}=x_{l}+y_{l}$. We thus conclude that

By Observation~\ref{transformation-1}, the relation (\ref{constraint-2}) defines a surjection from $(x_{l}, y_{l})$ to $(m_{l}, \Lambda_{l}^{+})$ in the case that the related constraints are satisfied. Due to Observation~\ref{lemma-reduction-AA}, we conclude that
\begin{lemma}[Equivalence]\label{lemma-relationship-2}
An optimal solution to Problem~\ref{prob22} corresponds to an optimal solution to Problem~\ref{prob1} where these two optimal solutions satisfy (\ref{constraint-2}).
\end{lemma}

Recall the definition of $\overline{m}_{l}$. In the case that $\overline{m}_{l}=0$ where $\Lambda_{l}<\lambda_{max}^{(l)}$, due to constraint (\ref{constraint-4}) alone, we derive that the variable $x_{l}$ is zero in any feasible solution to Problem~\ref{prob22};
We denote by $\mathcal{A}$ a subset of $\{1, 2, \cdots, L\}$ such that
\begin{enumerate}
\item [(\rmnum{1})] $\overline{m}_{l}>0$ for all $l\in \mathcal{A}$, and

\item [(\rmnum{2})] $\overline{m}_{l}=0$ for all $l\in \{1, 2, \cdots, L\}-\mathcal{A}$.
\end{enumerate}
%Following the above discussion, we have that
%\begin{lemma}\label{lemma-reduction-1}
%For all $l\in \{1, 2, \cdots, L\}-\mathcal{A}$, the value of the decision variable $x_{l}$ in any feasible solution to Problem~\ref{prob22} is zero.
%\end{lemma}
In the case that $\lambda_{l}^{\prime}=0$ (i.e., $\Lambda_{l}=\overline{m}_{l}\cdot \lambda_{max}^{(l)}$), due to (\ref{constraint-4}) alone, we derive that $y_{l}=0$. Similarly, we denote by $\mathcal{B}$ a subset of $\{1, 2, \cdots, L\}$ such that
\begin{enumerate}
\item [(\rmnum{1})] $\lambda_{l}^{\prime}>0$ for all $l\in \mathcal{B}$, and
%we have $y_{l}\in \{0, 1\}$, and

\item [(\rmnum{2})] $\lambda_{l}^{\prime}=0$ for all $l\in \{1, 2, \cdots, L\}-\mathcal{B}$.
\end{enumerate}
%Following the above discussion, we have that,
%\begin{lemma}\label{lemma-reduction-2}
%For all $l\in \{1, 2, \cdots, L\}-\mathcal{B}$, the value of the decision variable $y_{l}$ in any feasible solution to Problem~\ref{prob22} is zero.
%\end{lemma}

\begin{observation}\label{transformation-3}
There exists a surjection from $\big( \{ x_{l} \}_{l\in\mathcal{A}},\,$ $\{ y_{l} \}_{l\in\mathcal{B}} \big)$ to $\left( \{ x_{l} \}_{l=1}^{L},\, \{ y_{l} \}_{l=1}^{L} \right)$ as follows:
\begin{equation}\label{equation-3}
\begin{split}
& x_{l} = x_{l}\; \text{for all}\, l\in\mathcal{A},\; \text{and},\; y_{l} = y_{l}\; \text{for all}\, l\in\mathcal{B}\\
& x_{l} = 0\; \text{for all}\, l\in\{1, 2, \cdots, L\} - \mathcal{A}\\
& y_{l} = 0\; \text{for all}\, l\in \{1, 2, \cdots, L\} - \mathcal{B}
\end{split}
\end{equation}
where $\left( \{ x_{l} \}_{l=1}^{L},\, \{ y_{l} \}_{l=1}^{L} \right)$ is subject to \eqref{constraint-4} and \eqref{constraint-2-1}; and $( \{ x_{l} \}_{l\in\mathcal{A}},\,$  $\{ y_{l} \}_{l\in\mathcal{B}} )$ is subject to
\begin{equation}\label{constraint-4-2}
\begin{split}
%& 0\leq x_{l}\leq \overline{m}_{l},\\
% y_{l}\in \{0, 1\} \, \text{if} \, ( x_{l}= & \overline{m}_{l} \wedge \lambda_{l}^{\prime} > 0 ), \text{and} \, y_{l}=0 \, \text{otherwise}.
&\hskip-2mm 0\leq x_{l}\leq \overline{m}_{l},\; \text{for all}\, l\in\mathcal{A},\\
&\hskip-2mm y_{l}\in \{0, 1\} \, \text{if} \, x_{l}=\overline{m}_{l}, \text{and} \, y_{l}=0 \, \text{otherwise},\; \text{for all}\, l\in\mathcal{B},
\end{split}
\end{equation}
and,
\begin{align}
 \sum\nolimits_{l\in\mathcal{A}}{x_{l}}  +  \sum\nolimits_{l\in\mathcal{B}}{y_{l}} \leq m. \label{constraint-2-2}
%& x_{l}\cdot \lambda_{max}^{(l)} + y_{l}\cdot \lambda_{l}^{\prime} \leq \Lambda_{l}\label{constraint-2-2}%\\
%& x_{l}\cdot \lambda_{max}^{(l)} + y_{l}\cdot \lambda_{l}^{\prime} \leq \lambda_{max}^{(l)}\cdot (x_{l} + y_{l})
\end{align}
\end{observation}
%\begin{proof}
%It can be found in the appendix.
%\end{proof}

\begin{problem}\label{prob44}
Problem~\ref{prob22} is rewritten as:
\begin{center}
maximize\enskip $\sum\nolimits_{l\in\mathcal{A}}{x_{l}\cdot (\theta_{l}\cdot \lambda_{max}^{(l)})} + \sum\nolimits_{l\in\mathcal{B}}{y_{l}\cdot (\theta_{l}\cdot \lambda_{l}^{\prime})}$,
\end{center}
subject to constraints (\ref{constraint-4-2}) and (\ref{constraint-2-2}).
%\begin{equation}\label{constraint-4-2}
%\begin{split}
%&\hskip-2mm 0\leq x_{l}\leq \overline{m}_{l}\; \text{for all}\, l\in\mathcal{A},\\
%&\hskip-2mm y_{l}\in \{0, 1\} \, \text{if} \, x_{l}=\overline{m}_{l}, \text{and} \, y_{l}=0 \, \text{otherwise}, \text{for all}\, l\in\mathcal{B},
%\end{split}
%\end{equation}
%and,
%\begin{align}
% \sum_{l\in\mathcal{A}}{x_{l}}  +  \sum_{l\in\mathcal{B}}{y_{l}} \leq m. \label{constraint-2-2}
%\end{align}
\end{problem}

Due to Observations~\ref{lemma-reduction-AA} and~\ref{transformation-3}, we conclude that
\begin{lemma}[Equivalence] \label{lemma-relationship-4}
An optimal solution to Problem~\ref{prob44} corresponds to an optimal solution to Problem~\ref{prob22} where these two optimal solutions satisfy (\ref{equation-3}).
\end{lemma}

%Here, any feasible solution to Problem~\ref{prob44} is also a feasible solution to Problem~\ref{prob22} when complemented by Lemmas~\ref{lemma-reduction-1} and~\ref{lemma-reduction-2}; any feasible solution to Problem~\ref{prob22} is also a feasible solution to Problem~\ref{prob44} after excluding the variables in Lemmas~\ref{lemma-reduction-1} and~\ref{lemma-reduction-2}; In the two optimization problems, the objective functions achieve the same value under the solutions that mutually corresponds to each other.

Now, we proceed to reduce Problem~\ref{prob44} to another problem, whose optimal solution is observable.% here, the variable $x_{l}$ in Problem~\ref{prob22} is replaced with the sum of $\overline{m}_{l}$ variables $x_{l,0}, \cdots, x_{l,\overline{m}_{l}-1}$, i.e., $x_{l} = \sum_{j=0}^{\overline{m}_{l}-1}{x_{l,j}}$, where $x_{l,j}\in \{0, 1\}$.
%The constraints \eqref{constraint-4} and \eqref{constraint-2-1} \fdp{are expressed as}
%\begin{align}\label{constraint-3-1}
% \sum\limits_{l=1}^{L}{\Big (y_{l} + \sum_{j=1}^{\overline{m}_{l}}{x_{l,j}} \Big )}\leq m
%\end{align}
%\begin{equation}\label{constraint-3-2}
%\begin{split}
%&  y_{l}\in \{0, 1\} \; \text{if} \; \sum_{j=1}^{\overline{m}_{l}}{x_{l,j}}=\overline{m}_{l} \wedge \lambda_{l}^{\prime} > 0, \\ & \text{and}, \, y_{l}=0 \, \text{otherwise}.
%\end{split}
%\end{equation}

\begin{observation}\label{transformation-2}
Let the decision variables $x_{l}$ and $y_{l}$ satisfy the constraints \eqref{constraint-4-2}, and \eqref{constraint-2-2}; let $x_{l,1}, \cdots, x_{l, \overline{m}_{l}}$ and $y_{l}$ satisfy
\begin{equation}\label{constraint-bit}
x_{l,1}, \cdots, x_{l, \overline{m}_{l}}\in \{0, 1\}\; \text{for all}\; l\in\mathcal{A}.
\end{equation}
\begin{equation}\label{constraint-3-2}
\begin{split}
%& \sum\limits_{l=1}^{L}{\left( y_{l} + \sum_{j=1}^{\overline{m}_{l}}{x_{l,j}} \right)}\leq m \\
 y_{l}\in \{0, 1\} \; \text{if} \; \sum\nolimits_{j=1}^{\overline{m}_{l}}{x_{l,j}}=\overline{m}_{l},  \, \text{and}, y_{l}=0 \, \text{otherwise}, & \\   \text{for all}\; l\in\mathcal{B}. &
\end{split}
\end{equation}
\begin{align}\label{constraint-3-1}
 \sum\nolimits_{l\in\mathcal{B}}{y_{l}} + \sum\nolimits_{l\in\mathcal{A}}{\sum\nolimits_{j=1}^{\overline{m}_{l}}{x_{l,j}} }\leq m.
%&  y_{l}\in \{0, 1\} \, \text{if} \, ( \sum_{j=1}^{\overline{m}_{l}}{x_{l,j}}=\overline{m}_{l} \wedge \lambda_{l}^{\prime} > 0 ), \, \text{and} \, y_{l}=0 \, \text{otherwise}.
\end{align}
There exists a surjection from $(x_{l,1}, \cdots, x_{l, \overline{m}_{l}}, y_{l})$ to $(x_{l}, y_{l})$ as follows:
\begin{equation}\label{relationship-3}
%\begin{split}
x_{l} = \sum\nolimits_{j=1}^{\overline{m}_{l}}{x_{l,j}},\; \text{and},\; y_{l}=y_{l}.
%\end{split}
\end{equation}
\end{observation}
%\begin{proof}
%It can be found in the appendix.
%\end{proof}

\begin{problem}\label{prob33}
Problem~\ref{prob44} is rewritten as:%can be rewritten as:
\begin{equation}\label{eq:eqobjfun}
maximize\enskip  \sum\limits_{l\in\mathcal{A}}{ \sum\limits_{j=1}^{\overline{m}_{l}}{x_{l,j}\cdot (\theta_{l}\cdot \lambda_{max}^{(l)})} } + \sum\limits_{l\in\mathcal{B}}{ y_{l}\cdot (\theta_{l}\cdot \lambda_{l}^{\prime})},
\end{equation}
%\begin{center}
%maximize\enskip $\sum\limits_{l=1}^{L}{\sum\limits_{j=1}^{\overline{m}_{l}}{x_{l,j}\cdot (\theta_{l}\cdot \lambda_{max}^{(l)})}}  + \sum\limits_{l=1}^{L}{y_{l}\cdot (\theta_{l}\cdot \lambda_{l}^{\prime})}$,
%\end{center}
with $x_{l,j}$ and $y_{1}, \cdots y_{L}$ as decision variables and subject to constraints \eqref{constraint-bit}, \eqref{constraint-3-1} and \eqref{constraint-3-2}.
\end{problem}

%Under Lemma~\ref{transformation-2}, a feasible solution to Problem~\ref{prob22} (that satisfies the constraints \eqref{constraint-4}, and \eqref{constraint-2-1}) corresponds to one or multiple feasible solutions to Problem~\ref{prob33}; here, the objective functions under the corresponding solutions achieves the same value as the objective function under the feasible solution to Problem~\ref{prob22}. On the other hand, a feasible solution to Problem~\ref{prob33} uniquely corresponds to one feasible solutions to Problem~\ref{prob22}; here, the objective functions in the two problems achieves the same value. Hence, we conclude that

Due to Observations~\ref{lemma-reduction-AA} and~\ref{transformation-2}, we conclude that
\begin{lemma}[Equivalence] \label{lemma-relationship-3}
An optimal solution to Problem~\ref{prob33} corresponds to an optimal solution to Problem~\ref{prob44} where these two optimal solutions satisfy (\ref{relationship-3}).
\end{lemma}

In the following, we solve Problem~\ref{prob33}.
We only need to determine the variables $x_{l,1}, \cdots, x_{l, \overline{m}_{l}}$ for all $l\in\mathcal{A}$ and the variable $y_{l}$ for all $l\in\mathcal{B}$.  For the coefficient of each variable $x_{l,j}$, we denote it by $\zeta_{l,j}^{(1)}$; for the coefficient of $y_{l}$, denote it by $\zeta_{l}^{(2)}$; here, we have  $\zeta_{l,j}^{(1)}> \zeta_{l}^{(2)}$ since $(\theta_{l}\cdot \lambda_{max}^{(l)}) > (\theta_{l}\cdot \lambda_{l}^{\prime})$. The objective function in Problem (\ref{prob33}) is equivalent to the following one:
\begin{equation}\label{eq:eqobjfun-1}
\sum\nolimits_{l\in\mathcal{A}}{ \sum\nolimits_{j=1}^{\overline{m}_{l}}{x_{l,j}\cdot \zeta_{l,j}^{(1)}} }  +  \sum\nolimits_{l\in\mathcal{B}}{ y_{l}\cdot \zeta_{l}^{(2)}}
\end{equation}

%\begin{theorem}
%Excluding the variables of $y_{1}, \cdots y_{L}$ whose values have to be zero by constraint \eqref{constraint-3-2}, the optimal solution to Problem~\ref{prob33} is obtained by simply selecting $m$ variables from the remaining variables of $x_{l,j}$ and $y_{1}, \cdots y_{L}$ who are associated with the largest coefficients and set the values of the chosen variables ({\em resp.} the other variables) to one ({\em resp.} zero).
%\end{theorem}
\begin{theorem}\label{theorem-capacity-planning-2}
Algorithm~\ref{algo-prob33-solution} produces an optimal solution to Problem~\ref{prob33}.
%\fdp{Let the set of $y_l$ variables which are zero due to constraint \eqref{constraint-3-2} be fixed, according to a tagged optimal solution to Problem~\ref{prob33}. Consider the only the remaining variables $x_{l,j}$ and $y_{1}$: sort them according to decreasing values of the corresponding coefficients $\{\theta_{l}\cdot \lambda_{max}^{(l)}\}$ appearing in~\eqref{eq:eqobjfun}. Then, the optimal solution is obtained by setting to one the first $m$ of those and to zero all the others.}
\end{theorem}
%\begin{proof}
%It can be found in the appendix.
%\end{proof}

\begin{theorem}\label{theorem-capacity-planning-1}
Algorithm~\ref{algo-prob-solution} produces an optimal solution to Problem~\ref{prob1}.
\end{theorem}
%\begin{proof}
%It can be found in the appendix.
%\end{proof}

%\begin{remark}
%  We observe that in the solution of the problem in the pre-assigned posted prices case in general is not unique. In fact, consider $L=2$, $m=2$, $\lambda_{max}^{(1)}=0.4$, $\lambda_{max}^{(2)}=0.2$, $\theta_1=0.5$, $\theta_2=1$, $\Lambda_1=0.8$ and $\Lambda_2=0.4$, and $\mu=\omega=1$. Then, both the solutions \rmnum{1}) $\Lambda_1^*=0.8$, $m_1=2$, $\Lambda_2^*=$, $m_2=0$, and \rmnum{2}) $\Lambda_1^*=0.4$, $m_1=1$, $\Lambda_2^*=0.2$, $m_2=1$ are optimal.
%\end{remark}

\begin{algorithm}[t]
%\DontPrintSemicolon
\SetKwInOut{Begin}{Begin}
\SetKwInOut{Input}{Input}
\SetKwInOut{Output}{Output}
%\footnotesize{}

%\Input{ a solution to (\ref{equa-equivalence}): $\mathcal{M}^{\prime} = \mathcal{M}_{1}^{\prime}\cup \cdots \cup \mathcal{M}_{L}^{\prime}$ }

%\Output{ a feasible solution to Problem~\ref{prob33}}

%$\Lambda_{k}^{+}\leftarrow \sum_{\widetilde{M}_{j}\in\mathcal{M}_{l}^{\prime}}{\widetilde{\Lambda}_{l}}$, \enskip $m_{l}\leftarrow |\mathcal{M}_{l}^{\prime}|$\;

select $m$ variables from $ x_{l,1}, \cdots, x_{l,\overline{m}_{l}}$ for all $l\in\mathcal{A}$ and $y_{l}$ for all $l\in\mathcal{B}$ such that the chosen variables are of the largest coefficients among all the variables in the objective function \eqref{eq:eqobjfun} or \eqref{eq:eqobjfun-1}, denoting these chosen variables by $\mathcal{D}$\;

set the values of the chosen variables to 1, and the values of the other variables to 0\;

%initialize: $m^{\prime}\leftarrow m$, $\Lambda_{k}^{+}\leftarrow 0$, $m_{k}\leftarrow 0$\;

%\For{$l_{0}\leftarrow 1$ \KwTo $m$}{

%      \For{$l\leftarrow 1$ \KwTo $L$}{

%          \If{($m^{\prime}>0$) $\wedge$ ($\widetilde{M}_{l_{0}}$ is associated with a price $\theta_{l}$)}{

              %\tcc{\footnotesize{the actual accepted job arrival rate and the number of assigned servers, at the $l$-th processing unit of the problem (\ref{equa-cp})}}
%              $\Lambda_{l}^{*}\leftarrow \Lambda_{l}^{*} + \widetilde{\Lambda}_{l_{0}}$, $m_{l}^{*}\leftarrow m_{l}^{*} + 1$\;

%              \tcc{\footnotesize{the number of servers currently unassigned out}}
%              $m^{\prime}\leftarrow m^{\prime} - 1$\;

%              }
%              }
%      }
\caption{An optimal solution to Problem~\ref{prob33}\label{algo-prob33-solution}}
\end{algorithm}

\begin{algorithm}[t]
%\DontPrintSemicolon
\SetKwInOut{Begin}{Begin}
\SetKwInOut{Input}{Input}
\SetKwInOut{Output}{Output}
%\footnotesize{}

%\Input{ a solution to (\ref{equa-equivalence}): $\mathcal{M}^{\prime} = \mathcal{M}_{1}^{\prime}\cup \cdots \cup \mathcal{M}_{L}^{\prime}$ }

%\Output{ a feasible solution to Problem~\ref{prob33}}
call Algorithm~\ref{algo-prob33-solution} to produce an optimal solution to Problem~\ref{prob33}\;

\tcc{\footnotesize{derive an optimal solution to Problem~\ref{prob44} by Lemma~\ref{lemma-relationship-3}}}

$x_{l} \leftarrow \sum_{j=1}^{\overline{m}_{l}}{x_{l,j}}$\, for all\, $l\in\mathcal{A}$\;

the value of $y_{l}$ in Problem~\ref{prob44} is equivalent to the value of $y_{l}$ in Problem~\ref{prob33}\;

\tcc{\footnotesize{derive an optimal solution to Problem~\ref{prob22} by Lemma~\ref{lemma-relationship-4}}}

$x_{l} \leftarrow 0$\, for all\, $l\in \{1, 2, \cdots, L\} - \mathcal{A}$\;

$y_{l} \leftarrow 0$\, for all\, $l\in \{1, 2, \cdots, L\} - \mathcal{B}$\;

\tcc{\footnotesize{derive an optimal solution to Problem~\ref{prob1} by Lemma~\ref{lemma-relationship-2}}}

$m_{l}\leftarrow x_{l}+y_{l}$\;%\tcp*{\footnotesize{the number of servers assigned to the $l$-th processing unit}}

$\Lambda_{l}^{+} \leftarrow \lambda_{max}^{(l)}\cdot x_{l} + \lambda_{l}^{\prime}\cdot y_{l}$\;%\tcp*{\footnotesize{the rate of accepting jobs}}

%initialize: $m^{\prime}\leftarrow m$, $\Lambda_{k}^{+}\leftarrow 0$, $m_{k}\leftarrow 0$\;

%\For{$l_{0}\leftarrow 1$ \KwTo $m$}{

%      \For{$l\leftarrow 1$ \KwTo $L$}{

%          \If{($m^{\prime}>0$) $\wedge$ ($\widetilde{M}_{l_{0}}$ is associated with a price $\theta_{l}$)}{

              %\tcc{\footnotesize{the actual accepted job arrival rate and the number of assigned servers, at the $l$-th processing unit of the problem (\ref{equa-cp})}}
%              $\Lambda_{l}^{*}\leftarrow \Lambda_{l}^{*} + \widetilde{\Lambda}_{l_{0}}$, $m_{l}^{*}\leftarrow m_{l}^{*} + 1$\;

%              \tcc{\footnotesize{the number of servers currently unassigned out}}
%              $m^{\prime}\leftarrow m^{\prime} - 1$\;

%              }
%              }
%      }
\caption{An optimal solution to Problem~\ref{prob33}\label{algo-prob-solution}}
\end{algorithm}

%\noindent
%The formal proof of Theorem~\ref{theorem-planning} is deferred to the Appendix.

\subsection{Prices of Different SLAs}
\label{sec.optimal-prices}

As elaborated at the end of Section~\ref{sec.prob-2}, we now solve (\ref{equa-pp-1}) optimally where $\mathcal{S}$ $=$ $\{l_{1}, \cdots, l_{L^{\prime}}\}$. The problem (\ref{equa-pp-1}) is as follows: when each of the $l_{1}$-th, $\cdots$, $l_{L^{\prime}}$-th SLAs will be chosen by some users and no users chooses the other SLAs, what is the optimal prices of SLAs? In the following, without loss of generality, we consider the case that each of the $L$ SLAs will be chosen by some users where $L^{\prime}=L$ and propose an algorithm to determine the optimal SLA prices $\theta_{1}^{*}$, $\theta_{2}^{*}$, $\cdots$, $\theta_{L}^{*}$. Such an algorithm is also applicable to the case that only a subset of SLAs will be chosen by users; here, the prices of the other SLAs are set to $\infty$ trivially.
%we in this subsection consider the case that each SLA will be chosen by a subset of users and derive the corresponding optimal prices of SLAs.
We also assume that $\alpha_{1}$ $>$ $\alpha_{2}>\cdots>\alpha_{K}$ for the coefficients appearing in Definition~\eqref{def-utility}.
As a result, the utility functions of all users satisfy the following relation:
\begin{align}\label{equa-monotonicity}
\mathcal{U}_{1}(\varphi) > \mathcal{U}_{2}(\varphi) > \cdots > \mathcal{U}_{K}(\varphi),\; \text{for any time} \; \varphi\in\mathcal{R}.
\end{align}

In the following, we identify some structural property on users' choices of SLAs under an arbitrary setting of the SLA prices; then, we consider the case that the particular structure in an optimal solution is known in advance and derive the corresponding SLA optimal prices. Next, we search all the feasible structures for the structure under which the CSP achieves the highest revenue; the optimal prices under such a structure will be the optimal SLA prices of this paper.

%We have the following structural properties of users' choices of SLAs.
\begin{lemma}\label{permutation}
%In any feasible solution for the prices of different SLAs
Under an arbitrary price vector $\boldsymbol{\theta}\in \Theta_{\mathcal{S}}$ where $\mathcal{S}=\{ 1, \cdots, L \}$, users behave as follows:
%Given the optimal prices $\theta_{1}^{*}, \cdots, \theta_{L}^{*}$ of the $L$ SLAs,
there exist integers $1\leq i_{1} < \cdots < i_{L} \leq K$ such that, for all $l\in [1, L]$, the $l$-th SLA is to be chosen by users $i_{l-1}+1$, $\cdots$, $i_{l}$, where $i_{0} = 0$.
\end{lemma}
\begin{proof}%[Proof of Lemma~\ref{permutation}]
Let $1\leq j_{1} < j_{2} < L$, and $i_{1}^{\prime}$ and $i_{2}^{\prime}$ denote two users who will respectively choose the $j_{1}$-th and $j_{2}$-th SLAs; in this case, we could observe that $i_{1}^{\prime}<i_{2}^{\prime}$. We prove this observation by contradiction. Otherwise, we have $i_{1}^{\prime} > i_{2}^{\prime}$; the choices of users $i_{1}^{\prime}$ and $i_{2}^{\prime}$ alone imply the following relations on their surplus under the $j_{1}$-th and $j_{2}$-th SLAs: (\rmnum{1}) $\mathcal{U}_{i_{1}^{\prime}}(\varphi_{j_{1}}) - \theta_{j_{1}} \geq \mathcal{U}_{i_{1}^{\prime}}(\varphi_{j_{2}}) - \theta_{j_{2}}$, and (\rmnum{2}) $\mathcal{U}_{i_{2}^{\prime}}(\varphi_{j_{1}}) - \theta_{j_{1}} \leq \mathcal{U}_{i_{2}^{\prime}}(\varphi_{j_{2}}) - \theta_{j_{2}}$.
%\begin{align}
%& \mathcal{U}_{i_{1}^{\prime}}(\varphi_{j_{1}}) - \theta_{j_{1}}^{*} > \mathcal{U}_{i_{1}^{\prime}}(\varphi_{j_{2}}) - \theta_{j_{2}}^{*}\label{price-3}, \\
%& \mathcal{U}_{i_{2}^{\prime}}(\varphi_{j_{1}}) - \theta_{j_{1}}^{*} < \mathcal{U}_{i_{2}^{\prime}}(\varphi_{j_{2}}) - \theta_{j_{2}}^{*}\label{price-4}.
%\end{align}
Multiplying the second inequality by -1 and adding it to the first inequality, we derive that
\begin{equation}\label{equa-contradiction-1}
\mathcal{U}_{i_{1}^{\prime}}(\varphi_{j_{1}}) - \mathcal{U}_{i_{1}^{\prime}}(\varphi_{j_{2}}) \geq \mathcal{U}_{i_{2}^{\prime}}(\varphi_{j_{1}}) - \mathcal{U}_{i_{2}^{\prime}}(\varphi_{j_{2}}).
\end{equation}
By Definition~\ref{def-utility}, the utility function satisfies Assumption~\ref{property-1}; by its 1st point, we have $\mathcal{U}^{\prime}(\varphi_{j_{1}}) > \mathcal{U}^{\prime}(\varphi_{j_{2}})>0$. Further, $\mathcal{U}_{i_{1}^{\prime}}(\varphi_{j_{1}}) - \mathcal{U}_{i_{2}^{\prime}}(\varphi_{j_{1}}) = (\alpha_{i_{1}^{\prime}}-\alpha_{i_{2}^{\prime}})\cdot \mathcal{U}^{\prime}(\varphi_{j_{1}}) < (\alpha_{i_{1}^{\prime}}-\alpha_{i_{2}^{\prime}})\cdot \mathcal{U}^{\prime}(\varphi_{j_{2}}) = \mathcal{U}_{i_{1}^{\prime}}(\varphi_{j_{2}}) - \mathcal{U}_{i_{2}^{\prime}}(\varphi_{j_{2}})$ where $\alpha_{i_{1}^{\prime}}-\alpha_{i_{2}^{\prime}}<0$ since $i_{1}^{\prime}>i_{2}^{\prime}$, which contradicts \eqref{equa-contradiction-1}. Now, we have completed proving the above observation.

Each SLA will be chosen by a subset of users. Now, consider the tagged users in the order of 1, 2, $\cdots$, $L$; let $i_{l}$ denote the last user who will choose the $l$-th SLA. Facing the $L$ SLAs, a user will finally choose some SLA if there exists a SLA under which its surplus is $>0$. Under the above observation, users $i_{l-1}+1$, $\cdots$, $i_{l}$ will only consider choosing the $l$-th SLA. The surplus of user $i_{l}$ under the $l$-th SLA is $>0$; so is the surplus of users $i_{l-1}+1$, $\cdots$, $i_{l}-1$ if $i_{l-1}+1$ $<$ $i_{l}$, by (\ref{equa-monotonicity}). Hence, all these users will choose the $l$-th SLA and Lemma~\ref{permutation} holds.
\end{proof}

A CSP's capacity is limited and possibly only a subset of users will get served; here, $i_{L} \leq K$. We denote by $\theta_{1}^{*}$, $\cdots$, $\theta_{L}^{*}$ an optimal solution for the SLA prices. Under such prices, we assume by Lemma~\ref{permutation} that the $l$-th SLA will be chosen by a subset of users $\mathcal{C}_{l}^{*}$ $=$ $\{i_{l-1}^{*}+1, \cdots, i_{l}^{*}\}$, which determines the total job arrival rate at the $l$-th processing unit, i.e., $\Lambda_{l}$ $=$ $\sum_{j=i_{l-1}^{*}+1}^{i_{l}^{*}}{h_{j}}$ as defined by (\ref{equa-rate}). With the SLA prices and job arrival rates, we could derive the maximum revenue (\ref{equa-max}) of a CSP by Algorithm~\ref{algo-prob-solution} that will determine the rate $\Lambda_{l}^{+}$ of accepting jobs at every unit.
%Now, once we could determine the prices of SLAs, we could determine the revenue (\ref{equa-max}) by Algorithm~\ref{algo-prob-solution}.

%Once we could determine the optimal pice, Algorithm~\ref{algo-prob-solution} could be used to determine the rate of accepting and processing jobs at the $l$-th unit $\Lambda_{l}^{+}$, deriving the maximum revenue that a CSP could achieve under the particular $\Lambda_{1}$, $\cdots$, $\Lambda_{L}$.

In the following, we analyze what are the optimal prices of SLAs to maximize the revenue (\ref{equa-max}) in the case that we know the users' choices of an optimal solution where the $l$-th SLA is chosen by users $\mathcal{C}_{l}^{*}$. In our subsequent analysis, if $|\mathcal{C}_{l}^{*}|=1$ where $\mathcal{C}_{l}^{*} = \{ i_{l}^{*} \}$, we assume that the maximum revenue is achieved when the $l$-th SLA are chosen by $\mathcal{C}_{l}^{*}$; however, if user $i_{l}^{*}$ changes its choice to the ($l+1$)-th SLA, a lower revenue (\ref{equa-max}) will be achieved by Algorithm~\ref{algo-prob-solution}. Such a particular assumption does not affect us to derive an optimal solution for the SLA prices under which the revenue (\ref{equa-max}) is maximized. The reason is as follows. Suppose that there exists some $l\in [1, \, L-1]$ such that the maximum revenue is generated no mater whether the user $i_{l}^{*}$ chooses the $l$-th or ($l+1$)-th SLA. For all such $l$, we consider a new case on users' choices: the user $i_{l}^{*}$ changes its choice to the ($l+1$)-th SLA and the choices of the other users do not change, after which no users chooses the $l$-th SLA. In this new case, the maximum revenue could still be achieved and it fulfills the above assumption although possibly not all the $L$ SLAs might be chosen by users where $|\mathcal{S}|\leq L$; only if we could give the corresponding optimal SLA prices where the CSP achieves the maximum revenue under such users' choices, we will have obtained an optimal solution for the SLA prices among all possible optimal solutions.

\begin{definition}\label{def-optip}
For all $l\in [1, L]$, let $\mathcal{U}_{i_{l}^{*}}^{-}$ $=$ $\mathcal{U}_{i_{l}^{*}}(\varphi_{l})$ $-$ $\mathcal{U}_{i_{l}^{*}}(\varphi_{l+1})$. We define parameter $\theta_{l}^{\prime}$ to be such that,
\begin{enumerate}
\item [(\rmnum{1})] $\theta_{L}^{\prime}$ is a value smaller than but close to $\mathcal{U}_{i_{L}^{*}}(\varphi_{L})$, i.e.,  $\theta_{L}^{\prime}=\mathcal{U}_{i_{L}^{*}}(\varphi_{L})-\epsilon$, where $\epsilon$ is a small enough positive real number;
\item [(\rmnum{2})] for all $l\in [1,\, L-1]$, $\theta_{l}^{\prime}$  is the maximum possible $\theta_{l}$ that satisfies $\theta_{l} < \mathcal{U}_{i_{l}^{*}}^{-} + \theta_{l+1}^{\prime}$, i.e., $\theta_{l}^{\prime}=\mathcal{U}_{i_{l}^{*}}^{-}+\theta_{l+1}^{\prime}-\epsilon$.
\end{enumerate}
\end{definition}

%\begin{theorem}\label{theo-sequence}
%There exists a sequence
%\begin{center}
%$1\leq i_{1}^{*} < i_{2}^{*} < \cdots < i_{L}^{*} \leq K$
%\end{center}
%such that, for all $l\in[1,\, L]$, the optimal price of the $l$-th SLA is $\theta_{l}^{\prime}$, as specified in Definition~\ref{def-optip}.
%\end{theorem}

\begin{theorem}\label{theo-sequence}
In an optimal solution where a CSP achieves the maximum revenue (\ref{equa-max}), we assume for all $l\in [1, L]$ that the $l$-th SLA will be chosen by users $\mathcal{C}_{l}^{*} = \{ i_{l-1}^{*}+1, \cdots, i_{l}^{*} \}$. Given the knowledge of $\mathcal{C}_{1}^{*}$, $\cdots$, $\mathcal{C}_{L}^{*}$, the corresponding optimal price $\theta_{l}^{*}$ of the $l$-th price is $\theta_{l}^{\prime}$ in Definition~\ref{def-optip}.
\end{theorem}
\begin{proof}
We first prove that, when the SLA prices are $\theta_{1}^{\prime}$, $\cdots$, $\theta_{L}^{\prime}$ in Definition~\ref{def-optip}, we could derive for all $l\in [1, L]$ that the $l$-th SLA will be chosen by users $i_{l-1}^{*}+1$, $\cdots$, $i_{l}^{*}$; this will show that $\theta_{1}^{\prime}$, $\cdots$, $\theta_{L}^{\prime}$ are a feasible solution. For all $i\in [i_{l-1}^{*}+1,\, i_{l}^{*}]$, the surplus of user $i$ respectively under the $l^{\prime}$-th and ($l^{\prime}+1$)-th SLA is denoted by $g_{i}^{(l^{\prime})} = \mathcal{U}_{i}(\varphi_{l^{\prime}})-\theta_{l^{\prime}}^{\prime}$, and $g_{i}^{(l^{\prime}+1)} = \mathcal{U}_{i}(\varphi_{l^{\prime}+1})-\theta_{l^{\prime}+1}^{\prime}$ where $l^{\prime}\in [1, L-1]$; then, recall Definition~\ref{def-utility} where Assumption~\ref{property-1} is satisfied and we have
\begin{center}
$g_{i}^{(l^{\prime})} - g_{i}^{(l^{\prime}+1)} = (\alpha_{i}-\alpha_{i_{l^{\prime}}^{*}})\cdot (\mathcal{U}^{\prime}(\varphi_{l^{\prime}}) - \mathcal{U}^{\prime}(\varphi_{l^{\prime}+1})) + \epsilon$.
\end{center}
By Assumption~\ref{property-1}, $\mathcal{U}^{\prime}(\varphi)$ decreases as $\varphi$ increases, and thus
\begin{enumerate}
\item [(\rmnum{1})] if $l=L$, we have $\alpha_{i}-\alpha_{i_{l^{\prime}}^{*}} < 0$ and further have $g_{i}^{(l^{\prime})} < g_{i}^{(l^{\prime}+1)}$, that is, $g_{i}^{(L)}>g_{i}^{(L-1)}>\cdots>g_{i}^{(1)}$;
%\item [(\rmnum{2})] similarly, if $l\in [2, \, L-1]$, we have $g_{i}^{(1)}< \cdots  < g_{i}^{(l)} > \cdots > g_{i}^{(L)}$.
%\item [(\rmnum{3})] if $l=1$, we have $g_{i}^{(1)}>g_{i}^{(2)}>\cdots>g_{i}^{(L)}$.
\end{enumerate}
similarly, we also have that
\begin{enumerate}
%\item [(\rmnum{1})] if $l=L$, we have $\alpha_{i}-\alpha_{i_{l^{\prime}}^{*}} < 0$ and further have $g_{i}^{(l^{\prime})} < g_{i}^{(l^{\prime}+1)}$, that is, $g_{i}^{(L)}>g_{i}^{(L-1)}>\cdots>g_{i}^{(1)}$.
\item [(\rmnum{2})] if $l\in [2, \, L-1]$, $g_{i}^{(1)}< \cdots  < g_{i}^{(l)} > \cdots > g_{i}^{(L)}$.

\item [(\rmnum{3})] if $l=1$, we have $g_{i}^{(1)}>g_{i}^{(2)}>\cdots>g_{i}^{(L)}$.
\end{enumerate}
Hence, we have for all $i \in [i_{l-1}^{*}+1, i_{l}^{*}]$ that user $i$ achieves the highest surplus only under the $l$-th SLA, and it will choose the $l$-th SLA.

As shown by (\ref{equa-max}), no matter what the optimal rate $\Lambda_{l}^{+}$ of accepting jobs at every unit $l$ is, larger SLA prices could always lead to larger revenue; hence, given the users' choices, the maximum possible SLA prices under which users will make such choices are also the optimal SLA prices $\theta_{l}^{*}$, $\cdots$, $\theta_{L}^{*}$. Above, we have proved that $\theta_{1}^{\prime}, \cdots, \theta_{L}^{\prime}$ are feasible SLA prices such that for all $l\in [1, L]$ the $l$-th SLA is chosen by users $\mathcal{C}_{l}^{*}$. Now, we prove that they are also the maximum possible, i.e., $\theta_{l}^{*} = \theta_{l}^{\prime}$ for all $l\in [1, L]$.

In terms of the $L$-th SLA, the utilities of users satisfy $\mathcal{U}_{i_{L-1}^{*}+1}(\varphi_{L})$ $>$ $\cdots$ $>$ $\mathcal{U}_{i_{L}^{*}}(\varphi_{L})$ and, in order to ensure that users $i_{L-1}^{*}+1,\cdots, i_{L}^{*}$ will choose the $L$-th SLA, a necessary condition is that the surplus of each user should be $>0$; hence, we have $\mathcal{U}_{i_{L}^{*}}(\varphi_{L})-\theta_{l} > 0$, and the maximum possible price of the $L$-th SLA is $\theta_{L}^{\prime}$. For all $l\in [1, L-1]$, in terms of the $l$-th SLA, the surplus of user $i_{l}^{*}$ achieves the maximum surplus under the $l$-th SLA, and we have
\begin{align}
& \mathcal{U}_{i_{l}^{*}}(\varphi_{l}) - \theta_{l} \geq \mathcal{U}_{i_{l}^{*}}(\varphi_{l+1}) - \theta_{l+1}\label{price-1}.
\end{align}
Given the value of $\theta_{l+1}$, the maximum possible price $\theta_{l}^{*}$ of the $l$-th SLA is either (\rmnum{1}) $\mathcal{U}_{i_{l}^{*}}^{-}+\theta_{l+1}$ when the equal sign of (\ref{price-1}) holds, or (\rmnum{2}) $\mathcal{U}_{i_{l}^{*}}^{-}+\theta_{l+1}-\epsilon$ when the left side of (\ref{price-1}) is strictly greater than its left side.

Suppose that there exists some $l\in [1,\, L-1]$ such that $\theta_{l}^{*}$ is $\mathcal{U}_{i_{l}^{*}}^{-}+\theta_{l+1}$.
%(otherwise, $\theta_{1}^{\prime}, \cdots, \theta_{L}^{\prime}$ would have been the maximum possible SLA prices $\theta_{1}^{*}, \cdots, \theta_{L}^{*}$).
Under such prices, we have that user $i_{l}^{*}$ achieves the same surplus under the $l$-th and ($l+1$)-th SLAs, i.e., the equal sign in (\ref{price-1}) holds; in other words, when user $i_{l}^{*}$ makes choice, it will randomly choose the $l$-th or ($l+1$)-th SLA, as indicated in Section~\ref{sec.user-choice}. Since we are considering the users' choices under which a CSP could achieve the maximum revenue once the corresponding optimal SLA prices are also given, the existence of the equal-sign case of (\ref{price-1}) means that, with the current price setting, a CSP could achieve the same maximum revenue no matter which of the $l$-th and ($l+1$)-th SLAs is chosen by user $i_{l}^{*}$ where $|C_{l}^{*}|\geq 2$; otherwise, the left of (\ref{price-1}) should be $>$ its right in order to ensure that user $i_{l}^{*}$ will choose the $l$-th SLA definitely.

However, for all such $l\in [1,\, L-1]$, we can consider them in the decreasing order, and increase the price of the $l$-th SLA to $\mathcal{U}_{i_{l}^{*}-1}(\varphi_{l})$ $-$ $\mathcal{U}_{i_{l}^{*}-1}(\varphi_{l+1}) +\theta_{l+1} - \epsilon$; then, user $i_{l}^{*}$ will change its choice to the ($l+1$)-th SLA, the last user who chooses the $l$-th SLA will become user $i_{l}^{*}-1$, and the choices of the other users do not change. In particular, like our analysis results for $\theta_{1}^{\prime}, \cdots, \theta_{L}^{\prime}$ at the beginning, we can assume after the price increment that the $l$-th SLA will be chosen by users $\tilde{i}_{l-1}^{*}+1$, $\cdots$, $\tilde{i}_{l}^{*}$ where $\tilde{i}_{L}^{*}$ is still $i_{L}^{*}$ and $\tilde{i}_{0}^{*}$ is set to zero; here, the price $\theta_{l}$ of the $l$-th SLA is still of the same form as $\theta_{l}^{\prime}$ in Definition~\ref{def-optip}, that is, (\rmnum{1}) $\theta_{L}=\mathcal{U}_{\tilde{i}_{L}^{*}}(\varphi_{L})-\epsilon$, and (\rmnum{2}) $\theta_{l} = ( \mathcal{U}_{\tilde{i}_{l}^{*}}(\varphi_{l}) - \mathcal{U}_{\tilde{i}_{l}^{*}}(\varphi_{l+1}) ) + \theta_{l+1}-\epsilon$ for all $l\in [1, L-1]$.

In the case without the price increment, even if user $i_{l}^{*}$ changes its choice from the $l$-th SLA to the ($l+1$)-th SLA, the CSP can still achieve the same maximum revenue under the prices $\theta_{1}^{*}$, $\cdots$, $\theta_{L}^{*}$; however, as shown by the definition of the revenue (\ref{equa-max}), we have after the price increment that the CSP will achieve a higher revenue, which contradicts the fact that under the users' choices defined by $i_{1}^{*},\cdots, i_{L}^{*}$ a CSP could achieve the maximum revenue.
\end{proof}

By Theorem~\ref{theo-sequence}, the $L$-th SLA's optimal price is
\begin{equation}\label{equa-price-L}
\theta_{L}^{*} = \mathcal{U}_{i_{L}^{*}}(\varphi_{L}) - \epsilon;
\end{equation}
the ($L-1$)-th SLA's optimal price could be expressed as
\begin{center}
$\theta_{L-1}^{*}=\mathcal{U}_{i_{L-1}^{*}}(\varphi_{L-1}) - ( \mathcal{U}_{i_{L-1}^{*}}(\varphi_{L}) - \mathcal{U}_{i_{L}^{*}}(\varphi_{L}) ) - 2\epsilon$,
\end{center}
i.e., a value smaller than but close enough to the utility of user $i_{L-1}^{*}$ under the ($L-1$)-th SLA (i.e., when the waiting time is $\varphi_{L-1}$) minus the difference of the utilities of users $i_{L-1}^{*}$ and $i_{L}^{*}$ under the $L$-th SLA. Generalizing this, for all $l\in [1,\, L-1]$, the $l$-th SLA's optimal price is
\begin{equation}\label{equa-price-ll}
\theta_{l}^{*} = \mathcal{U}_{i_{l}^{*}}(\varphi_{l}) - \sum\nolimits_{j=l}^{L-1}{D_{j}^{*}},
\end{equation}
%\begin{equation}\label{equa-price-ll}
%\theta_{l}^{*} = \mathcal{U}_{i_{l}^{*}}(\varphi_{l}) - \sum\limits_{j=l}^{L-1}{\left( \mathcal{U}_{i_{j}^{*}}(\varphi_{j+1}) - \mathcal{U}_{i_{j+1}^{*}}(\varphi_{j+1}) \right)}.
%\end{equation}
where $D_{j}^{*}$ denotes the difference of the utilities of users $i_{j}^{*}$ and $i_{j+1}^{*}$ under the ($j+1$)-th SLA, i.e., $D_{j}^{*} = \mathcal{U}_{i_{j}^{*}}(\varphi_{j+1}) - \mathcal{U}_{i_{j+1}^{*}}(\varphi_{j+1})$.

In fact, Theorem~\ref{theo-sequence} and Lemma~\ref{permutation} are the main results of this subsection and we illustrate this in Figure~\ref{Fig.10}; here, there exists a sequence $1\leq i_{1}^{*}<\cdots<i_{L}^{*}\leq K$ such that the optimal prices are determined by (\ref{equa-price-L}) and (\ref{equa-price-ll}) where the utility $\mathcal{U}_{i_{l}^{*}}(\varphi_{l})$ of user $i_{l}^{*}$ under the $l$-th SLA is simply denoted by $\mathcal{U}_{i_{l}^{*}}$. Finally, we have that
\begin{theorem}\label{theorem-optimal-prices}
Algorithm~\ref{OptiPrices} returns the optimal prices $\boldsymbol{\theta^{*}}$ of different SLAs in polynomial time.
\end{theorem}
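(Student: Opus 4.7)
My plan is to split the claim into correctness and complexity, and to obtain both as near-immediate consequences of Theorem~\ref{theo-sequence} together with a counting argument on the search space in Algorithm~\ref{OptiPrices}.

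For correctness, I would first invoke Lemma~\ref{permutation} to assert that in any optimal solution the users who choose each SLA form a contiguous block in the order $\alpha_1 > \alpha_2 > \cdots > \alpha_K$, so there exists a strictly increasing sequence $1 \leq i_1^* < i_2^* < \cdots < i_L^* \leq K$ identifying the last user assigned to each SLA. Then, appealing to Theorem~\ref{theo-sequence}, I would argue that once this sequence is fixed, the revenue-maximizing prices are forced to be exactly the $\theta_l'$ specified in Definition~\ref{def-optip}: the inequalities in (\ref{price-1})--(\ref{price-2}) pin down the feasible interval for $\theta_l$, and, since $v(\boldsymbol{m^*}, \boldsymbol{\theta})$ is monotone increasing in each $\theta_l$ (within the feasibility interval that preserves the users' assignment), the upper endpoint of that interval is optimal. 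Because Algorithm~\ref{OptiPrices} enumerates every $L$-combination $(i_1, \ldots, i_L)$ of $\{1, \ldots, K\}$, in particular it enumerates the one corresponding to the optimal assignment; and for each candidate sequence it computes the corresponding $\theta_l'$ and the induced revenue $v(\boldsymbol{m^*}, \boldsymbol{\theta})$ via Algorithm~\ref{Planning}, finally returning the argmax. Hence the tuple returned is a global maximizer of the revenue, which establishes optimality of $\boldsymbol{\theta^*}$.

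For the complexity, I would count the size of $\mathcal{A}^{\prime}$ as $\binom{K}{L} = O(K^L)$, which is polynomial in $K$ for fixed $L$ (and $L$ is a constant of the pricing system, not an input scaling with $K$). For each candidate tuple, computing the prices $\theta_l'$ from Definition~\ref{def-optip} takes $O(L)$ arithmetic operations, and evaluating $v(\boldsymbol{m^*}, \boldsymbol{\theta})$ requires one invocation of Algorithm~\ref{Planning}, which (by inspection of its nested loops over $2L$ virtual queues and $L$ SLAs) runs in $O(L^2)$ time after the job-arrival rates $\boldsymbol{\Lambda}$ have been assembled from the users' SLA choices (itself $O(K)$). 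Multiplying the per-candidate cost by $|\mathcal{A}^{\prime}|$ gives an overall $O(K^L \cdot (L^2 + K))$ bound, which is polynomial.

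\textbf{Anticipated obstacle.} The main subtlety I expect is the reduction step before the algorithm is invoked: the statement is proved under the assumption that all $L$ SLAs are selected by some user, whereas in general an optimal solution may use only a subset of them (the $L^* \leq L$ scenario discussed under \emph{Problem Reduction}). I would handle this by running Algorithm~\ref{OptiPrices} once for each nonempty subset of $\{1, \ldots, L\}$ (there are $2^L - 1$ such subsets, a constant since $L$ is fixed), treating the chosen subset as the effective SLA menu, and taking the best outcome across all runs; this preserves polynomial time and, by Theorem~\ref{theo-sequence} applied to each subset, guarantees that the global optimum is among the candidates compared.
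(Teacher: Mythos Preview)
Your proposal is correct and follows essentially the same approach as the paper: correctness is derived from Theorem~\ref{theo-sequence} (together with Lemma~\ref{permutation}) by observing that Algorithm~\ref{OptiPrices} exhaustively enumerates all $L$-combinations and hence hits the optimal sequence, and polynomial complexity comes from $\binom{K}{L}$ being polynomial in $K$ for fixed $L$. Your write-up is in fact more thorough than the paper's terse argument---you make explicit the per-candidate cost, the monotonicity of revenue in each $\theta_l$, and the outer loop over the $2^L-1$ SLA subsets (which the paper handles separately in the \emph{Problem Reduction} paragraph rather than inside the proof).
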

\begin{proof}
%\vspace{0.2em}\noindent\textbf{Proof of Theorem~\ref{theorem-optimal-prices}.}
From Theorem~\ref{theo-sequence}, once we obtain the sequence $i_{1}^{*}$, $i_{2}^{*}$, $\cdots$, $i_{L}^{*}$ in the optimal solution, we could derive the optimal SLA prices, leading to the maximum revenue (\ref{equa-max}). The optimal sequence is among all the subsets of $\{1, 2, \cdots, K\}$ whose cardinality is $L$. Algorithm~\ref{OptiPrices} considers every element $\{i_{1},\, i_{2}\, \cdots,\, i_{L}\}$ of a $L$-combination of $\{1, 2, \cdots, K\}$ and computes the corresponding prices of SLAs. Next, it selects the element that generates the highest revenue and returns the SLA prices when using this element to replace the sequence in Theorem~\ref{theo-sequence} and to compute the SLA prices by Theorem~\ref{theo-sequence}; hence, the returned prices are the optimal SLA prices. The time complexity of checking each element of a $L$-combination of $\{1, 2, \cdots, K\}$ is $\binom{K}{L}=\frac{K!}{(K-L)!L!}$. A CSP would provide a finite number of SLAs and $L$ could be bounded by a constant. Hence, Algorithm~\ref{OptiPrices} has a time complexity polynomial in $K$.
\end{proof}

\begin{figure}%[h]
\begin{center}
  \includegraphics[width=3.3in]{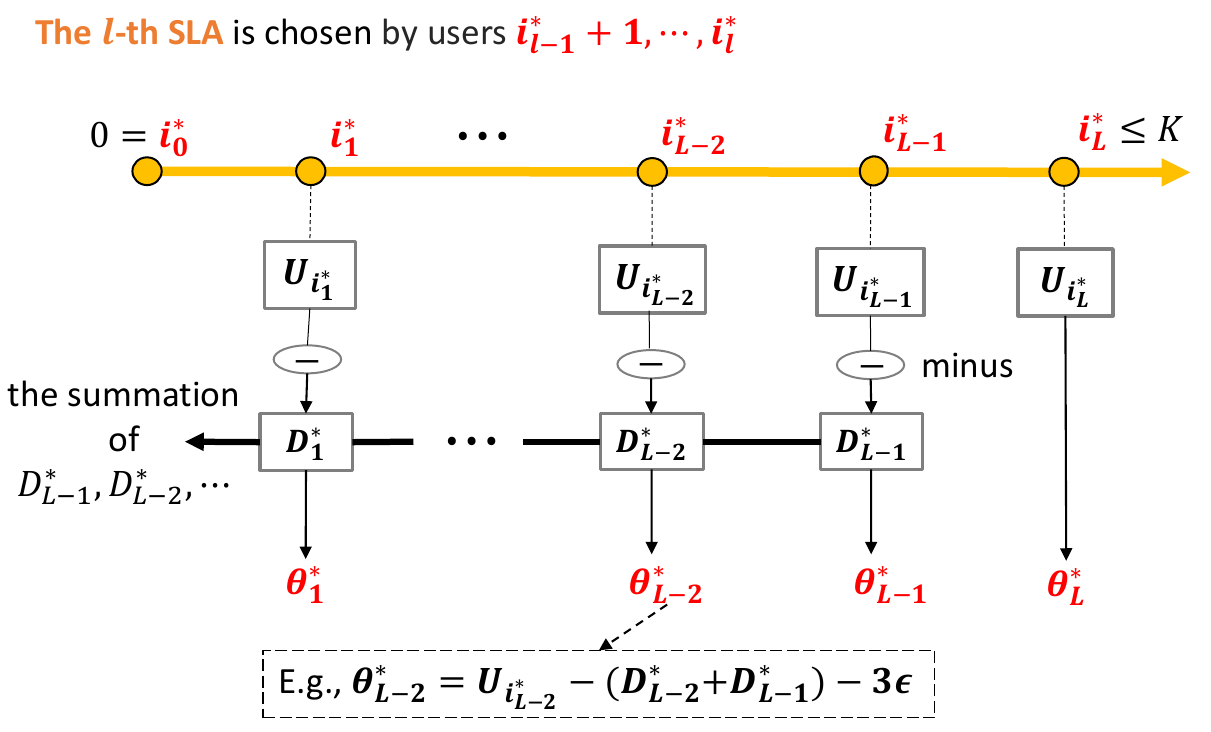}%\\
  \caption{Structural Results in the Optimal Pricing (also see (\ref{equa-price-L}) and (\ref{equa-price-ll})).}\label{Fig.10}
\end{center}
\end{figure}

\begin{algorithm}[t]
%\DontPrintSemicolon
\SetKwInOut{Begin}{Begin}
\SetKwInOut{Input}{Input}
%\footnotesize{}
    \setlength\itemsep{0.5em}

$\mathcal{A}^{\prime} \leftarrow \left\{ (i_{1}, \cdots, i_{L}) | 1 \leq i_{1} < \cdots < i_{L} \leq K \right\}$\tcp*{\footnotesize{$\mathcal{A}^{\prime}$ is a $L$-combination of $\{1, 2, \cdots, K\}$.}}

\vspace{0.1em}for every $(i_{1}, \cdots, i_{L})\in\mathcal{A}^{\prime}$, use it to replace the sequence $i_{1}^{*}, \cdots, i_{L}^{*}$ in Theorem~\ref{theo-sequence} and compute the corresponding SLA prices $( \theta_{1}^{\prime}, \cdots, \theta_{L}^{\prime} )$ by Theorem~\ref{theo-sequence}\;

\vspace{0.15em}$\mathcal{A} \leftarrow \left\{  \left(  \theta_{1}^{\prime}, \theta_{2}^{\prime}, \cdots, \theta_{L}^{\prime} \right)  | (i_{1}, \cdots, i_{L})\in \mathcal{A}^{\prime} \right\}$\;%\tcp*{\footnotesize{$\mathcal{A}$ is the set whose element is a tuple of the optimal SLA prices of SLAs in the case where $i_{1}^{*}=i_{1}$, $i_{2}^{*}=i_{2}$, $\cdots$, $i_{L}^{*}=i_{L}$, as defined in Theorem~\ref{theo-sequence}.}}

\vspace{0.25em}$\boldsymbol{\theta^{*}} \leftarrow \arg\max\limits_{\boldsymbol{\theta}\in\mathcal{A}}\, \mathcal{V}\left( \boldsymbol{m(\theta)}, \boldsymbol{\Lambda^{+}(\theta)}, \boldsymbol{\theta} \right)$, defined in (\ref{fun-optimal-prices})\tcp*{\footnotesize{the optimal prices of SLAs are returned, achieving the highest revenue.}}

\caption{Optimal Prices of Different SLAs\label{OptiPrices}}
\end{algorithm}

%; we could search for an element in the $L$-combination of $\{1, 2, \cdots, K\}$ such that when using this element/sequence to compute the SLA prices by Theorem~\ref{theo-sequence}, the highest unit revenue is generated, and thus it is the optimal sequence. Finally, built on Theorem~\ref{theo-sequence}, we could propose a procedure, presented as Algorithm~\ref{OptiPrices}, to derive the optimal prices of different SLAs and conclude that

%%%%%%%%%%%%%%%%%%%%%%%%%%%%%%%%%%%%%%%%%%%%%%%%%%%%%%%%%%%%%%%%%%%%%%%%%%%%%%%%%%%%%%%%%%%%%%%%%%%%
%%%%%%%%%%%%%%%%%%%%%%%%%%%%%%%%%%%%%%%%%%%%%%%%%%%%%%%%%%%%%%%%%%%%%%

\section{Performance Evaluation}
\label{sec.performance-evaluation}

%%%%%%%%%%%%%%%%%%%%%%%%%%%%%%%%%%%%%%%%%%%%%%%%%%%%%%%%%%%%%%%%%%%%%%
%%%%%%%%%%%%%%%%%%%%%%%%%%%%%%%%%%%%%%%%%%%%%%%%%%%%%%%%%%%%%%%%%%%%%%%%%%%%%%%%%%%%%%%%%%%%%%%%%%%%

In this section, we numerically evaluate the performance of the proposed QoS-differentiated pricing model. The key
performance metric is how much the unit revenue of the CSP improves, compared with the pure usage-based pricing model without QoS-differentiation.
In the usage-based pricing model, users are charged a fixed price whenever they utilize a server for a unit of time and it
corresponds to the first SLA in our model as explained in Section~\ref{sec_pricing}.

%%%%%%%%%%%%%%%%%%%%%%%%%%%%%%%%%%%%%%%%%%%%%%%%%%%%%%%%%%%%%%%%%%%%%%
\subsection{Simulation Setup}
%%%%%%%%%%%%%%%%%%%%%%%%%%%%%%%%%%%%%%%%%%%%%%%%%%%%%%%%%%%%%%%%%%%%%%

We assume a total number of $K=50$ users. Also, the job arrival rate $h_{i}$ of each user $i$ is set to $20$ jobs in the
time unit. Such parameters are kept fixed in order to evaluate the impact of other factors' affects onto the performance of the
QoS-differentiated pricing model. The performance of a QoS-differentiated pricing model and user's choices of SLAs are jointly
affected by the following \text{factors}: \textbf{(\rmnum{1})} the number $m$ of servers possessed by a CSP, \textbf{(\rmnum{2}} the user population's sensitivity to latency, \textbf{(\rmnum{3})} the weights of different users, namely the $\alpha_i$s in \eqref{def-utility}), \textbf{(\rmnum{4})} the degree to which the server utilization increases with the waiting time, and \textbf{(\rmnum{5})} the number $L$ of SLAs offered to users, and the chosen waiting times $\varphi_{1}, \cdots, \varphi_{L}$, stated in Section~\ref{sec.pricing-model}.
%\begin{enumerate}
%\item [\textbf{1.}] the number $m$ of servers possessed by a CSP,
%\item [\textbf{2.}] the user population's sensitivity to latency,
%\item [\textbf{3.}] the weights of different users, namely the $\alpha_i$s in \eqref{def-utility}),
%\item [\textbf{4.}] the degree to which the server utilization increases with the waiting time,
%\item [\textbf{5.}] the number $L$ of SLAs offered to users, and the chosen waiting times $\varphi_{1}, \cdots, \varphi_{L}$, stated in Section~\ref{sec.pricing-model}.
%\end{enumerate}
The utility function employed is the one described in Definition~\ref{def-utility} and further specified by \eqref{equa-uf1}.

Our performance evaluation is performed by exploring these five parameters in diverse cases.

\noindent\textbf{Factor 1.} We evaluate three cases where a CSP has $1000$, $2000$, and $4000$ servers, respectively.\\
\noindent\textbf{Factor 2.} We consider three cases where a user population has {\em high}, {\em medium}, and {\em low} sensitivity to latency; correspondingly, we set $\beta$ to $0.75$, $0.45$, and $0.25$ respectively where $\psi = 0.3$. The effect of $\beta$ on the user population's utility (or willingness to pay) is illustrated in Figure~\ref{Fig.17}, e.g., in the high case, the user's willingness to pay (or utility) (when the waiting time is $1.6$) decreases to about $0.25$ times the willingness to pay when the waiting time is 0.033.\\
\noindent\textbf{Factor 3.} We consider two cases where the $\alpha_i$s' distribution is {\em compact} and {\em loose} respectively. In the former case, for all $l\in [1,\, 50]$, the weight of the $i$-th user is set to $a^{N+1-i}$,  where $N=50$ and $a=1.028$; we observe that with this choice, for the assigned sensitivity to latency (i.e., the value of $\beta$), the ratio of the largest weight of users to the smallest one is 3.870. In the latter case, by choosing $a=1.05$, this ratio is rendered larger, e.g., $10.92$.\\
\noindent\textbf{Factor 4.} Given users' willingness to pay, the smaller the degree to which utilization increases with waiting time, the smaller the performance improvement of the QoS-differentiated pricing model over the usage-based pricing model. As illustrated in Figure~\ref{Fig.12}, we consider the worst case among them where the Random policy is used, which is also used in \cite{Zheng16a} for cloud brokerage services, and the job's runtime follows an exponential distribution, to show the viability of the pricing model of this paper.\\
\noindent\textbf{Factor 5.} We consider $L=6$ SLAs and artificially set $\varphi_{1}=0.033$, $\varphi_{2}=0.1$, $\varphi_{1}=0.2$, $\varphi_{1}=0.4$, $\varphi_{1}=0.8$, and $\varphi_{1}=1.6$; here, as the waiting time requirement changes from $\varphi_{l}$ to $\varphi_{l+1}$, the utilization has a remarkable increase, as illustrated in Figure~\ref{Fig.12}.

\begin{figure}%[h]
\begin{center}
  \includegraphics[width=1.99in]{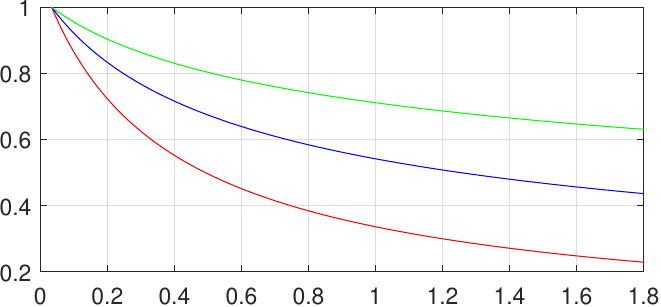}%\\
  \caption{Three curves of $\mathcal{U}(\varphi, \alpha)/\mathcal{U}(0.033, \alpha)$ with $\beta=$ 0.75, 0.55, 0.2 from top to bottom, implying how much the utility of a population of users is decreased by as the waiting time increases.}\label{Fig.17}
\end{center}
\end{figure}

\vspace{0.28em}\noindent\textbf{Performance Metrics.} In this paper, the main performance metric is the revenue improvement of our model compared to usage-based pricing. In particular, the parameter $r_{i,j}^{(k)}$ is used to denote the relative gain of the QoS-differentiated scheme, i.e., the ratio of the total unit revenue gain of our model to the total unit revenue of the usage-based pricing model. The ratio is computed for various simulation configurations, $k$ indexing factor 1, i.e., the number of servers, $i$ indexing factor $2$, i.e., the user population's sensitivity to latency, and $j$ indexing factor 3, i.e., the weight distribution. In particular, $r_{1,j}^{(k)}$, $r_{2,j}^{(k)}$, and $r_{3,j}^{(k)}$ denote the cases $\beta=0.75,0.45,0.25$, respectively. Also, $r_{i,1}^{(k)}$ and $r_{i,2}^{(k)}$ identify the compact and loose cases for the weight distribution, respectively. Finally, $r_{i,j}^{(1)}$, $r_{i,j}^{(2)}$, and $r_{i,j}^{(3)}$ denote the parameter values when a CSP has $1000$, $2000$, and $4000$ servers, respectively. For example, $r_{1,1}^{(2)}$ denotes the ratio in the case $\beta = 0.75$, compact weight distribution, and $2000$ servers.

%$x$ could be the average resource utilization $\rho$, the total unit revenue $v$, and, $r$, i.e., the ratio (of the total unit revenue of our model to the total unit revenue when only the first SLA is provided) minus 1.

\begin{table*}[htbp]
  \begin{center}
  	\begin{threeparttable}%[b]
  	
		\begin{tabular}{| C{1.35cm} | C{0.7cm}  | C{1.7cm}  | C{1.8cm} | C{1.8cm} | C{1.8cm} | C{1.7cm} ||  C{2.6cm}   |}   %p{1.1cm}|p{3.8cm}|}
	
\hline
    %                    & \multicolumn{5}{|c||}{Multiple Classes of Service}  &  \multicolumn{1}{|c|}{A Single Service} \\ \cline{2-7}

			            &   1st SLA   &  2nd SLA   &  3rd SLA   &  4th SLA   &  5th SLA  &  6th SLA   &   Revenue Improvement ($r_{i,j}^{(k)}$)\\ \hline

%compact $m = 800$  $\beta=0.45$
(loose, 1000, 0.55) &  $\times$   &   $\times$   &  ({\color{purple(html/css)} 2}, {\color{green(pigment)} 20.81}, {\color{frenchblue} 40}, {\color{red} 240})   &  ({\color{purple(html/css)} 10}, {\color{green(pigment)} 16.16}, {\color{frenchblue} 160}, {\color{red} 560})   &  ({\color{purple(html/css)} 13}, {\color{green(pigment)} 12.61}, {\color{frenchblue} 60}, {\color{red} 135})   &  ({\color{purple(html/css)} 15}, {\color{green(pigment)} 9.64}, {\color{frenchblue} 40}, {\color{red} 65})     &  258.5\%  \\ \hline

%compact  $m = 1600$ $\beta=0.45$
(loose, 2000, 0.55)&   $\times$ &  ({\color{purple(html/css)} 3}, {\color{green(pigment)} 20.03}, {\color{frenchblue} 60}, {\color{red} 661})	&({\color{purple(html/css)} 11}, {\color{green(pigment)} 16.7}, {\color{frenchblue} 160}, {\color{red} 960}) & ({\color{purple(html/css)} 14}, {\color{green(pigment)} 13.69}, {\color{frenchblue} 60}, {\color{red} 210})  &  ({\color{purple(html/css)} 17}, {\color{green(pigment)} 10.77}, {\color{frenchblue} 60}, {\color{red} 136})   &  ({\color{purple(html/css)} 18}, {\color{green(pigment)} 8.33}, {\color{frenchblue} 20}, {\color{red} 33})   &    138.7\%   \\ \hline

%compact $m = 2400$  $\beta=0.45$
%(loose, 4000, 0.55)	&  ({\color{purple(html/css)} 1}, {\color{green(pigment)} 19.70}, {\color{frenchblue} 20}, {\color{red} 627})  &  ({\color{purple(html/css)} 14}, {\color{green(pigment)} 16.4}, {\color{frenchblue} 260}, {\color{red} 2861})  &    ({\color{purple(html/css)} 17}, {\color{green(pigment)} 14.4}, {\color{frenchblue} 60}, {\color{red} 360})    & ({\color{purple(html/css)} 18}, {\color{green(pigment)} 12.2}, {\color{frenchblue} 20}, {\color{red} 70})  & ({\color{purple(html/css)} 19}, {\color{green(pigment)} 9.8}, {\color{frenchblue} 20}, {\color{red} 46})   &  ({\color{purple(html/css)} 20}, {\color{green(pigment)} 7.6}, {\color{frenchblue} 20}, {\color{red} 33})   &   53.08\%  \\ \hline

%compact  $m = 1600$ $\beta=0.25$
(loose, 2000, 0.2)	&     $\times$   & ({\color{purple(html/css)} 4}, {\color{green(pigment)} 17.46}, {\color{frenchblue} 80}, {\color{red} 880})  & ({\color{purple(html/css)} 11}, {\color{green(pigment)} 13.25}, {\color{frenchblue} 140}, {\color{red} 840})  & ({\color{purple(html/css)} 15}, {\color{green(pigment)} 9.63}, {\color{frenchblue} 80}, {\color{red} 280})   &    $\times$   &   $\times$    &   110.9\%  \\ \hline

%loose   $m = 1600$ $\beta=0.45$
(compact, 2000, 0.55)  &  $\times$  & $\times$	&   ({\color{purple(html/css)} 6}, {\color{green(pigment)} 6.733}, {\color{frenchblue} 120}, {\color{red} 720})    & ({\color{purple(html/css)} 19}, {\color{green(pigment)} 5.256}, {\color{frenchblue} 260}, {\color{red} 910})  & ({\color{purple(html/css)} 25}, {\color{green(pigment)} 4.093}, {\color{frenchblue} 120}, {\color{red} 271})   &  ({\color{purple(html/css)} 28}, {\color{green(pigment)} 3.141}, {\color{frenchblue} 60}, {\color{red} 98})   &   234.7\%   \\ \hline

		\end{tabular}
		\caption{The Optimal Solutions under Different Cases: the {\color{red} red}, {\color{green(pigment)} green} and {\color{frenchblue} blue} numbers denotes the number of assigned servers, the SLA price, and the optimal rate of accepting jobs, respectively.}
%Purple: admitted users. Red: number of assigned servers. Green: SLA price. Blue: optimal rate of accepting jobs.
		\label{table-PC1}
	\end{threeparttable}
  \end{center}
\end{table*}

\begin{table}[htbp]
  \begin{center}
  	\begin{threeparttable}%[b]
  	
%		\begin{tabular}{|  C{1.62cm}  | C{1.62cm} | C{1.62cm} | C{1.65cm}  |}   %p{1.1cm}|p{3.8cm}|}

		\begin{tabular}{|  C{1.62cm}  | C{1.62cm} | C{1.62cm} | C{1.65cm}  |}   %p{1.1cm}|p{3.8cm}|}
	
\hline

(loose, 1000, 0.55)   &  (loose, 2000, 0.55)   &    (loose, 2000, 0.2)   &  (compact, 2000, 0.55)  \\ \hline

({\color{purple(html/css)} 2}, {\color{green(pigment)} 39.81}, {\color{frenchblue} 31.95}, {\color{red} 1000})   &	
({\color{purple(html/css)} 4}, {\color{green(pigment)} 36.11}, {\color{frenchblue} 63.89}, {\color{red} 2000})   &
({\color{purple(html/css)} 4}, {\color{green(pigment)} 29.84}, {\color{frenchblue} 63.89}, {\color{red} 2000})   &
({\color{purple(html/css)} 4}, {\color{green(pigment)} 13.35}, {\color{frenchblue} 63.89}, {\color{red} 2000})   \\ \hline

		\end{tabular}
		\caption{The Optimal Solutions for the On-demand Pricing Model under Different Cases.}
		\label{table-PC2}
	\end{threeparttable}
  \end{center}
\end{table}

\begin{figure}%[h]
\begin{center}
  \includegraphics[width=3.28in]{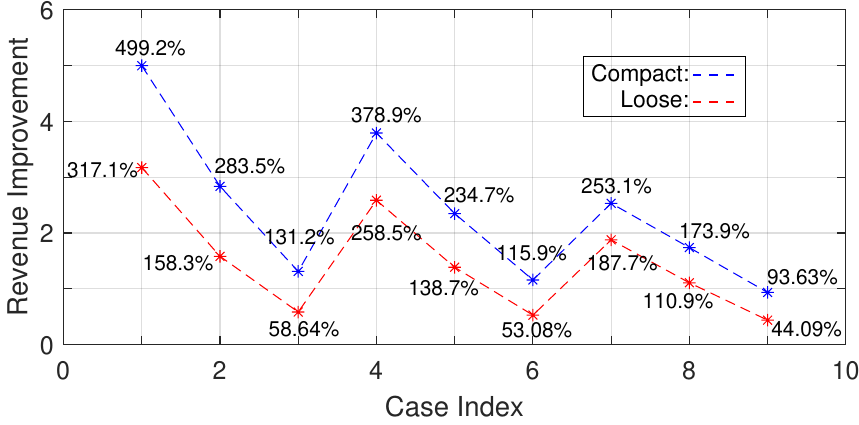}%\\
  \caption{Revenue Improvement: the red curve corresponds to the loose case and the blue one to the tight case, respectively.}\label{Fig.10}
\end{center}
\end{figure}

%%%%%%%%%%%%%%%%%%%%%%%%%%%%%%%%%%%%%%%%%%%%%%%%%%%%%%%%%%%%%%%%%%%%%%
\subsection{Results}
%%%%%%%%%%%%%%%%%%%%%%%%%%%%%%%%%%%%%%%%%%%%%%%%%%%%%%%%%%%%%%%%%%%%%%

%%%%%%%%%%%%%%%%%%%%%%%%%%%%%%%%%%%%%%%%
\subsubsection{Overview}
%%%%%%%%%%%%%%%%%%%%%%%%%%%%%%%%%%%%%%%%
The main numerical results are illustrated in Figure~\ref{Fig.10} where the values of $r_{i,1}^{(k)}$ and $r_{i,2}^{(k)}$ correspond to the blue and red ones, respectively. Also, for the sake of representation, the cases are indexed using a single parameter $3\cdot (i-1)+k$. The revenue improvement of our model over the standard usage-based pricing model ranges from 499.2\% to 93.63\% in the compact case and from 317.1\% to 44.09\% in the loose case, depending on parameters $m$ and $\beta$. In all cases, the revenue improvements are either very large (e.g., up to 499.2\%) or large (e.g., 44.09\% in the worst case).
%In the medium cases (e.g., $\beta=0.5$, $m=1600$), the improvements are 170.5\% in the compact case and 89.57\% in the loose case.
From Figure~\ref{Fig.10}, we could observe the following \textbf{three phenomena}:
\begin{enumerate}

\item [\textbf{(\rmnum{1})}] Under the same weight distribution and $\beta$, and, given the user population size, the revenue improvement decreases with the number of servers a CSP possesses.%This is obtained by observing the values of $r_{i,j}^{(1)}$, $r_{i,j}^{(2)}$, and $r_{i,j}^{(3)}$ given the values of $i$ and $j$ (i.e., the values of Y coordinate when the values of X coordinate are $3\cdot (i-1)+1$, $3\cdot (i-1)+2$, and $3\cdot (i-1)+3$ in both the loose and compact cases where $i=1$, $2$, $3$).

\item [\textbf{(\rmnum{2})}] The smaller the value of $\beta$ is, the smaller the revenue improvement is.

\item [\textbf{(\rmnum{3})}] The more compact the weight distribution of users is, the larger the revenue improvement is.

\end{enumerate}

The first phenomenon is obtained by observing the values of $r_{i,j}^{(1)}$, $r_{i,j}^{(2)}$, and $r_{i,j}^{(3)}$ given the values of $i$ and $j$ (i.e., the values of Y coordinate when the values of X coordinate are $3\cdot (i-1)+1$, $3\cdot (i-1)+2$, and $3\cdot (i-1)+3$ in both the loose and compact cases where $i=1$, $2$, $3$). The second phenomenon is obtained by observing the values of $r_{1,j}^{(k)}$, $r_{2,j}^{(k)}$, and $r_{3,j}^{(k)}$ (i.e., the three values of Y coordinate in either the loose case or the compact case when the values of X coordinate are $k$, $3+k$, and $6+k$ where $k$ equals $1$, $2$, $3$). The third phenomenon is obtained by observing the values of $r_{i,1}^{(k)}$ and $r_{i,2}^{(k)}$ (i.e., the two values of Y coordinate respectively in the loose and compact cases when the value of X coordinate is $3\cdot (i-1)+k$ where $i$ and $k$ equal $1$, $2$, $3$).

As a summary of these three phenomena, we conclude that, when a CSP adopts the QoS-differentiated pricing model, a significant revenue improvement could be achieved especially in the cases where the population of users is less sensitive to latency, the weights of users are more compact, and the population of users is large compared with its processing capacity (i.e., the number of possessed servers).

\subsubsection{Case Analysis}

In the following, we take a deeper look at the concrete numerical results in some typical cases, which are given in Table~\ref{table-PC1}. Here, each item in the first column specifies the simulation setup, e.g., the item in the second row implies that the weight distribution is compact, $m=1000$, and $\beta=0.55$; thus, in Table~\ref{table-PC1}, the rows 2-4, 5, and 6 respectively give the simulation results in the loose case with $\beta=0.55$, the loose case with $\beta=0.2$ and $m=2000$, and the compact case with $\beta=0.55$ and $m=2000$. The last column of Table~\ref{table-PC1} gives the revenue improvements under these different cases. In Table~\ref{table-PC1}, in each row, all tuples $(y_{1}^{(l)}, y_{2}^{(l)}, y_{3}^{(l)}, y_{4}^{(l)})$ under the $l$-th SLA constitute an optimal solution in the case specified by the values of the first column; the cross denotes that this SLA will not be offered to users. Now, we explain the meaning of each tuple. The $l$-th SLA's price is specified by the greened $y_{2}^{(l)}$; at the $l$-th processing unit, the optimal rate of accepting jobs and the number of assigned servers are respectively the blued $y_{3}^{(l)}$ and the red $y_{4}^{(l)}$. In the case that the ($l-1$)-th SLA is offered to users, the indexes of users that will choose the $l$-th SLA are $y_{1}^{(l-1)}+1, \cdots, y_{1}^{(l)}$; in the case that $l=1$ or the ($l-1$)-th SLA is not offered, the corresponding indexes of users are $1, \cdots, y_{l}^{(l)}$.

When analyzing these concrete results to better understand the logic under which the above three phenomena occur, we need to keep the following points in mind. The unit revenue from a server in a unit of time is the product of the server's utilization and the price associated with it\emph{}. Given the $l$-th SLA offered to users, let $l^{\prime}$ denote the total number of offered SLAs after the $l$-th SLA, e.g., in the 2nd row of Table~\ref{table-PC1}, if $l=4$, $l^{\prime}=2$. As illustrated in Figure~\ref{Fig.10}, the price of the $l$-th SLA is
$\mathcal{U}_{y_{1}^{(l)}}(\varphi_{l}) - \sum_{j=l}^{l+l^{\prime}-1}{D_{j}^{*}} - (l^{\prime}+1)\cdot\epsilon$ if $l^{\prime} > 0$ and
$\mathcal{U}_{y_{1}^{(l)}}(\varphi_{l}) - \epsilon$ if $l^{\prime}=0$;
%\begin{equation}\label{price-simulation-1}
%\mathcal{U}_{y_{1}^{(l)}}(\varphi_{l}) - \sum_{j=l}^{l+l^{\prime}-1}{D_{j}^{*}} - (l^{\prime}+1)\cdot\epsilon, \enskip \text{if}\enskip l^{\prime} > 0,
%\end{equation}
%and
%\begin{equation}\label{price-simulation-2}
%\mathcal{U}_{y_{1}^{(l)}}(\varphi_{l}) - \epsilon, \enskip \text{if}\enskip l^{\prime}=0;
%\end{equation}
here, $\mathcal{U}_{y_{1}^{(l)}}(\varphi_{l})$ is the utility of the $y_{1}^{(l)}$-th user when the waiting time is $\varphi_{l}$, and $D_{j}^{*}$ is the difference of the utilities of users $y_{1}^{(j)}$ and $y_{1}^{(j+1)}$ when the waiting time is $\varphi_{j+1}$. A larger $\mathcal{U}_{y_{1}^{(l)}}(\varphi_{l})$ and a smaller $\sum_{j=l}^{l+l^{\prime}}{D_{j}^{*}}$ can lead to a larger SLA price, and vice versa; this is one basis to analyze the concrete results in Table~\ref{table-PC1}. On the other hand, the ratios of the maximum possible utilizations of a server under the 1st, 2nd, 3rd, 4th, 5th, and 6th SLAs to the utilization under the 1st SLA are respectively 1, 2.846, 5.217, 8.944, 13.91, and 19.26. The number of servers assigned for each SLA, the SLA price, and the server utilization together determine the revenue of a CSP. In contrast, the standard on-demand model is a special case and corresponds to the first SLA of our model; the related simulation results to be used are listed in Table~\ref{table-PC2}. For example, the first column says that, under the compact case where $m=1000$ and $\beta=0.55$, the optimal solution is specified by the tuple $(2, 39.81, 31.95, 1000)$. This tuple implies that, partial jobs of the first two users will be served with a price of 39.81, and, the CSP accepts jobs at a rate of 39.81 with all 1000 servers allocated.

\vspace{0.1em}\noindent\textbf{Phenomenon (\rmnum{1}).} Now, we begin to observe the results in the loose case with $\beta=0.55$, i.e., the 2-3 rows of Table~\ref{table-PC1}. With more servers available to process jobs (i.e., a larger $m$), a CSP is allowed to admit more users (see the purpled values under the 6th SLA).

These accepted users are distributed to different SLAs, and, under the same SLA $l$, we could see a larger $y_{1}^{(l)}$ in the case with a larger $m$. To a large extent, this leads to a smaller price for the case with a larger $m$. For example, the price of the 4th SLA in the case with $m=1000$ is $\mathcal{U}_{10}(\varphi_{4}) - \left(\mathcal{U}_{10}(\varphi_{5})-\mathcal{U}_{13}(\varphi_{5})\right) - \left(\mathcal{U}_{13}(\varphi_{6})-\mathcal{U}_{15}(\varphi_{6})\right) = 19.2866 - 2.1428 - 0.9883 = 16.1555$; its price in the case with $m=2000$ is $\mathcal{U}_{14}(\varphi_{4}) - \left(\mathcal{U}_{14}(\varphi_{5})-\mathcal{U}_{17}(\varphi_{5})\right) - \left(\mathcal{U}_{17}(\varphi_{6})-\mathcal{U}_{18}(\varphi_{6})\right) = 15.8671 - 1.7629 - 0.4165 = 13.6878$. QoS-differentiated pricing allows admitting more users than on-demand pricing where only the first SLA is offered; when $m$ increases from 1000 to 2000, the difference of the two values $y_{1}^{(l)}$ ($l=$ 3, 4, 5, 6) in the former case (illustrated in Table~\ref{table-PC1}) is larger than the difference of the two values $y_{1}^{(1)}$ in the latter case (illustrated in Table~\ref{table-PC2}).

Since $y_{1}^{(l)}$ can greatly affect the price in our simulation, the degree of price deduction in the former case is larger the one in the latter case. This is also shown by a larger unit revenue decline; in particular, using the results in Table~\ref{table-PC2}, the unit revenue decreases by 9.31\% when the number $m$ of possessed servers increases from 1000 to 2000; with Table~\ref{table-PC1}, the unit revenues respectively under the 3rd, 4th, 5th, and 6th SLAs decrease by 19.75\%, 15.28\%, 15.22\%, and 14.90\%. Furthermore, we could conclude from the rows 2-3 of Table~\ref{table-PC1} that, the unit revenue is increasing from a lower SLA (a higher QoS requirement) to a higher SLA; in particular, the unit revenues in the 2nd ({\em resp.} 3rd) row are respectively 3.4683, 4.6171, 5.6044, and 5.9323 ({\em resp.} 1.8182, 2.7833, 3.9114, 4.7515, and 5.0485). Overall, it happens that the revenue improvement decreases as $m$ becomes larger, after these unit revenues are weighted by the numbers of assigned servers.

%Beyond the phenomenon here, if we observe the case with $m=2000$, the unit revenue under the 2nd SLA is the smallest. One may be wondering

\vspace{0.1em}\noindent\textbf{Phenomenon (\rmnum{2}).} Let us compare the results in the rows 3 and 4 where users respectively have medium and high sensitivities to latency ($\beta$ equals 0.55 and 0.2 respectively).
%Given the $l$-th SLA, its price is determined by $i_{l}^{*}, \cdots, i_{L-1}^{*}$.

In the case that users are more sensitive to latency (i.e., $\beta=0.2$), (\rmnum{1}) only three SLAs (i.e., the 2nd, 3rd, and 4th SLAs) are provided to users (the $5$-th and 6-th SLAs with the largest waiting times will not be offered) and (\rmnum{2}) less users are accepted to rejected some users with lower utilities (the first 15 users with the larger utilities are accepted), in contrast to the optimal solution in the case with $\beta=0.55$. These two points ensure that a smaller value $y_{1}^{(l)}$ is associated under each offered SLA, and the offered SLAs are also associated with smaller waiting times. As illustrated in Figure~\ref{Fig.17}, as $\beta$ becomes smaller, the utilities (or willingness to pay) of users decrease more dramatically with the increment of waiting time. So, these two points lead to that the prices and the unit revenues of the offered SLAs are still relatively high in the case with $\beta=0.2$.

However, as shown in the simulations, the revenue improvement still decreases as users become more sensitive to latency. In on-demand pricing, using Table~\ref{table-PC2}, we derive that the unit revenue decreases by the unit revenue decreases by 17.36\% when $\beta$ decreases from 0.55 to 0.2; with Table~\ref{table-PC1}, the unit revenues respectively under the 2nd, 3rd, and 4th SLAs decrease by 12.70\%, 20.66\%, and 29.66\%. Furthermore, we could conclude from the rows 3-4 of Table~\ref{table-PC1} that, the unit revenue is increasing from a lower SLA (a higher QoS requirement) to a higher SLA; in particular, the unit revenues in the 4nd ({\em resp.} 3rd) row are respectively 1.5873, 2.2083, and 2.7514 ({\em resp.} 1.8182, 2.7833, 3.9114, 4.7515, and 5.0485). Overall, it happens that the revenue improvement decreases as $\beta$ becomes smaller, after these unit revenues are weighted by the numbers of assigned servers.

\vspace{0.1em}\noindent\textbf{Phenomenon (\rmnum{3}).} We look at the rows 3 and 5 of Table~\ref{table-PC1} where users respectively have loose and compact weight distributions (also see the columns 2 and 4 of Table~\ref{table-PC2}); 18 users are accepted in the loose case while 10 more users are accepted in the compact case.

In the loose case, the ratios of the prices of these offered SLAs (the 3rd, 4th, 5th, and 6th SLAs) in QoS-differentiated pricing to the on-demand price in the column 2 of Table~\ref{table-PC2} are 46.25\%, 37.91\%, 29.83\%, and 23.07\%; in the compact case, these ratios are 50.43\%, 39.37\%, 30.66\%, and 23.53\%. In the compact case, more accepted users don't lead to that the price of each SLA reduce more heavily. In the loose case with $m=2000$ and $\beta=0.55$, the ratios of the unit revenues from the 2nd, 3rd, 4th, 5th, and 6th SLAs to the unit revenue where only the first SLA is offered are 1.5762, 2.4129, 3.3908, 4.1191, and 4.3765; in the compact case, these ratios of the 3rd, 4th, 5th, and 6th SLAs are 2.6313, 3.5213, 4.2498, and 4.5093. After these unit revenues are weighted by the number of servers assigned to each SLA, we can see that it happens that the revenue improvement increases as the weight distribution becomes more compact.

By the phenomenon here, we noticed another phenomenon that the 2nd SLA is offered in the loose case but it is not so in the compact case. To understand this phenomenon, we consider the case where the 2nd SLA is not offered but the 3rd, 4th, 5th, and 6th SLAs are offered in the loose case. Roughly, in this case, more users could be accepted since servers have higher utilizations under the 3rd, 4th, 5th, and 6th SLAs than the 2nd SLA; similar to our analysis of the first phenomenon, this could reduce the SLA's prices and the unit revenue on the whole.

%It is worth explaining this for better understanding QoS-differentiated pricing.

%%%%%%%%%%%%%%%%%%%%%%%%%%%%%%%%%%%%%%%%%%%%%%%%%%%%%%%%%%%%%%%%%%%%%%%%%%%%%%%%%%%%%%%%%%%%%%%%%%%%
%%%%%%%%%%%%%%%%%%%%%%%%%%%%%%%%%%%%%%%%%%%%%%%%%%%%%%%%%%%%%%%%%%%%%%

\section{Concluding Remarks}
\label{sec.conclusion}

%%%%%%%%%%%%%%%%%%%%%%%%%%%%%%%%%%%%%%%%%%%%%%%%%%%%%%%%%%%%%%%%%%%%%%
%%%%%%%%%%%%%%%%%%%%%%%%%%%%%%%%%%%%%%%%%%%%%%%%%%%%%%%%%%%%%%%%%%%%%%%%%%%%%%%%%%%%%%%%%%%%%%%%%%%%

In cloud computing, there exist both latency-critical jobs and jobs that could tolerate some degree of delay. The resource efficiency of a system is much dependent on the job's latency requirement. From a user's point of view, posted pricing has high usability. In such context, we propose the first analytical model for QoS-differentiated posted pricing. We consider a cloud computing system where the system is partitioned into multiple independent subsystems.

Optimal schemes have been proposed to address properly two key, intertwined aspects of the model: (\rmnum{1}) the pricing of different levels of QoS requirements, and, (\rmnum{2}) the arrival rate of jobs accepted to be processed, in connection with the number of servers assigned to each QoS level. Queuing models let us derive a  general analytical framework, which adapts to several popular dispatching policies. Numerical simulations show that the revenue of a CSP could be improved by up to a five-fold increase, with an improved system's utilization. The analytical results of this paper provides a framework to easily evaluate the performance of QoS-differentiated pricing in cloud computing, given the computing capacity of a CSP and the environments that a CSP faces (the size of a users' population, the amount of demanded computing resource, and the  users' sensitivity to latency).%, and the weights of users).

\appendices
\section{}

\begin{proof}[Proof of Lemma~\ref{lemma-optimal-structure-1}]
Recall that, our objective is maximizing (\ref{equa-max}) where $\boldsymbol{\theta}$ is pre-assigned; in terms of the decision variable $\Lambda_{l}^{+}$, the constraints that matter are \eqref{equa-constraint-1} and \eqref{constraint-3} alone. Hence, the optimal value of $\Lambda_{l}^{+}$ under an arbitrary $\boldsymbol{m}$ is the maximum possible value that simultaneously satisfies \eqref{equa-constraint-1} and \eqref{constraint-3}, i.e., it equals $\min\{ \Lambda_{l} ,\, m_{l}\cdot \lambda_{max}^{(l)} \}$. Here, in the first case that $m_{l}\cdot\lambda_{max}^{(l)}\leq \Lambda_{l}$, we have $\Lambda_{l}^{+}=m_{l}\cdot\lambda_{max}^{(l)}$ where $m_{l}\leq \overline{m}_{l}$.

In the second case that $m_{l}\cdot\lambda_{max}^{(l)}>\Lambda_{l}$, we have $\Lambda_{l}^{+}$ $=$ $\Lambda_{l}$. In the latter case, we also have $m_{l} = \overline{m}_{l}+1$ where $\lambda_{l}^{\prime}>0$; now, we begin to prove this by contradiction. We first prove $m_{l}=\lceil \Lambda_{l}/\lambda_{max}^{(l)} \rceil$, which is the minimum number of servers needed by Lemma~\ref{lemma-max-rate}. Otherwise, $m_{l} > \lceil \Lambda_{l}/\lambda_{max}^{(l)} \rceil$ and, there exists some $l^{\prime}$ ($\neq l$) such that $m_{l^{\prime}}\cdot \lambda_{max}^{(l)}<\Lambda_{l}$ by the assumption (\ref{equa-case}). By increasing $m_{l^{\prime}}$ by $m_{l} - \lceil \Lambda_{l}/\lambda_{max}^{(l)} \rceil$ and reducing $m_{l}$ to $\lceil \Lambda_{l}/\lambda_{max}^{(l)} \rceil$, this allows setting $\Lambda_{l^{\prime}}^{+}$ to a higher value $\min\{ \Lambda_{l^{\prime}} ,\, m_{l^{\prime}}\cdot \lambda_{max}^{(l^{\prime})} \}$ while the value of $\Lambda_{l}^{+}$ does not decrease, i.e., $\Lambda_{l}^{+} = \min\{ \Lambda_{l} ,\, m_{l}\cdot \lambda_{max}^{(l)} \} = \Lambda_{l}$; as a result, the value of the item $\theta_{l^{\prime}}\cdot\Lambda_{l^{\prime}}^{+}$ in the objective function (\ref{equa-max}) is increased, as well as the value of the objective function. This contradicts that $\boldsymbol{m}$ and $\boldsymbol{\Lambda^{+}}$ are an optimal solution and we thus have $m_{l}=\lceil \Lambda_{l}/\lambda_{max}^{(l)} \rceil$. Now, we prove $\lambda_{l}^{\prime}>0$ by contradiction. Otherwise, $\lambda_{l}^{\prime} = \Lambda_{l} - \overline{m}_{l}\cdot \lambda_{max}^{(l)} =0$ and we have $m_{l}=\overline{m}_{l}$, which contradicts the second case where $m_{l}\cdot\lambda_{max}^{(l)}>\Lambda_{l}$.
\end{proof}

\begin{proof}[Proof of Observation~\ref{lemma-reduction-AA}]
Since the mapping $f$ is a surjection, given an arbitrary feasible solution $x\in\mathcal{X}$ to the problem $A$, there exists a feasible solution $y$ to the problem $B$ where $x=f(y)$; here, the objective functions in the two problems $A$ and $B$ achieve the same value. Hence, the optimal value of the problem $B$ is an upper bound of the optimal value of $A$. On the other hand, since $f$ is a mapping from $\mathcal{Y}$ to $\mathcal{X}$, given an optimal solution $y^{*}\in\mathcal{Y}$ to $B$, $x=f(y^{*})$ is a feasible solution to $A$ where $f(y)\in\mathcal{X}$; here, the two objective functions in $A$ and $B$ still achieve the same value and hence $x=f(y^{*})$ is an optimal solution to the problem $A$ where $x^{*}=f(y^{*})$.
\end{proof}

\begin{proof}[Proof of Observation~\ref{transformation-1}]
The determinant of the matrix $A$ is $>0$ and it is invertible. Hence, one $\begin{bmatrix}
   x_{l}  &   y_{l} \\
\end{bmatrix}$ corresponds to one unique $\begin{bmatrix}
   m_{l}  &   \Lambda_{l}^{+}\\
\end{bmatrix}$, and vice versa. In the following, we first show that, given an arbitrary $\begin{bmatrix}
   x_{l}  &   y_{l} \\
\end{bmatrix}$ satisfying (\ref{constraint-4}) and (\ref{constraint-2-1}), the corresponding $\begin{bmatrix}
   m_{l}  &   \Lambda_{l}^{+}\\
\end{bmatrix}$ (that is defined by (\ref{constraint-2})) satisfies Lemma~\ref{lemma-optimal-structure-1}, as well as the constraints \eqref{equa-num-machine}, \eqref{equa-constraint-1}, and (\ref{constraint-3}). This shows that (\ref{constraint-2}) defines a mapping.

By (\ref{constraint-2}), we have $m_{l}=x_{l}+y_{l}$ and $\Lambda_{l}^{+}=\lambda_{max}^{(l)}\cdot x_{l} + \lambda_{l}^{\prime}\cdot y_{l}$. With (\ref{constraint-4}), we have $0\leq x_{l}\leq \overline{m}_{l}$ and $y_{l}\in\{0, 1\}$; further, the $m_{l}$ and $\Lambda_{l}^{+}$ here satisfy \eqref{equa-constraint-1}, and (\ref{constraint-3}) respectively by (\ref{floor-1}) and (\ref{floor-2}). Due to (\ref{constraint-2-1}), the $m_{1}$, $\cdots$, $m_{L}$ also satisfy \eqref{equa-num-machine}. By (\ref{constraint-4}), there are three possible cases for the values of $x_{l}$ and $y_{l}$: (\rmnum{1}) $x_{l}<\overline{m}_{l}$, $y_{l}=0$, (\rmnum{2}) $x_{l}=\overline{m}_{l}$ and $y_{l}=0$, and (\rmnum{3}) $x_{l}=\overline{m}_{l}$ and $y_{l}=1$. In the second case, $\lambda_{l}^{\prime}$ may be either $>0$ or $=0$; in the third case, $\lambda_{l}^{\prime}>0$. In the first case, we have $m_{l}=x_{l}<\overline{m}_{l}$ and $\Lambda_{l}^{+}=\lambda_{max}^{(l)}\cdot m_{l}$. In the second case, we have $m_{l}=x_{l}=\overline{m}_{l}$ and $\Lambda_{l}^{+}=\lambda_{max}^{(l)}\cdot m_{l}$. In the third case where $\lambda_{l}^{\prime}>0$, we have $m_{l}=x_{l}+y_{l}=\overline{m}_{l}+1$ and $\Lambda_{l}^{+}=\lambda_{max}^{(l)}\cdot \overline{m}_{l} + \lambda_{l}^{\prime}=\Lambda_{l}$ by the definition of $\overline{m}_{l}$. After checking the corresponding $m_{l}$ and $\Lambda_{l}^{+}$ in each of the above cases, we conclude that Lemma~\ref{lemma-optimal-structure-1} could be satisfied.

Secondly, we show that, given an arbitrary $\begin{bmatrix}
   m_{l}  &   \Lambda_{l}^{+}\\
\end{bmatrix}$ satisfying Lemma~\ref{lemma-optimal-structure-1}, \eqref{equa-num-machine}, \eqref{equa-constraint-1}, and (\ref{constraint-3}), the corresponding $\begin{bmatrix}
   x_{l}  &   y_{l} \\
\end{bmatrix}$ (defined by (\ref{constraint-2})) satisfies (\ref{constraint-4}) and (\ref{constraint-2-1}). This finally shows that (\ref{constraint-2}) defines a surjection. Since $m_{l}=x_{l}+y_{l}$ and due to \eqref{equa-num-machine}, the $x_{l}$ and $y_{l}$ here satisfy (\ref{constraint-2-1}). By Lemma~\ref{lemma-optimal-structure-1}, there are only two possible cases for the variable $\Lambda_{l}^{+}$: (\rmnum{1}) $\Lambda_{l}^{+} < \Lambda_{l}$, and (\rmnum{2}) $\Lambda_{l}^{+} = \Lambda_{l}$. In the first case, we have $\Lambda_{l}^{+}=m_{l}\cdot \lambda_{max}^{(l)}$ and $m_{l}\leq \overline{m}_{l}$ by Lemma~\ref{lemma-optimal-structure-1} and further have $x_{l}=m_{l}\leq \overline{m}_{l}$ and $y_{l}=0$, which satisfy (\ref{constraint-4}). In the second case, if $\Lambda_{l}=\overline{m}_{l}\cdot \lambda_{max}^{(l)}$, we have $\lambda_{l}^{\prime}=0$ and $m_{l}=\overline{m}_{l}$, and further have $x_{l}=\overline{m}_{l}$ and $y_{l}=0$ by (\ref{constraint-2}); otherwise, $\Lambda_{l} > \overline{m}_{l}\cdot \lambda_{max}^{(l)}$, we have $\lambda_{l}^{\prime}>0$ and $m_{l}=\overline{m}_{l}+1$ by Lemma~\ref{lemma-optimal-structure-1} and further have $x_{l}=\overline{m}_{l}$ and $y_{l}=1$. Here, (\ref{constraint-4}) is also satisfied.
 Finally, the lemma holds.
\end{proof}

\begin{proof}[Proof of Observation~\ref{transformation-3}]
We first prove that (\ref{equation-3}) defines a mapping. For an arbitrary $( \{ x_{l} \}_{l\in\mathcal{A}},$  $\{ y_{l} \}_{l\in\mathcal{B}} )$ that satisfies (\ref{constraint-4-2}) and~(\ref{constraint-2-2}), we will show that the corresponding $\left( \{ x_{l} \}_{l=1}^{L}, \{ y_{l} \}_{l=1}^{L} \right)$ (defined by (\ref{equation-3})) satisfies \eqref{constraint-4} and \eqref{constraint-2-1}. For all $l\in\{1, 2, \cdots, L\} - \mathcal{A}$ where $\overline{m}_{l}=0$, we have that $x_{l}=0$ satisfies constraint (\ref{constraint-4}); for all $l\in\{1, 2, \cdots, L\} - \mathcal{B}$ where $\lambda_{l}^{\prime}=0$, we have that $y_{l}=0$ also satisfies (\ref{constraint-4}). For all $x_{l}$ where $l\in\mathcal{A}$, and $y_{l}$ where $l\in\mathcal{B}$ and $\lambda_{l}^{\prime}>0$, they satisfy constraint (\ref{constraint-4-2}), also satisfying (\ref{constraint-4}). Hence, (\ref{equation-3}) defines a mapping.

Secondly, we prove that (\ref{equation-3}) defines a surjection. For an arbitrary $\left( \{ x_{l} \}_{l=1}^{L}, \{ y_{l} \}_{l=1}^{L} \right)$ that satisfies \eqref{constraint-4} and \eqref{constraint-2-1}, we will show that there exists a corresponding $( \{ x_{l} \}_{l\in\mathcal{A}},\,$  $\{ y_{l} \}_{l\in\mathcal{B}} )$ that satisfies (\ref{constraint-4-2}) and~(\ref{constraint-2-2}). For all $x_{l}$ where $l\in\mathcal{A}$, and $y_{l}$ where $l\in\mathcal{B}$, they satisfy constraint (\ref{constraint-4}), also satisfying (\ref{constraint-4-2}). In the case that $l\in \{1, 2, \cdots, L\} - \mathcal{A}$ where $\overline{m}_{l}=0$, we derive $x_{l}=0$ due to the constraint (\ref{constraint-4}) alone; in the case that $l\in\{1, 2, \cdots, L\} - \mathcal{B}$ where $\lambda_{l}^{\prime}=0$, we also derive $y_{l}=0$ due to (\ref{constraint-4}) alone. As a result, if $\left( \{ x_{l} \}_{l=1}^{L}, \{ y_{l} \}_{l=1}^{L} \right)$ satisfies (\ref{constraint-2-1}), $( \{ x_{l} \}_{l\in\mathcal{A}},$  $\{ y_{l} \}_{l\in\mathcal{B}} )$ also satisfies (\ref{constraint-2-2}). Finally, the lemma holds.
\end{proof}

\begin{proof}[Proof of Observation~\ref{transformation-2}]
We first prove that (\ref{relationship-3}) defines a mapping, i.e., given an arbitrary $(x_{l,1}, \cdots, x_{l, \overline{m}_{l}}, y_{l})$ that satisfies \eqref{constraint-bit}, \eqref{constraint-3-1} and \eqref{constraint-3-2}, the corresponding $x_{l}$ and $y_{l}$ satisfy \eqref{constraint-4-2} and \eqref{constraint-2-2}. Due to \eqref{constraint-bit}, we have $0\leq x_{l} = \sum_{j=1}^{\overline{m}_{l}}{x_{l,j}}\leq \overline{m}_{l}$ where $\overline{m}_{l}>0$ for all $l\in \mathcal{A}$; together with (\ref{constraint-3-2}), the constraint (\ref{constraint-4-2}) holds. Due to (\ref{constraint-3-1}), \eqref{constraint-2-2} also holds. Further, we prove that (\ref{relationship-3}) defines a surjection. For an arbitrary $(x_{l},\, y_{l})$ that satisfies \eqref{constraint-4-2} and \eqref{constraint-2-2} where $l\in\mathcal{A}$, it holds naturally that there exist feasible $x_{l,1}, \cdots, x_{l,\overline{m}_{l}}\in \{0, 1\}$ such that $x_{l} = \sum_{j=1}^{\overline{m}_{l}}{x_{l,j}}$, where $x_{l,1}, \cdots, x_{l,\overline{m}_{l}}$ and $y_{l}$ satisfy \eqref{constraint-bit}, (\ref{constraint-3-2}), and (\ref{constraint-3-1}).
\end{proof}

\begin{proof}[Proof of Theorem~\ref{theorem-capacity-planning-1}]
We first show that Algorithm~\ref{algo-prob33-solution} produces a feasible solution to Problem~\ref{prob33}. Algorithm~\ref{algo-prob33-solution} obeys the constraints \eqref{constraint-bit} and \eqref{constraint-3-1}, without accounting for \eqref{constraint-3-2}. However, we will show that \eqref{constraint-3-2} is still satisfied since $\zeta_{l,j}^{(1)}> \zeta_{l}^{(2)}$: this is due to that, if $\sum_{j=1}^{\overline{m}_{l}}{x_{l,j}}<\overline{m}_{l}$, only a part of variables $x_{l,1}, \cdots, x_{l,\overline{m}_{l}}$ are set to 1 and all the other variables are set to 0; then, the variable $y_{l}$ is not chosen and is set to 0.

Next, we prove the optimality by contradiction. Consider an optimal solution in which $\mathcal{D}^{\prime}$ denotes the variables whose values are set to 1 and the values of the other variables are set to 0. Here, $\mathcal{D}^{\prime}$ contains a variable whose coefficient is not among the largest $m$ coefficients. However, in this case, the objective function could achieve a higher value by set a variable with a larger coefficient to one and setting the value of this variable to 0, which contradicts that $\mathcal{D}^{\prime}$ defines an optimal solution.
%\fdp{The value yielded by the proposed solution in ~\eqref{eq:eqobjfun} is an upper bound of the optimal value because it is obtained with without accounting for constraint \eqref{constraint-3-2}, while the tagged optimal solution is feasible.} Now, we only need to show that the proposed solution is a feasible solution to Problem~\ref{prob33} without violating constraint \eqref{constraint-3-2}. Since $(\theta_{l}\cdot \lambda_{max}^{(l)}) > (\theta_{l}\cdot \lambda_{l}^{\prime})$, we have that if $\sum_{j=1}^{\overline{m}_{l}}{x_{l,j}}<\overline{m}_{l}$, then it follows immediately $y_{l}=0$; thus, constraint \eqref{constraint-3-2} is not violated.
\end{proof}

\begin{proof}[Proof of Theorem~\ref{theorem-capacity-planning-1}]
Algorithm~\ref{algo-prob-solution} first calls Algorithm~\ref{algo-prob33-solution} ({\em line 1}), which gives an optimal solution to Problem~\ref{prob33}, by Theorem~\ref{algo-prob33-solution}. Built on this, the lines 2-3 give an optimal solution to Problem~\ref{prob44} by Lemma~\ref{lemma-relationship-3}. By Lemma~\ref{lemma-relationship-4}, the lines 4-5 give an optimal solution to Problem~\ref{prob22}. Finally, by Lemma~\ref{lemma-relationship-2}, the lines 6-7 give an optimal solution to Problem~\ref{prob1}.
%See the comments in Algorithm~\ref{algo-prob-solution}.
\end{proof}

\ifCLASSOPTIONcaptionsoff
  \newpage
\fi

\end{document}